\renewcommand{\i}{\mathrm{i}} % imaginary
\renewcommand{\vec}[1]{\boldsymbol{#1}}
\renewcommand{\eqref}[1]{Eq.(\ref{#1})}
\newcommand{\commentout}[1]{}
\begin{document}

\newcommand{\nwc}{\newcommand}
\nwc{\red}{\color{red}}
\newcommand{\bz}{{\mathbf z}}
\newcommand{\sqk}{\sqrt{\ks}}
\newcommand{\sqkone}{\sqrt{|\ks_1|}}
\newcommand{\sqktwo}{\sqrt{|\ks_2|}}
\newcommand{\invsqkone}{|\ks_1|^{-1/2}}
\newcommand{\invsqktwo}{|\ks_2|^{-1/2}}
\newcommand{\partz}{\frac{\partial}{\partial z}}
\newcommand{\grady}{\nabla_{\by}}
\newcommand{\gradp}{\nabla_{\bp}}
\newcommand{\gradx}{\nabla_{\bx}}
\newcommand{\invf}{\cF^{-1}_2}
\newcommand{\myphi}{\Phi_{(\eta,\rho)}}
\newcommand{\minrg}{|\min{(\rho,\gamma^{-1})}|}
\newcommand{\al}{\alpha}
\newcommand{\xvec}{\vec{\mathbf x}}
\newcommand{\kvec}{{\vec{\mathbf k}}}
\newcommand{\lt}{\left}
\newcommand{\ksq}{\sqrt{\ks}}
\newcommand{\rt}{\right}
\nwc{\bG}{{\bf G}}
\newcommand{\ga}{\gamma}
\newcommand{\vas}{\varepsilon}
\newcommand{\lan}{\left\langle}
\newcommand{\ran}{\right\rangle}
\newcommand{\tvas}{{W_z^\vas}}
\newcommand{\psiep}{{W_z^\vas}}
\newcommand{\wep}{{W^\vas}}
\newcommand{\weptil}{{\tilde{W}^\vas}}
\newcommand{\wepz}{{W_z^\vas}}
\newcommand{\weps}{{W_s^\ep}}
\newcommand{\wepsp}{{W_s^{\ep'}}}
\newcommand{\wepzp}{{W_z^{\vas'}}}
\newcommand{\wepztil}{{\tilde{W}_z^\vas}}
\newcommand{\vvas}{{\tilde{\ml L}_z^\vas}}
\newcommand{\veptil}{{\tilde{\ml L}_z^\vas}}
\newcommand{\vep}{{{ V}_z^\vas}}
\newcommand{\cvc}{{{\ml L}^{\ep*}_z}}
\newcommand{\cvcp}{{{\ml L}^{\ep*'}_z}}
\newcommand{\cvp}{{{\ml L}^{\ep*'}_z}}
\newcommand{\cvtil}{{\tilde{\ml L}^{\ep*}_z}}
\newcommand{\cvtilp}{{\tilde{\ml L}^{\ep*'}_z}}
\newcommand{\vtil}{{\tilde{V}^\ep_z}}
\newcommand{\ktil}{\tilde{K}}
\newcommand{\n}{\nabla}
\newcommand{\tkappa}{\tilde\kappa}
\newcommand{\ks}{{\omega}}
\newcommand{\bx}{\mb x}
\newcommand{\br}{\mb r}
\nwc{\bH}{{\mb H}}
\newcommand{\bu}{\mathbf u}
\nwc{\bxp}{{{\mathbf x}}}
\nwc{\byp}{{{\mathbf y}}}
\newcommand{\bD}{\mathbf D}
\nwc{\bh}{\mathbf h}
\nwc{\bn}{\mathbf n}
\newcommand{\bB}{\mathbf B}
\newcommand{\bC}{\mathbf C}
\nwc{\cO}{\mathcal  O}
\newcommand{\bp}{\mathbf p}
\newcommand{\bq}{\mathbf q}
\newcommand{\by}{\mathbf y}
\nwc{\bP}{\mathbf P}
\nwc{\bs}{\mathbf s}
\nwc{\bX}{\mathbf X}
\newcommand{\pdg}{\bp\cdot\nabla}
\newcommand{\pdgx}{\bp\cdot\nabla_\bx}
\newcommand{\one}{1\hspace{-4.4pt}1}
\newcommand{\corr}{r_{\eta,\rho}}
\newcommand{\rinf}{r_{\eta,\infty}}
\newcommand{\rzero}{r_{0,\rho}}
\newcommand{\rzeroinf}{r_{0,\infty}}
\nwc{\om}{\omega}
% theorem-like enviroments:
\nwc{\Gp}{{G_{\rm par}}}
\nwc{\nwt}{\newtheorem}
\nwc{\xp}{{x^{\perp}}}
\nwc{\yp}{{y^{\perp}}}
\nwt{remark}{Remark}
\nwt{definition}{Definition} %def is already defined
\nwc{\bd}{{\mb d}}
\nwc{\ba}{{\mb a}}
\nwc{\bal}{\begin{align}}
\nwc{\be}{\begin{equation}}
\nwc{\ben}{\begin{equation*}}
\nwc{\bea}{\begin{eqnarray}}
\nwc{\beq}{\begin{eqnarray}}
\nwc{\bean}{\begin{eqnarray*}}
\nwc{\beqn}{\begin{eqnarray*}}
\nwc{\beqast}{\begin{eqnarray*}}

%\nwc{\ea}{\end{array}}
\nwc{\eal}{\end{align}}
\nwc{\ee}{\end{equation}}
\nwc{\een}{\end{equation*}}
\nwc{\eea}{\end{eqnarray}}
\nwc{\eeq}{\end{eqnarray}}
\nwc{\eean}{\end{eqnarray*}}
\nwc{\eeqn}{\end{eqnarray*}}
\nwc{\eeqast}{\end{eqnarray*}}

\nwc{\ep}{\varepsilon}
\nwc{\eps}{\varepsilon}
\nwc{\ept}{\epsilon}
\nwc{\vrho}{\varrho}
\nwc{\orho}{\bar\varrho}
\nwc{\ou}{\bar u}
\nwc{\vpsi}{\varpsi}
\nwc{\lamb}{\lambda}
\nwc{\Var}{{\rm Var}}

\nwt{cor}{Corollary}
\nwt{proposition}{Proposition}
\nwt{corollary}{Corollary}
\nwt{theorem}{Theorem}
\nwt{summary}{Summary}
\nwt{lemma}{Lemma}
%\nwt{definition}{Definition}
\nwc{\nn}{\nonumber}
%\nwc{\bm}{\boldmath}
\nwc{\mf}{\mathbf}
\nwc{\mb}{\mathbf}
\nwc{\ml}{\mathcal}
\nwc{\bj}{{\mb j}}
\nwc{\bA}{{\mb \Phi}}
\nwc{\IA}{\mathbb{A}} %algebraic
\nwc{\bi}{\mathbf i}
\nwc{\bo}{\mathbf o}
\nwc{\IS}{\mathbb{S}}
\nwc{\IC}{\mathbb{C}} %complex
\nwc{\ID}{\mathbb{D}} %Dedekind
\nwc{\IM}{\mathbb{M}} %Dedekind
\nwc{\IP}{\mathbb{P}} %Dedekind
\nwc{\bI}{\mathbf{I}} %Dedekind
\nwc{\IE}{\mathbb{E}} %Euklides
\nwc{\IF}{\mathbb{F}} %finite field
\nwc{\IG}{\mathbb{G}} %Gauss
\nwc{\IN}{\mathbb{N}} %natural
\nwc{\IQ}{\mathbb{Q}} %rational
\nwc{\IR}{\mathbb{R}} %real
\nwc{\IT}{\mathbb{T}} %torus
\nwc{\IZ}{\mathbb{Z}} %integers
\nwc{\IV}{\mathbb{V}}
\nwc{\IX}{\mathbb{X}}
\nwc{\IY}{\mathbb{Y}}

\nwc{\cE}{{\ml E}}
\nwc{\cP}{{\ml P}}
\nwc{\cQ}{{\ml Q}}
\nwc{\cL}{{\ml L}}
\nwc{\cX}{{\ml X}}
\nwc{\cW}{{\ml W}}
\nwc{\cZ}{{\ml Z}}
\nwc{\cR}{{\ml R}}
\nwc{\cV}{{\ml V}}
\nwc{\cT}{{\ml T}}
\nwc{\crV}{{\ml L}_{(\delta,\rho)}}
\nwc{\cC}{{\ml C}}
\nwc{\cA}{{\ml A}}
\nwc{\cK}{{\ml K}}
\nwc{\cB}{{\ml B}}
\nwc{\cD}{{\ml D}}
\nwc{\cF}{{\ml F}}
\nwc{\cS}{{\ml S}}
\nwc{\cM}{{\ml M}}
\nwc{\cG}{{\ml G}}
\nwc{\cH}{{\ml H}}
\nwc{\bk}{{\mb k}}
\nwc{\bT}{{\mb T}}
\nwc{\bM}{{\mb M}}
\nwc{\cbz}{\overline{\cB}_z}
\nwc{\supp}{{\hbox{\rm supp}}}
\nwc{\fR}{\mathfrak{R}}
\nwc{\bY}{\mathbf Y}
\newcommand{\mbr}{\mb r}
\nwc{\pft}{\cF^{-1}_2}
\nwc{\bU}{{\mb U}}
\nwc{\bPhi}{{\mb \Phi}}
\nwc{\bPsi}{{\mb \Psi}}
\nwc{\im}{{\rm i}}
\nwc{\bN}{{\mathbf N}}
\nwc{\bw}{{\mathbf w}}
\nwc{\mbm}{{\mathbf m}}
\nwc{\lbr}{\textlbrackdbl}
\nwc{\rbr}{\textrbrackdbl}
\nwc{\vzero}{{\mathbf 0}}
\nwc{\cN}{{\mathcal N}}
\nwc{\rbra}{\textrbrackdbl}
\nwc{\lbra}{\textlbrackdbl}
\nwc{\conv}{\hbox{conv}}
\nwc{\rank}{\hbox{rank}}
\title{Absolute Uniqueness of  Phase Retrieval
with Random Illumination}

\author{Albert  Fannjiang}
\address{Department of Mathematics, UC Davis, CA 95616-8633.
fannjiang@math.ucdavis.edu}
  \thanks{ The research is partially supported by
the NSF grant DMS - 0908535.}
\maketitle
\begin{abstract}
Random  illumination is proposed to  enforce  absolute uniqueness and resolve all
types of ambiguity, trivial or nontrivial, in  phase retrieval. 
Almost sure irreducibility is proved
for {\em any} complex-valued object whose support set 
has rank $\geq 2$.   While  the new  irreducibility result
can be viewed as a probabilistic version  of the classical
result by Bruck, Sodin and Hayes, it provides a novel perspective and an effective method for phase retrieval. 
% {\em any} object of sufficiently high sparsity.  
%not just objects outside of a measure-zero set as in the classical result.
 In particular, almost sure uniqueness, up to
a global phase,   is  proved for complex-valued objects under  general two-point conditions.
Under a tight sector constraint  absolute uniqueness is proved to hold with probability exponentially close to unity as  the object sparsity increases.  
Under a magnitude constraint with random amplitude illumination, uniqueness modulo global phase is proved to hold with probability exponentially close to unity as object sparsity increases. For general complex-valued objects without any constraint, almost sure uniqueness up to global phase is
established  with two sets of  Fourier magnitude data 
under  two  independent illuminations. Numerical experiments
suggest  that  random illumination essentially alleviates most, if not all,  numerical problems commonly associated with the standard phasing algorithms.

\end{abstract}

\maketitle

\section{Introduction}
Phase retrieval is a fundamental problem in many areas of physical sciences
such as X-ray crystallography, astronomy, electron microscopy, coherent light microscopy, quantum state tomography and remote sensing. Because of  loss of
the phase information
a central question of phase retrieval is the uniqueness of solution
which is the focus of the present work. 

Researchers in phase retrieval, however, 
have long settled with 
%accepted  as inevitable 
the notion of {\em relative uniqueness}  (i.e. irreducibility) for generic (i.e. random) objects,  without
a practical means for deciding the reducibility of a given
(i.e. deterministic)  object,  and 
searched for  various ad hoc strategies to circumvent  
problems with stagnation and error in reconstruction. 
The common problem of stagnation may be  due to  the possibility of the iterative process to approach the object and  its twin or shifted image, the support not tight enough or the boundary not sharp enough  \cite{Fie2,FW,He}. 
Besides  the uniqueness issue, phase retrieval is also inherently  nonconvex and 
many researchers have believed   the lack of convexity
in the Fourier magnitude constraint to be  a main, if not the dominant,  source of numerical 
problems with the standard phasing algorithms \cite{BCL, Mar2, Sta}. While there have been dazzling  advances   in 
applications of phase retrieval in the past decades 
\cite{MIS}, we still do not know
just how much of the error and stagnation problems  is
attributable to 
to the lack of  uniqueness or convexity. 
 
We propose here   to refocus on the issue
of uniqueness as uniqueness is undoubtedly the first  foundational  issue 
of any inverse problem, including phase retrieval.
 Specifically we will first establish 
uniqueness in the absolute sense with {\em random}  illumination  under 
general,  physically reasonable object constraints (Figure \ref{fig0})  and secondly 
demonstrate that   random illumination practically  alleviates most numerical problems and 
drastically improves the quality of reconstruction. 

\begin{figure}[t]
\centering 
  \subfigure[]{
\includegraphics[width=4cm,height=4cm]{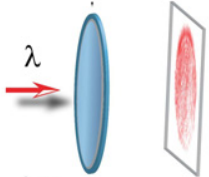}} \hspace{3cm}
  \subfigure[]{
\includegraphics[width=4.3cm,height=4.3cm]{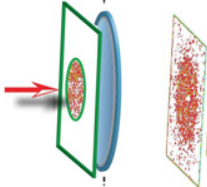}
}
\label{fig0}
\caption{Illumination of a partially transparent object (the blue oval) with a  deterministic (a) or random field $\lambda$  created by a diffuser  (b) followed by an intensity measurement of the diffraction pattern. In the case of wave front reconstruction,
the random modulator is placed at the exit pupil
instead of the entrance pupil as in (b). %(adapted from \cite{Alm}).
}
\end{figure}

To fix the idea, consider the discrete version of the  phase retrieval problem:
Let $\bn=(n_1,\cdots, n_d)\in \IZ^d$ and $\bz=(z_1,\cdots, z_d)\in \IC^d$. Define the multi-index notation $\bz^\bn=z_1^{n_1} z_2^{n_2} \cdots z_d^{n_d}$. Let $f(\bn)$ be a finite complex-valued function defined on $\IZ^d$  vanishing 
outside the finite lattice 
\[
\cN=\Big\{\vzero\leq\bn\leq \bN\Big\}
\]
for  $\bN=(N_1,\cdots, N_d)\in \IN^d$. We use the notation $\mbm\leq \bn$ for
$ m_j\leq n_j, \forall j$. The $z$-transform of a finite sequence
$f(\bn)$ is given by
\[
F(\bz)=\sum_\bn f(\bn) \bz^{-\bn}.
\]
The Fourier transform can be obtained from the $z$-transform as
\[
F(\bw)=F(e^{\im 2\pi w_1},\cdots,e^{\im 2\pi w_d})=\sum_{\bn} f(\bn) e^{-\im 2\pi \bn\cdot\bw},\quad \bw=(w_1,\cdots,w_d)\in [0,1]^d
\]
by some abuse of notation. 
The discrete phase retrieval problem is to determine 
$f(\bn)$ from the knowledge of the Fourier magnitude $|F(\bw)|, \forall \bw\in [0,1]^d$.

The question of uniqueness was partially answered  in \cite{Bat, BS, Hay, HM} which says that in dimension two or higher and with
the exception of  a measure zero
set of finite sequences   phase retrieval has a unique solution up to the equivalence class of ``trivial associates" (i.e. relative uniqueness).
These trivial, but omnipresent,  ambiguities include
 constant global phase,
 \[
 f(\cdot)\longrightarrow e^{\im\theta}f(\cdot),\quad \hbox{for some}\,\,\theta\in [0,2\pi],
 \]
 spatial shift
  \[
   f(\cdot)\longrightarrow f(\cdot +\mbm),\quad\hbox{for some}\,\,\mbm\in \IZ^d,
    \]
        %where $\bn + \mbm=\bn+\mbm (mod(N_1+1, \cdots, N_d+1))$, 
    and  conjugate inversion 
    \[
    f(\cdot)\longrightarrow f^*(\bN-\cdot).
    \]
 Conjugate inversion produces the so-called twin image. 

    This landmark uniqueness result, however,  does not address the following issues. First, a {\em given} object array,
  there is no way of deciding  {\em a priori} the irreducibility of
  the corresponding $z$-transform and the relative uniqueness of the phasing problem. 
%Indeed, arrays 
    %of hidden symmetries belong to
   % this unknown set of ambiguous sequences which challenges  the validity of the widely held assumption 
 %that    relative uniqueness holds true  in most of the practical problems.  
\commentout{
Also the analysis in \cite{BS, Hay, HM}  did not take into account  the practically relevant measure-zero set  of objects
 whose sparsity (the number of nonzero elements)  is strictly less than $|\cN|=\prod_{j=1}^d(1+N_j)$.  In other words, the uniqueness is guaranteed 
 only for objects whose strict  support is exactly $\cN$ and only
 up to the equivalence class of  global phase, spatial shift and
conjugate  inversion. } Secondly, although visually no different from 
the true image the trivial associates (particularly spatial shift and
conjugate inversion) nevertheless
``confuse" the standard numerical iterative processes
and cause serious  stagnation \cite{Fie2, FW,  MSC, Sta}. 
    %To remedy this difficulty, favorable initial inputs in addition to support and positivity constraints  are often needed \cite{Fie2, FW,  MSC}.
   % As pointed out in \cite{JJB},
    %the need for support constraint is one of the
   %most stringent limitations of the standard methods.

In this paper,  we study the notation of {\em absolute uniqueness}: if two finite objects $f$ and $g$  give rise
to the same Fourier magnitude data, then $f=g$ unequivocally.  
More importantly, 
we present the approach of random (phase or amplitude) illumination to the absolute uniqueness of phase retrieval. The idea of random illumination is related to coded-aperture imaging whose utility in other imaging contexts than phase retrieval  has  been established  experimentally \cite{AH,Alm,BWW,FAK,Rot,SMG} as well as  mathematically \cite{rand-illum, Rom}.  
%We show that oversampling \cite{Bat, MS, MSC}  of Fourier magnitudes with {\em single} random illumination removes, with probability one, 
%all ambiguities, including the trivial ones, for {\em every} complex-valued object satisfying certain  two-point properties  (Theorem \ref{cor1}), thus eliminating the need for support constraint. We also prove that the absolute uniqueness holds for objects with nonnegative real and imaginary parts with probability exponentially  close to
%unity as the object  sparsity increases. 

Our basic tool is  an improved  version   (Theorem \ref{thm:new}) 
of the irreducibility result of \cite{Hay, HM} with, however, a  completely different perspective and important  practical  implications. %The advantage of our probabilistic approach lies in that
%the measure is endowed in the ensemble of
% random illuminations, thus avoiding  
%the ambiguity  with the measure zero set of exceptional objects in the classical setting. 
The  main  difference is
that while the classical result \cite{Hay,HM} works with generic (thus random)
objects from a certain ensemble Theorem \ref{thm:new}
can deal with a {\em given, deterministic} object whose
support has rank $\geq 2$. 
  This improvement is achieved by
endowing  
the probability measure on the ensemble of illuminations, which we can manipulate,  instead of  the space of  objects, which we can not control, as in the classical setting.

On the basis of almost sure irreducibility, the mere assumption that the phases or magnitudes  of the object at two arbitrary points
lie in a countable set enforces 
uniqueness, up to a global phase,  in phase retrieval with a single random  illumination (Theorem \ref{cor1}). The absolute uniqueness can be enforced then
by imposing the positivity constraint (Corollary \ref{cor}).
For objects satisfying a tight sector condition, 
absolute uniqueness is valid with high probability depending
on the object sparsity for either phase or amplitude 
illumination (Theorem \ref{thm4}). For complex-valued objects under a magnitude constraint, uniqueness up to a global phase is valid with high probability (Theorem \ref{thm6}). For general complex-valued objects, almost sure uniqueness, up to global phase, is proved 
for phasing with  two independent illuminations (Theorem \ref{thm5}).

%For complex-valued objects, we show that oversampling of Fourier magnitudes with {\em two} random (phase and/or amplitude)
%illumination removes, with probability one, all ambiguities except
%for a global phase (Theorem \ref{thm3}).

The paper is organized as follows. In Section \ref{sec2} we discuss
various sources of ambiguity. In Section \ref{sec3} we prove
the almost sure irreducibility (Theorem \ref{thm:new} and Appendix). In Section \ref{sec4}
we  derive the uniqueness results (Theorem \ref{cor1},  \ref{thm4}, \ref{thm6},  \ref{thm5} and Corollary \ref{cor}). 
%In Section \ref{sec4} we derive the uniqueness result for complex-valued objects with two random illuminations. 
We demonstrate phasing with random illumination in Section \ref{sec:num}.  
We conclude
in Section \ref{sec5}. 
\section{Sources of ambiguity}\label{sec2}

As commented before the phase retrieval problem 
does not have a unique solution.  Nevertheless, the possible solutions are constrained as stated in the following theorem  \cite{Hay, PG}. 

\begin{theorem}

\label{thm1}
Let the $z$-transform $F(\bz)$ of a finite complex-valued  sequence $\{f(\bn)\} $ %vanishing outside the region ${\mathbf 0}\leq \bn \leq \bN$
 be
given by
\beq
F(\bz)=\alpha \bz^{-\mbm} \prod_{k=1}^p F_k(\bz),\quad  \mbm\in \IN^d, \alpha\in \IC\label{21}
\eeq
where $F_k, k=1,...,p$ are 
nontrivial irreducible polynomials. Let $G(\bz)$ be
the $\bz$-transform of another finite sequence $g(\bn)$. 
%vanishing outside ${\mathbf 0}\leq \bn\leq \bN$.
Suppose 
$|F(\bw)|=|G(\bw)|,\forall \bw\in [0,1]^d$. Then $G(\bz)$ must have
the form
\beqn
G(\bz)=|\alpha| e^{\im \theta} \bz^{-\bp}
\lt(\prod_{k\in I} F_k(\bz)\rt)
\lt(\prod_{k\in I^c} F_k^*(1/\bz^*)\rt),\quad\bp\in \IN^d,\theta\in \IR
\eeqn
 where $I$ is a subset of $\{1,2,...,p\}$. 
\end{theorem}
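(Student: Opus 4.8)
The plan is to convert the equality of Fourier magnitudes into an identity of Laurent polynomials and then invoke unique factorization. First I would record the standard identity expressing the squared magnitude through the $z$-transform: writing $\tilde F(\bz):=F^*(1/\bz^*)$ for the Laurent polynomial obtained from $F$ by conjugating its coefficients and inverting the argument (so that $\tilde F(\bz)=\sum_\bn\overline{f(\bn)}\,\bz^{\bn}$), one has $\tilde F(\bz)=\overline{F(\bw)}$ whenever $\bz$ lies on the unit torus $z_j=e^{\im 2\pi w_j}$, hence $|F(\bw)|^2=F(\bz)\tilde F(\bz)$ there, and likewise $|G(\bw)|^2=G(\bz)\tilde G(\bz)$. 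The hypothesis $|F(\bw)|=|G(\bw)|$ forces $F\tilde F$ and $G\tilde G$ to agree on the whole torus. Since the difference $F\tilde F-G\tilde G$ is a Laurent polynomial and the monomials $\bz^\bn$ restrict on the torus to the linearly independent characters $e^{\im 2\pi\bn\cdot\bw}$, all its coefficients must vanish, yielding the polynomial identity
\[
F(\bz)\,\tilde F(\bz)=G(\bz)\,\tilde G(\bz).
\]

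Next I would factor both sides. The ring $\IC[z_1^{\pm1},\dots,z_d^{\pm1}]$ is a unique factorization domain (a localization of the polynomial ring), whose units are the nonzero monomials $c\,\bz^\bn$. From $F=\alpha\bz^{-\mbm}\prod_k F_k$ I get $\tilde F=\bar\alpha\,\bz^{\mbm}\prod_k\tilde F_k$ up to a monomial, and a short check shows the reflection $P\mapsto P^*(1/\bz^*)$ sends irreducibles to irreducibles (it is, up to units, a ring involution). Hence the left-hand side is, up to a unit, the product of the $2p$ irreducibles $F_1,\dots,F_p,\tilde F_1,\dots,\tilde F_p$. Writing $G$ and $\tilde G$ in their own irreducible factorizations, unique factorization tells me that the multiset of irreducible factors of $G\tilde G$ coincides, up to units, with $\{F_k\}\cup\{\tilde F_k\}$.

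It then remains to see how these factors distribute between $G$ and its reflection $\tilde G$. The key point is that $\tilde{\tilde P}=P$ up to a monomial, so $P$ divides $\tilde G$ exactly when $\tilde P$ divides $G$; consequently, for each $k$, the pair $(F_k,\tilde F_k)$ contributes to $G\tilde G$ in such a way that precisely one of $F_k$ or $\tilde F_k$ divides $G$. Setting $I=\{k:F_k\mid G\}$ and letting $I^c$ be its complement, I obtain
\[
G(\bz)=c\,\bz^{-\bp}\Big(\prod_{k\in I}F_k(\bz)\Big)\Big(\prod_{k\in I^c}F_k^*(1/\bz^*)\Big)
\]
for some unit $c\,\bz^{-\bp}$. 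Because $G$ is the $z$-transform of a finite, nonnegatively supported sequence, the monomial factor must have the form $\bz^{-\bp}$ with $\bp\in\IN^d$; and matching the overall constant in $F\tilde F=G\tilde G$ forces $|c|=|\alpha|$, so $c=|\alpha|e^{\im\theta}$, which is the asserted form.

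Finally, a remark on where the work concentrates. The bookkeeping of conjugation/inversion signs, the fact that reflection preserves irreducibility, and that it commutes with factorization are routine; the two genuinely load-bearing steps are (i) passing from agreement on the torus to the polynomial identity $F\tilde F=G\tilde G$, which rests on the torus being a uniqueness set for Laurent polynomials, and (ii) the use of unique factorization in several variables, which is exactly what legitimizes the pairing argument and the parametrization of all solutions by the subsets $I$. I expect step (ii), together with correctly tracking that each reflected pair contributes exactly one factor to $G$, to be the main point to get right.
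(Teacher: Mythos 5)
Your proposal is correct and takes essentially the same route as the paper's own (very condensed) proof: the paper equates the autocorrelations of $f$ and $g$, which is exactly your polynomial identity $F(\bz)F^*(1/\bz^*)=G(\bz)G^*(1/\bz^*)$, and then invokes unique factorization of polynomials, citing \cite{PG} for the details. Your write-up simply makes explicit the steps the paper leaves implicit — the Laurent ring being a UFD with monomial units, the reflection $P\mapsto P^*(1/\bz^*)$ preserving irreducibility, and the pairing argument distributing each $\{F_k,F_k^*(1/\bz^*)\}$ between $G$ and its reflection.
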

To start,  it is convenient to write
 \beq
  |F(\bw)|^2&=& \sum_{\bn =-\bN}^{\bN}\sum_{\mbm\in \cN} f(\mbm+\bn)f^*(\mbm)
   e^{-\im 2\pi \bn\cdot \bw}\nn\\
    &=& \sum_{\bn =-\bN}^{\bN}\cC_f(\bn)
     e^{-\im 2\pi \bn\cdot \bw}\label{Fou}
      \eeq
       where
        \beq
	 \label{aut}
	  \cC_f(\bn)=\sum_{\mbm\in \cN} f(\mbm+\bn)f^*(\mbm)
	  \eeq
	  is the autocorrelation function of $f$. Note the
	 symmetry $\cC^*_f(\bn)=\cC_f(-\bn)$. 
	  
The theorem then follows straightforwardly from
the equality between the autocorrelation functions of $f$ and $g$,  because  $F(\bw)F^*(\bw)=G(\bw)G^*(\bw)$,  
and the unique factorization of polynomials (see \cite{PG} for more details). 

\begin{remark}\label{rmk1}
If the finite array $f(\bn)$ is known {\em a priori} to vanish
outside the lattice $\cN$, % ${\mathbf 0}\leq \bn \leq\bN$, 
then
by Shannon's sampling theorem for band-limited functions
the sampling domain for $\bw$ can be limited to
the finite regular grid 
\beq
\label{sample}
\cM=\Big\{(k_1, \cdots, k_d): \forall j=1,\cdots, d\,\, \&\,\,  k_j=0, {1\over 2N_j+1}, {2\over 2N_j+1}, \cdots,  {2N_j\over 2N_j+1}. \Big\}
\eeq
since $|F(\bw)|^2$ is  band-limited to the set $-\bN\leq \bn \leq\bN$. %When the finite array $f$ has fewer nonzero elements
%than $\prod_{j} (N_j+1)$, 
\end{remark}

\commentout{
\begin{remark}

When the autocorrelation function $C_f$ is sparse and has, say,
$K$ non-zero elements, then, with a high probability,   the whole function $|F(\bw)|^2, \bw\in \cM$ can be recovered by compressed sensing techniques \cite{CRT1,  Rau} from 
a sampling subset consisting of $\cO(K\sum_{j=1}^d\log N_j)$ independent, uniformly distributed
points  in $\cM$ or $[0,1]^d$.  % (see Section \ref{sec:cs}). 
%\[
%\Big\{(n_1, \cdots, n_d): \forall j=1,\cdots,d, \,\, n_j=0,1,\cdots, N_j\Big\}
%\]

If the array $f$ has $S\leq |\cN|$ nonzero components (i.e. the sparsity equals $S$), then
it is easy to see that $C_f$ has at most $S(S-1)/2+1$ nonzero
components. Hence   the whole function $|F(\bw)|^2, \bw\in \cM$ 
can be recovered from $\cO(S^2\sum_{j=1}^d\log N_j)$ samples  with high probability. 
\end{remark}
}

There are three sources of ambiguity. First, 
the  linear phase term $\bz^{-\mbm}$ in (\ref{21})  remain undetermined  because
the autocorrelation operation destroys information about
spatial shift. The unspecified constant phase $\theta$ is another
source of ambiguity. 
%In this paper, a solution is referred to as unique

\commentout{
To understand the nonuniqueness of phase retrieval, let us first certain symmetries of the solutions. 

Following \cite{PG},  a finite array $f(\bn)$ is said to be {\em conjugate symmetric}  if
%, for some vector $\bk$ of positive integers,
\beq
F(\bz)=\pm \bz^{-\bN} F^*(1/\bz^*).
\eeq
%A conjugate symmetric array has a Fourier transform with linear phase factor.
}

To understand the physical meaning of  the operation 
\[
F(\bz)\longrightarrow \bz^{-\bN} F^*(1/\bz^*)
\]
consider the case $d=1$
\[
z^{-N} F^*(1/z^*) =f^*(0) z^{-N} +f^*(1) z^{1-N}+\cdots + f^*(N)
\]
which is the $z$-transform of  the conjugate space-inversed array $\{f^*(N),f^*(N-1),\cdots, f^*(0)\}$. The same is true
in multi-dimensions. 

The subtlest  form of ambiguity is caused by {\em partial}  conjugate inversion on some, but not all, factors of a  factorable  object,  with a reducible $z$-transform,    without which the conjugate inversion, like  spatial shift and global  phase,  is 
global in nature and  considered  ``trivial" in the literature (even though the twin image may have an opposite orientation).
%In numerical reconstruction, the trivial ambiguities have to be eliminated by
%assuming favorable {\em a priori} knowledge such as support constraints and positivity.  

In this paper, we consider both types, trivial and nontrivial,
of ambiguity, as they both can degrade the performance of 
phasing   schemes. 
Our  main  purpose  is to show  by rigorous analysis that with  random
illumination it is possible to eliminate all 
ambiguities at once. 
%with the exception of global phase factor,   in phase retrieval. 
%To our knowledge all previous
%results, with the exception of \cite{CES},  on phase retrieval 
%concern only the reducibility/irreducibility issue. We shall refer to
%this as uniqueness up to equivalence class. In the present
%work we prove the absolute sense of uniqueness 
%by using two independent random illuminations to eliminate
% all three ambiguities. 

\section{Irreducibility}\label{sec3}  
Random illumination amounts to replacing the
original object $f(\bn)$ by 
\beq
\label{1}
\tilde f(\bn) =f(\bn) \lamb(\bn)
\eeq
where $\lamb(\bn)$, representing the incident field, is a {\em known} array of samples of random variables (r.v.s).
%typically assumed to be independent and identically distributed (i.i.d.).
The idea is to first modify the object by the encoding array $\lamb(\bn)$ so that phase retrieval has unique solution and then
use the prior knowledge of $\lamb$ to recover $f$. 

Nearly independent random illumination can be produced
by a diffuser placed near the  object, cf. Figure \ref{fig0}.
 The illumination field can 
be randomly modulated in {\em phase} only  with the use  computer generated holograms \cite{BWW}, random phase plates \cite{AH,Rot} and
liquid crystal phase-only panels \cite{FAK}. 
One of the best known amplitude masks is
uniformly redundant array \cite{URA} and its variants \cite{MURA}.
The advantage of phase mask, compared to amplitude mask, is the lossless energy transmission of an incident wavefront through the mask.  By placing
either phase or amplitude mask at a distance
from the object, one can create an illumination field
modulated in both amplitude and phase in a way dependent on the distance \cite{SMG}.

Let $\lamb(\bn)$ be  {\em continuous}  r.v.s  with respect to the Lebesgue measure on $\IS^1$ (the unit circle), $\IR$ or $\IC$.  The  case of 
$\IS^1$ can be facilitated by a random phase modulator with
\beq
\label{rph}
\lambda(\bn)=e^{\im \phi(\bn)}
\eeq
where $\phi(\bn)$ are  continuous r.v.s on $[0,2\pi]$ while the case of $\IR$ can be facilitated by a random amplitude
modulator. The case of $\IC$ involves simultaneously 
both phase and amplitude modulations.  More generally,
$\lambda(\bn)$ can be any continuous r.v.  on a real algebraic  variety $\cV(\bn)\subset \IC \simeq\IR^2$. For example  $\IR$ and $\IS^1$ can be viewed as  real projective  
varieties defined by the polynomial equations $y=0$
and $x^2+y^2-1=0$, respectively, on the complex plane identified as $\IR^2$.

The support  $\Sigma$ of a polynomial
$F(\bz)$  is the set of exponent vectors in $\IN^d$ with
nonzero coefficients. The rank of the support set
 is the dimension of its convex hull.

%For simplicity of
%notation, we shall consider the case of $d=2$ for the following result.
\begin{theorem}
Let $ \{f(\bn)\}$ be a finite complex-valued array whose support has rank $\geq 2$ and touches 
all the coordinate hyperplanes $\{n_j=0:j=1,\cdots,d\}$.
%both horizontal and vertical axes. 
Let $\{\lambda(\bn)\}$ be continuous r.v.s on  nonzero real algebraic  varieties  $\{\cV(\bn)\}$ in $\IC (\simeq \IR^2)$ 
with an {\em absolutely continuous}  joint distribution with respect to the standard product measure on $\prod_{\bn\in \Sigma}\cV(\bn)$ where
$\Sigma\subset \IN^d$ is the support set of $ \{f(\bn)\}$.  
%Let $\lamb(\bn)$ be i.i.d. {\em (absolutely) continuous}  random variables with respect to the Lebesgue measure on $\IS^1$, $\IR$ or $\IC$. 
Then 
the $z$-transform of $\tilde f(\bn) =f(\bn) \lamb(\bn)$ is irreducible with probability one. 
\label{thm:new}  
\end{theorem}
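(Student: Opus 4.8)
The plan is to pass to honest polynomials and then split the argument into a classical ``generic irreducibility'' statement over the full coefficient space, followed by a density argument that transfers it to the product of illumination varieties. Substituting $u_j=z_j^{-1}$ turns the $z$-transform of $\tilde f$ into the polynomial $\tilde F(\bu)=\sum_{\bn\in\Sigma} f(\bn)\lambda(\bn)\,\bu^{\bn}$ with support contained in $\Sigma$. Since each $\lambda(\bn)$ is absolutely continuous, $\lambda(\bn)\neq 0$ almost surely, so $\tilde F$ has support exactly $\Sigma$; as $\Sigma$ meets every hyperplane $\{n_j=0\}$, no variable divides $\tilde F$ and the only possible factorizations are into factors of positive degree. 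Thus it suffices to show that the event ``$\tilde F$ factors nontrivially'' has probability zero.

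First I would record the algebraic fact, due in essence to \cite{Hay,HM}: the locus
\[
W=\{c=(c_\bn)_{\bn\in\Sigma}\in\IC^{|\Sigma|}:\ \textstyle\sum_{\bn} c_\bn\bu^{\bn}\ \text{is reducible or is divisible by some } u_j\}
\]
is a proper Zariski-closed subset of $\IC^{|\Sigma|}$. The mechanism is Ostrowski's additivity of Newton polytopes under multiplication: any factorization $\sum c_\bn\bu^\bn=GH$ forces $\conv(\Sigma)=\mathrm{Newt}(G)+\mathrm{Newt}(H)$, so the reducible locus is the union, over the finitely many Minkowski decompositions of $\conv(\Sigma)$ into nontrivial lattice summands, of the images of the corresponding multiplication maps, and a dimension count bounds each image by dimension strictly below $|\Sigma|$. \textbf{This is the step I expect to be the main obstacle}, and the one relegated to the Appendix: the strict inequality in the dimension count is precisely what uses $\rank\Sigma\geq 2$ (for a collinear, rank-$1$ support $\tilde F$ is essentially univariate and factors by the fundamental theorem of algebra, so the conclusion genuinely fails), while the hypothesis that $\Sigma$ touches every coordinate hyperplane removes the monomial factors.

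It remains to transfer genericity from $\IC^{|\Sigma|}$ to $\prod_{\bn\in\Sigma}\cV(\bn)$. The key observation is that each $\cV(\bn)$, carrying a continuous r.v., is infinite, hence Zariski-dense in the affine line $\mathbb{A}^1_\IC=\IC$ (a proper complex subvariety of $\IC$ is finite). Eliminating one variable at a time then shows $\prod_{\bn}\cV(\bn)$ is Zariski-dense in $\IC^{|\Sigma|}$; since the coefficient vector is obtained from $\lambda$ by the diagonal automorphism $\Lambda:(\lambda(\bn))\mapsto(f(\bn)\lambda(\bn))$ with all $f(\bn)\neq 0$, the image $\prod_{\bn} f(\bn)\cV(\bn)$ is again Zariski-dense. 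A Zariski-dense set cannot lie in the proper closed set $W$, so there is at least one admissible illumination for which $\tilde F$ is irreducible.

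Finally I would upgrade ``some illumination works'' to ``almost every illumination works.'' The bad set is $\Lambda^{-1}(W)\cap\prod_\bn\cV(\bn)$, cut out inside the real-analytic manifold $\prod_\bn\cV(\bn)$ by the real and imaginary parts of the complex polynomials defining $W$, composed with the $\IR$-linear map $\Lambda$. By the density step these pulled-back equations are not identically zero on any connected component of $\prod_\bn\cV(\bn)$ (each component is a product of positive-dimensional, hence Zariski-dense, pieces, and so cannot sit inside $W$). The zero set of a real-analytic function that is not identically zero on a connected component is null there; summing over the finitely many components, the bad set is null for the standard product measure on $\prod_\bn\cV(\bn)$, and therefore has probability zero for the absolutely continuous joint law of $\lambda$. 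Hence $\tilde F$, and with it the $z$-transform of $\tilde f$, is irreducible with probability one.
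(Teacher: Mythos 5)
Your transfer machinery (steps 4--5) is sound and is, if anything, cleaner than the paper's: the paper embeds the illumination family $\prod_{\bn}f(\bn)\cV(\bn)$ as a real projective variety $\IV$ inside the projective space of polynomials with the homogenized support, and must then exhibit an irreducible polynomial \emph{inside} $\IV$; your Zariski-density argument gets the same conclusion from properness of the bad locus in the full coefficient space. The trouble is that your proof stands or falls on the step you yourself flag as the main obstacle, and the mechanism you propose for it does not close. A minor issue first: in the affine setting the image of a multiplication map $\mu_{A,B}:\IC^A\times\IC^B\to\IC^{A+B}$ is only constructible (Chevalley), not Zariski closed, so $W$ as you define it need not be closed; this is repairable by taking closures (or by homogenizing, which is exactly why the paper works projectively, where images of regular maps of projective varieties are closed). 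The major issue is cancellation. Ostrowski's theorem constrains Newton \emph{polytopes}, not supports: a factorization $F=GH$ with $\supp(F)\subseteq\Sigma$ forces only $\mathrm{Newt}(G)+\mathrm{Newt}(H)=\mathrm{Newt}(F)\subseteq\conv(\Sigma)$, while the supports $A=\supp(G)$, $B=\supp(H)$ may satisfy $A+B\not\subseteq\Sigma$, the extra monomials being killed by coefficient cancellation (as in $x^2-y^2=(x-y)(x+y)$). Your count $|A|+|B|-1<|\Sigma|$ is correct in the no-cancellation case $A+B\subseteq\Sigma$ (it follows from sumset inequalities that use $\rank\,\Sigma\geq 2$, e.g. Ruzsa's bound $|A+B|\geq |A|+2|B|-3$ when $B$ has rank $2$ --- though you do not prove this either). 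But for cancellation pairs the set to bound is $\overline{\mu_{A,B}(\IC^A\times\IC^B)}\cap\IC^\Sigma$, and no naive count controls it: for $\Sigma=\{(0,0),(N,0),(0,N)\}$ the admissible $A,B$ a priori range over lattice sets of size $\sim N^2$ in half-sized triangles, so $|A|+|B|-1\gg|\Sigma|$, and the cancellation equations cutting the image down to $\IC^\Sigma$ give only \emph{lower} bounds on fiber dimensions, not the upper bounds you need. Showing that this intersection is nonetheless small is essentially equivalent to the irreducibility statement being proved, so the argument is circular at its core.

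This is precisely the point where the paper does something different: it proves properness not by a dimension count but by exhibiting \emph{one} irreducible polynomial with the prescribed (homogenized) support. It selects three points of $\Sigma^\sharp$ --- this is where both the rank-$2$ hypothesis and the hypothesis that $\Sigma$ touches every coordinate hyperplane are actually consumed, through explicit integer transformation matrices --- and pulls back the irreducibility of the trinomial $ax^r+by^r+cz^r$ under a monomial (toric) change of variables; Zariski-closedness of the reducible locus is obtained separately, in the projective setting. If you want to salvage your architecture, replace your step 3 by an argument of this type (one explicit irreducible member of $\IC^\Sigma$, or an appeal to a Bertini--Krull-type theorem: the linear system spanned by the monomials $\bu^{\bn}$, $\bn\in\Sigma$, has no fixed component since the monomials have trivial gcd, and is not composed of a pencil since $\rank\,\Sigma\geq 2$, hence its generic member is irreducible). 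With that step repaired, your density and null-set arguments do complete the proof.
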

\begin{remark}
If the object support does not touch all the coordinate hyperplanes, then 
the the irreducibility holds true, {\em up to 
some monomial  of $\bz$.} In view of Theorem \ref{thm1}
this is sufficient for our purpose. 
\end{remark}
\begin{remark}

The theorem does not hold if the  rank-2 condition fails.
For example, let $p(\bz)$ be any monomial and consider 
\beq
\label{exp1}
F(\bz)=\sum_{j} c_j p^j(\bz)
%c_{0}+c_{1} \bz^{(n,nm)},\quad n=2,3\cdots,\quad m=1,2,3,\cdots
\eeq
which is reducible for any $c_j \in \IC$, except when $F$ is a monomial, by the fundamental theorem of algebra (of one variable).   Another example is the {\em homogeneous}
polynomials of a  sum degree $N$
\beq
\label{exp2}
F(\bz)=\sum_{i+j=N} c_{ij} z_1^iz_2^j
\eeq
which is factorable by, again,  the fundamental theorem of algebra.
\end{remark}

The proof of Theorem \ref{thm:new} is given in the Appendix.

Theorem \ref{thm:new} improves in several aspects on  the classical result   that
the set of the reducible polynomials has zero measure in the space of multivariate polynomials with real-valued
coefficients \cite{Hay, HM}. The  main  improvement is
that while the classical result works with generic (thus random)
objects  Theorem \ref{thm:new}
 deals  with {\em any deterministic} object with minimum (and necessary) conditions on its support set.
By definition,  deterministic objects belong to the
measure zero set excluded  in the classical setting of \cite{Hay,HM}.
 It is both theoretically and practically important that  Theorem \ref{thm:new} places 
the probability measure on the ensemble of illuminations, which we can manipulate,  instead of  the space of  objects, which we can not control.

\commentout{Moreover, this shift
of measurable space prevents the practically important 
set of sparse  objects from falling
through the ``cracks" of measure-zero sets in the formulation of \cite{Hay,HM}. Indeed, 
the sparsity constraint in the object domain can be an effective tool for improving 
the performance of reconstruction \cite{He}.

Even for
objects whose support is exactly the finite grid in the object domain,  it is impossible to know {\em a priori} if a given 
object lies in the exceptional  set of ambiguous objects.
}

\commentout{
\begin{remark}
\label{rmk2}
The assumption of independency of $\lamb(\bn)$ for different $\bn$ is used only tangentially in the above argument. What is
needed for the proof is the mere fact that the induced probability measure on $\IM_{\widetilde F}$ is absolutely continuous with 
respect to the Lebesgue measure. This  allows for the possibility of correlated and differently distributed illumination $\lamb(\bn)$.

\end{remark}
}

In the next section, we go further to show
that with additional, but for all practical purposes sufficiently general, constraints on the values of the object, we
can essentially remove all ambiguities with the only possible
exception of global phase factor. This decisive step distinguishes our method
from the standard approach. 

\section{Uniqueness}
\label{sec4}
Without additional {\em a priori} knowledge on the object  Theorem \ref{thm:new}, however,  does not preclude the trivial ambiguities  such
as global phase,  spatial shift and conjugate inversion. For example, we can produce
another finite array $\{g(\bn)\}$ 
%vanishing outside ${\mathbf 0}\leq\bn\leq\bN$ 
that yields the same measurement data by setting
\beq
\label{10}
g(\bn)&=& e^{\im \theta} f(\bn + \mbm )\lamb(\bn + \mbm)/\lamb(\bn)
\eeq
or
\beq
\label{11}
g(\bn)&=& e^{\im \theta}  f^*(\bN-\bn + \mbm)\lambda^* (\bN-\bn + \mbm)/\lamb(\bn)
\eeq
for $\theta\in [0,2\pi]$ and  $\mbm\in \IZ^d$.  
%where $\bn + \mbm=\bn+\mbm \,\,(\hbox{mod} (N_1+1,N_2+1))$. 
Expression (\ref{10}) and (\ref{11}) are  the remaining 
ambiguities to be addressed. 

\subsection{Two-point constraint}
One important  exception  is the case of  {\em real-valued} objects
when the illumination is  complex-valued (the case of $\IS^1$ or $\IC$).
In this case,  on the one hand (\ref{10}) produces a 
complex-valued array with probability one unless $\mbm=0,\theta=0, \pi$
and, on the other hand, (\ref{11}) is  complex-valued with probability one
regardless of $\mbm$. In this case, none of the trivial ambiguities can arise. Indeed,
a stronger result is true depending on the nature of random illumination.  

 \begin{theorem}\label{cor1} Suppose the
 object support has rank $\geq 2$.
Suppose either 
 of the following cases holds:\\

 (i) The phases of the object $\{f(\bn)\}$ at  two points, where $f$ does not vanish,  belong to
 a known countable subset of $[0,2\pi]$ and $\{\lamb(\bn)\}$ are independent continuous r.v.s on
 real algebraic varieties  in $\IC$ such that their angles are
continuously distributed  on $[0,2\pi]$ (e.g. $\IS^1$ or $\IC$) \\

 (ii) The amplitudes of the object $\{f(\bn)\}$ at two points, where $f$ does not vanish, 
 belong to a known measure zero  subset of $\IR$  and $\{\lamb(\bn)\}$ are independent continuous r.v.s on real algebraic varieties in $\IC$ such that their magnitudes
 are continuously distributed  on $(0,\infty)$ (e.g. 
 $\IR$ or $\IC$). \\
 
%  (iii) The amplitudes of the object $\{f(\bn)\}$ at  two points
% belong to a known measure zero set and that $\{\lamb(\bn)\}$ are independent continuous random variables on
% $\IR$.\\
 
Then $f$ is determined uniquely,   up to a global phase,   by the
  Fourier magnitude measurement  on the lattice $\cM$
     with probability one.
 \end{theorem}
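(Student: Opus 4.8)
The plan is to use the almost-sure irreducibility of Theorem~\ref{thm:new} to collapse the list of competitors down to the two explicit families (\ref{10}) and (\ref{11}), and then to exploit the randomness of the illumination $\lamb$ to push every genuine alternative out of the prescribed two-point constraint set with probability one. Concretely, I would take any competing array $g$ supported on $\cN$ whose Fourier magnitude on $\cM$ agrees with that of $\tilde f=f\lamb$. By Theorem~\ref{thm:new} the $z$-transform of $\tilde f$ is irreducible (up to a monomial) with probability one, so Theorem~\ref{thm1} forces $\tilde g(\bn)=g(\bn)\lamb(\bn)$ to be a trivial associate of $\tilde f$; dividing out the known $\lamb$ shows that $g$ must have the shifted form (\ref{10}) or the conjugate-inverted form (\ref{11}), for some $\mbm\in\IZ^d$ and $\theta$. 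It therefore suffices to prove that, almost surely, the only member of these families meeting the two-point constraint is $g=e^{\im\theta}f$, i.e.\ (\ref{10}) with $\mbm=\vzero$.

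For the phase case (i), fix the two points $\bn_1\neq\bn_2$ at which $f$ does not vanish and $\arg f$ lies in the countable set $\Theta\subset[0,2\pi]$, and read off $\arg g(\bn_1)$ and $\arg g(\bn_2)$ from (\ref{10}) or (\ref{11}). Since the global phase $\theta$ shifts both arguments identically, the binding quantity is the difference $\arg g(\bn_1)-\arg g(\bn_2)$, in which $\theta$ cancels and the constraint is feasible exactly when this difference lands in the countable set $\Theta-\Theta$. Its $\lamb$-dependent part is a nonzero integer combination of the independent angles $\{\arg\lamb(\bn)\}$, each continuously distributed on the circle; the combination degenerates to zero only when $\mbm=\vzero$ in (\ref{10}) (the true solution) or when $\bn_1,\bn_2$ sit symmetrically about the inversion center in (\ref{11}). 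Off these degeneracies the combination is absolutely continuous on $\IS^1$, so it meets the countable target with probability zero, and a countable union over $\mbm\in\IZ^d$ preserves the null set. The lone exceptional inversion shift $\mbm=\bn_1+\bn_2-\bN$ is removed by the support condition: the conjugate-inverted support is contained in $[\mbm,\bN+\mbm]$, which leaves $\cN$ unless $\mbm=\vzero$, i.e.\ unless $\bn_1+\bn_2=\bN$, so choosing the constraint pair non-symmetrically (available since the support has rank $\geq2$) discards it. Hence only $g=e^{\im\theta}f$ survives.

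The amplitude case (ii) runs in parallel and is in fact simpler, because magnitudes are insensitive to the global phase and no difference is needed. From (\ref{10})/(\ref{11}) the modulus $|g(\bn_j)|$ equals $|f(\cdot)|$ at the appropriate (possibly inverted, shifted) index times a ratio of the independent moduli $\{|\lamb(\bn)|\}$, each continuously distributed on $(0,\infty)$; whenever that ratio is nontrivial it is an absolutely continuous positive random variable, so it meets the prescribed measure-zero amplitude set with probability zero. A single point already kills every shift except one exceptional value per family (the value making the two indices coincide), and because $\bn_1\neq\bn_2$ those two exceptional shifts differ, so the second point is non-degenerate exactly where the first is degenerate; the two points together then cover all shifts, again up to a countable union over $\mbm$.

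I expect the main obstacle to be precisely this degeneracy in the conjugate-inversion case: for inversion-symmetric configurations the random angles (or moduli) drop out of the relevant combination, the probabilistic step collapses, and the difference becomes deterministic and \emph{compatible} with $\Theta-\Theta$, so the argument must be organized to guarantee a non-degenerate constraint pair and to verify that the exceptional shift for one point is always non-degenerate at the other, using the rank-$\geq2$ support. The remaining, more routine, point is to confirm that a nonzero integer combination of the $\arg\lamb(\bn)$ (respectively a nontrivial ratio of the $|\lamb(\bn)|$) pushes the joint absolutely continuous illumination law forward to an absolutely continuous law on $\IS^1$ (respectively on $(0,\infty)$), so that countable, respectively measure-zero, targets are indeed hit with probability zero.
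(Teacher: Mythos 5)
Your overall route is the paper's: Theorem \ref{thm:new} together with Theorem \ref{thm1} reduces any competitor supported in $\cN$ to the families (\ref{10}) and (\ref{11}), and the randomness of $\lamb$ must then eliminate everything except $\mbm=\vzero$, $g=e^{\im\theta}f$, in (\ref{10}). Your case (ii) is correct and essentially identical to the paper's proof: magnitudes are insensitive to $\theta$, each constraint point gives a continuous random amplitude except at one degenerate shift per family, and the degenerate shifts attached to $\bn_1$ and $\bn_2$ never coincide. In case (i), your device of passing to the phase difference, in which $\theta$ cancels and the target becomes the countable set $\Theta-\Theta$, is a clean and legitimate alternative to the paper's sequential step (the first point pins $\theta$ to a countable set, the second point then fails almost surely).

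The gap is in your disposal of the degeneracy that you yourself flagged: the conjugate inversion (\ref{11}) with $\mbm=\bn_1+\bn_2-\bN$, which swaps the two constraint points. Your first escape, that $\supp(g)=\bN+\mbm-\Sigma\subseteq\cN$ forces $\mbm=\vzero$, is false under the stated hypotheses: it would require the support to touch the upper hyperplanes $\{n_j=N_j\}$, whereas Theorem \ref{thm:new} (and \emph{a fortiori} Theorem \ref{cor1}) assumes touching only $\{n_j=0\}$. For instance, $\Sigma=\{(0,0),(1,0),(0,1)\}$, $\bN=(2,2)$, $\bn_1=(1,0)$, $\bn_2=(0,1)$ makes the shift $\mbm=(-1,-1)$ admissible. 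Your second escape, ``choosing the constraint pair non-symmetrically,'' is not an available move: $\bn_1,\bn_2$ are data of the hypothesis, not at your disposal, and rank $\geq 2$ confers no such freedom. Moreover, the obstruction is genuine rather than an artifact of your difference method: in the configuration above, taking $\theta=\arg f(\bn_1)+\arg f(\bn_2)+\arg\lamb(\bn_1)+\arg\lamb(\bn_2)$ makes the phases of $g(\bn_1)=e^{\im\theta}f^*(\bn_2)\lamb^*(\bn_2)/\lamb(\bn_1)$ and $g(\bn_2)=e^{\im\theta}f^*(\bn_1)\lamb^*(\bn_1)/\lamb(\bn_2)$ equal to $\arg f(\bn_1)$ and $\arg f(\bn_2)$ exactly, for \emph{every} realization of $\lamb$, so no almost-sure argument can exclude this competitor. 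For what it is worth, the paper's own proof is silent on precisely this case: its ``same argument'' applied to (\ref{11}) breaks down when the second point is the inversion image of the first, because conditioning on the event at the first point then determines the phase at the second. So you have correctly located the weak joint of the argument, but your patch does not close it; closing it requires an additional hypothesis (for example, that $\mbm=\bn_1+\bn_2-\bN$ is never an admissible inversion shift), not a sharper probabilistic estimate.
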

 
 \begin{remark}
For  the two-point constraint in case (i)  to  be convex, it is necessary
for the constraint set to be a singleton, namely 
the phases of the object at two nonzero points must take
on a single known value. On the
other hand, the amplitude constraint in case (ii) can never
be convex unless the set is a singleton and the object phases are the same at the two points. 
\end{remark}

 \begin{proof} By Theorem \ref{thm:new} the $z$-transform of $\{\lambda(\bn)f(\bn)\}$ is irreducible   with probability one. We prove the theorem case by case.\\
 
\noindent { Case (i)}:  Suppose the phases of $f(\bn_1)$ and $f(\bn_2)$ belong to the coutable set $\Theta\subset [0,2\pi]$. Let us show the probability  that the phase of $g(\bn)$ as given by (\ref{10}) with $\mbm\neq 0$ takes on a value in $\Theta$ at two distinct points is zero.
 
 Since $\lamb(\bn+\mbm),\mbm\neq 0, $ and the phases of 
 $\lamb(\bn)$ are independent, continuous r.v.s  on $[0,2\pi]$, the phase of $g(\bn), \forall\bn,$ is continuously distributed on $[0,2\pi]$ for
 all $\theta$.
 
 Now suppose the phase of $ g(\bn_0)$ for some $\bn_0$ lies
 in the set $\Theta$. This implies that $\theta$ must belong
 to the countable set $\Theta'$ which is  $\Theta$ shifted by
 the negative phase of $f(\bn_0+\mbm)\lamb(\bn_0 + \mbm)/\lamb(\bn_0)$.
The phase of $g(\bn)$ at a different location $ \bn\neq\bn_0 $,
however,  
 almost surely does not take on any value in the set $\Theta$ for any fixed $\theta\in \Theta'$ unless $\mbm=0$. 
 Since a  countable union of measure-zero sets has zero measure, the probability
 that the phases of $g$ at two points lie in 
$\Theta$ is zero if $\mbm\neq 0$. 

Likewise, $\lambda^* (\bN-\bn + \mbm)/\lamb(\bn), \forall\mbm, $ has a random phase that is continuously distributed on $[0,2\pi]$ and by the same argument the probability that the phases of $g$ as given by (\ref{11}) at two points lie in 
$\Theta$ is zero. \\

\noindent {Case (ii):}  Suppose the amplitudes of $f(\bn_1)$ and $f(\bn_2)$ belong to the measure zero set $\cA$. 
Since $\lamb(\bn+\mbm), \mbm\neq 0, $ and
 $\lamb(\bn)$ are independent and continuously distributed on $\IR$ or $\IC$, the amplitude  of $g(\bn)$ as given by (\ref{10}) 
  is  continuously distributed  on $\IR$ and hence the probability 
  that 
 the amplitude of $g(\bn)$ as given by (\ref{10})  belongs to $\cA$
at any $\bn$ is zero. 

Now consider $g(\bn)$ given by (\ref{11}). Suppose  that  the amplitude of $g(\bn_0)$ belongs to $\cA$ at some $\bn_0$.
This is possible only for $\bn_0={(\bN+\mbm)/2}$
in which case $g(\bn_0)=e^{\im \theta} f^*(\bn_0)$. 
 The amplitude of $g(\bn),\bn\neq \bn_0, $ has a continuous distribution on $\IR$ and zero probability to  lie in
$\cA$. 

The global phase $\theta$, however, can not
be determined uniquely in either case. 

 \commentout{
Without loss of generality, we may assume that $f(\bn_1), f(\bn_2) \in \IR$ for some $\bn_1\neq \bn_2$. The general case of
same or opposite phase can be reduced to the real-valued case
by a global phase factor. 
 
 We first prove that in the case of (\ref{10}) $\mbm$ must be ${\mathbf 0}$ almost surely. Otherwise, $g(\bn)$ almost surely has nonzero imaginary part for $\bn\neq \bn_1, \bn_2$ whenever $g(\bn)\neq 0$ except for at most one point $\bn_3\neq\bn_1, \bn_2$ in 
 which case
 $\theta$ is the negative phase of $f(\bn_3 +  \mbm)\lamb(\bn_3 + \mbm)/\lamb(\bn_3)$. 
 
 Now it is clear that with probability one at most one of  the three quantities $g(\bn_1), g(\bn_2)$ and $g(\bn_3)$ can be real-valued 
 which violates the assumption of real-valuedness at at least
 two distinct points.  
 
 The proof for the case of (\ref{11}) is similar and omitted here. 
}
\end{proof}

The global phase factor can be determined uniquely by  additional
constraint on the values of the object. For example,
the following result follows immediately from Theorem \ref{cor1} (i). 
 
\begin{corollary}\label{cor}
Suppose that  $\{f(\bn)\}$ is real and nonnegative and its
support has rank $\geq 2$. Suppose that 
$\{\lamb(\bn)\}$ are independent continuous r.v.s on
real algebraic varieties in $\IC$ such that their phases
are continuously distributed  on $[0,2\pi]$ (e.g.
$\IS^1$ or $\IC$).  Then 
$\{f(\bn)\}$ can be determined absolutely uniquely with probability one.
\end{corollary}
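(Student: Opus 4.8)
The plan is to read this off as a direct consequence of Theorem~\ref{cor1}(i), the only additional work being to pin down the single residual degree of freedom—the global phase—using nonnegativity. First I would check that the hypotheses of Theorem~\ref{cor1}(i) are in force. The object support has rank $\geq 2$ by assumption, and the illumination $\{\lamb(\bn)\}$ is exactly of the type required (independent continuous r.v.s on real algebraic varieties in $\IC$ with angles continuously distributed on $[0,2\pi]$). It remains to supply the two-point phase condition: since $f$ is real and nonnegative, its phase at every point where $f(\bn)\neq 0$ equals $0$, so the phases at any two nonvanishing points lie in the singleton set $\{0\}\subset[0,2\pi]$, which is trivially countable. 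The rank-$\geq 2$ hypothesis guarantees that at least two (indeed at least three non-collinear) such points exist, so the condition is genuinely satisfiable. Theorem~\ref{cor1}(i) then gives that, with probability one, any finite array $g$ yielding the same Fourier magnitude data on $\cM$ satisfies $g(\bn)=e^{\im\theta}f(\bn)$ for some $\theta\in[0,2\pi]$.

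The second and final step is to collapse this one-parameter family to a single array by invoking the positivity constraint on admissible reconstructions. Suppose the candidate $g$ is itself real and nonnegative. Because $f\not\equiv 0$ (its support has positive rank), there is a point $\bn_0$ with $f(\bn_0)>0$. The relation $g(\bn_0)=e^{\im\theta}f(\bn_0)$ together with the requirement that $g(\bn_0)$ be real and nonnegative forces $e^{\im\theta}=1$, i.e. $\theta=0$. Hence $g=f$, which is precisely absolute uniqueness. The probability-one qualifier is inherited verbatim from Theorem~\ref{cor1}(i).

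I do not anticipate a genuine obstacle, since this is a corollary: Theorem~\ref{thm:new} already removes the nontrivial (partial-conjugate-inversion) ambiguity via irreducibility, Theorem~\ref{cor1}(i) almost surely rules out the spatial-shift and conjugate-inversion ambiguities of \eqref{10}--\eqref{11}, and nonnegativity is the exact extra piece of prior information that eliminates the global phase. The only points deserving care are logical bookkeeping—making sure the constraint is imposed on the candidate solution $g$ and not merely on $f$—and the harmless edge case in which $f$ has very few nonzero entries, which the rank-$\geq 2$ assumption settles by producing the needed non-collinear support points.
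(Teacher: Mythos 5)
Your proposal is correct and follows essentially the same route as the paper: the paper's (one-sentence) proof likewise invokes Theorem~\ref{cor1}(i) with the countable phase set taken to be the singleton $\{0\}$ and then observes that nonnegativity pins down the global phase. Your write-up merely spells out the bookkeeping (verifying the hypotheses, imposing the constraint on the candidate $g$, and using a point with $f(\bn_0)>0$ to force $e^{\im\theta}=1$), which is exactly what the paper leaves implicit.
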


\begin{proof}
With a real, positive  object, the countable set for phase 
is the singleton $\{0\}$ and the global phase is uniquely
fixed. 
\end{proof}
 
 %\begin{remark}\label{rmk4}
 \commentout{
 The simplest application of Theorem \ref{cor1}  is for
 recovering 
 real-valued objects whose phases are limited to the countable 
 set $\{0, \pi\}$ and for which a random phase illumination (case (i)) forces  absolute
 uniqueness. On the other hand, random amplitude illumination
 does not help in the case of  real-valuedness contraint. 
 }

\subsection{Sector constraint}
More generally, we consider the {\em sector} constraint that
 the phases of $\{f(\bn)\}$ belong to $[a,b]\subset [0,2\pi]$.
 For example,
 the  class of complex-valued objects relevant to
 $X$-ray diffraction typically have  nonnegative real and imaginary parts where the real part 	is the effective	number of electrons coherently diffracting photons, and the imaginary part  represents the attenuation  \cite{MSC}. For such objects,
 $[a,b]=[0,\pi/2]$. 
 
 Generalizing the argument for Theorem \ref{cor1} we can prove the following. 
 \begin{theorem}\label{thm4}
 Suppose the object support has rank $\geq 2$. 
Let   the finite object $\{f(\bn)\}$ satisfy  the {\em sector} constraint that
 the phases of $\{f(\bn)\}$ belong to $[a,b]\subset [0,2\pi]$.  Let $S$ be  the sparsity (the number of nonzero elements) of the object.   \\

 (i) Suppose $\{\lambda(\bn)\}$ are independent, identically distributed (i.i.d.) continuous  r.v.s on real algebraic varieties in $\IC$ such that 
their  phases $\{\phi(\bn)\}$ 
 are  uniformly distributed  on $[0,2\pi]$ (e.g. the random phase illumination (\ref{rph})).
Then with  probability at least $1-|\cN| |b-a|^{[S/2]}(2\pi)^{-[S/2]}$  
 the object $f$  is uniquely
 determined, up to a global phase,  by the Fourier magnitude measurement. Here
  $\lbra S/2\rbra$ is the greatest integer at most  $S/2$. \\
 
 (ii) Consider the random amplitude illumination with i.i.d. continuous r.v.s  $\{\lambda(\bn)\}\subset \IR$ that  are equally likely  negative or positive, i.e.  $\IP\{\lamb(\bn)>0\}=\IP\{\lamb(\bn)<0 \}=1/2, \forall \bn$. Suppose $|b-a|\leq\pi$. Then with  probability at least  
$
 1- |\cN| 2^{-[(S-1)/2]}% \lt(1-\max\Big\{0, {b-a\over \pi}-1\Big\}\rt)^{[(S-1)/2]}
$
  the object
 $f$ is uniquely
 determined, up to a global phase,  by the Fourier magnitude measurement.

 In both cases, the global phase is uniquely determined  if the sector $[a,b]$ is tight
 in the sense that no proper {\bf interval} of $[a,b]$ contains
 all the phases of the object.
 \end{theorem}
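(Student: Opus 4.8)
The plan is to combine the almost-sure irreducibility of Theorem \ref{thm:new} with the rigidity of Theorem \ref{thm1}, and then to show that the random illumination forces every nontrivial associate to violate the sector condition with overwhelming probability. By Theorem \ref{thm:new} the $z$-transform of $\tilde f=f\lamb$ is irreducible almost surely, so by Theorem \ref{thm1} any array $g$ producing the same Fourier magnitude data must, after decoding by the known $\lamb$, be of the form (\ref{10}) (a shift) or (\ref{11}) (a conjugate inversion), up to the global phase $e^{\im\theta}$. Thus it suffices to estimate, for each admissible $\mbm$, the probability that the candidate $g$ in (\ref{10}) with $\mbm\neq\vzero$, or the candidate $g$ in (\ref{11}) with any $\mbm$, can be made to satisfy the sector constraint for some $\theta$; a union bound over the at most $|\cN|$ relevant values of $\mbm$ then yields the stated estimate, while the residual $\mbm=\vzero$ case of (\ref{10}) is exactly the allowed global-phase ambiguity.

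First I would write out the phase of the candidate array. For (\ref{10}),
\[
\arg g(\bn)=\theta+\arg f(\bn+\mbm)+\big(\phi(\bn+\mbm)-\phi(\bn)\big)
\]
in the phase-illumination case (i), while in the amplitude case (ii) the parenthesized term is replaced by $\pi$ times the indicator that $\lamb(\bn)$ and $\lamb(\bn+\mbm)$ have opposite signs; the conjugate-inversion candidate (\ref{11}) is handled identically, with $\arg f(\bN-\bn+\mbm)$ and the combination $\phi(\bN-\bn+\mbm)+\phi(\bn)$. The admissibility requirement is then that all the values $\arg g(\bn)$, over the $S$ points of the support, lie in a common arc of length $|b-a|$, the freedom in $\theta$ being precisely the freedom to rotate this arc.

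The heart of the argument is to extract from the $S$ coupled constraints a family of mutually \emph{independent} ones. Decomposing the support into maximal chains in the direction $\mbm$, the constraint attached to $\bn$ involves only the illumination variables at $\bn$ and $\bn+\mbm$ (resp.\ at $\bn$ and $\bN-\bn+\mbm$); selecting every other site along each chain leaves about $S/2$ constraints whose underlying illumination variables are pairwise disjoint, hence independent. On this subfamily the injected phases $\phi(\bn+\mbm)-\phi(\bn)$ are i.i.d.\ uniform on $[0,2\pi]$ in case (i), and the injected signs are i.i.d.\ fair in case (ii). Estimating the probability that this many independent uniform phases fall into a fixed arc of length $|b-a|$, resp.\ that this many independent signs all cooperate, produces the factors $|b-a|^{[S/2]}(2\pi)^{-[S/2]}$ and $2^{-[(S-1)/2]}$; summing over the at most $|\cN|$ shifts gives the theorem.

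The main obstacle, and the step I would treat most carefully, is exactly this combinatorial extraction of independent events while simultaneously disposing of the continuous parameter $\theta$: one must choose the independent subfamily so that the free rotation by $\theta$ cannot rescue the configuration, and count it sharply enough to land on the exponents $[S/2]$ and $[(S-1)/2]$ rather than something weaker (the unit lost in case (ii) reflecting the single degree of freedom spent pinning $\theta$ against the $\pi$-valued shifts). A secondary subtlety, relevant only in case (ii) when $|b-a|$ approaches $\pi$, is that sites with opposite injected signs land in the antipodal arcs $[a,b]$ and $[a+\pi,b+\pi]$, which a single arc of length $|b-a|$ can still straddle once $|b-a|\geq\pi/2$; controlling this regime is what forces the hypothesis $|b-a|\leq\pi$ and the halving of the exponent. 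Finally, the tight-sector addendum is immediate: if no proper subinterval of $[a,b]$ contains all object phases, then any $\theta\neq0$ rotates some object phase out of $[a,b]$, so the global phase is pinned to $\theta=0$.
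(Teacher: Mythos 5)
Your overall strategy is the paper's: invoke Theorems \ref{thm:new} and \ref{thm1} to force any competing solution into the forms (\ref{10}) or (\ref{11}), extract from the $S$ coupled sector constraints a subfamily whose illumination variables are pairwise disjoint (the paper's ``nonoverlapping pairs,'' your chains in the direction $\mbm$), bound each independent constraint by $|b-a|/(2\pi)$ or $1/2$, and close with a union bound over the at most $|\cN|$ shifts. Your chain decomposition is a correct elaboration of what the paper leaves implicit, and your worry about the continuum of $\theta$ is legitimate but does not put you behind the paper, which disposes of $\theta$ with the same fixed-$\theta$ bound you use.

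There is, however, one genuine gap: your explanation of the exponent $[(S-1)/2]$ in case (ii) is wrong, and your prescription that the conjugate-inversion candidate (\ref{11}) is ``handled identically'' to the shift candidate would fail exactly there. The lost unit has nothing to do with spending a degree of freedom to pin $\theta$, nor with antipodal arcs; it comes from the fixed point of the involution $\bn \mapsto \bN-\bn+\mbm$. At $\bn_0=(\bN+\mbm)/2$ (when this is a lattice point) the candidate (\ref{11}) reads $g(\bn_0)=e^{\im\theta}f^*(\bn_0)\lambda^*(\bn_0)/\lambda(\bn_0)$, and under amplitude illumination $\lambda$ is real, so $\lambda^*(\bn_0)/\lambda(\bn_0)=1$: the value $g(\bn_0)$ carries no randomness whatsoever and cannot contribute any factor to the probability bound. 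Only the remaining $S-1$ support points inject random signs, and pairing them under the involution yields at most $[(S-1)/2]$ independent constraints --- that is the paper's mechanism. The contrast with case (i) confirms this: there the fixed point still carries randomness, since $\lambda^*(\bn_0)/\lambda(\bn_0)=e^{-2\im\phi(\bn_0)}$ is uniformly distributed, which is precisely why case (i) retains the exponent $[S/2]$ even for (\ref{11}). Transplanting your chain argument verbatim to (\ref{11}) would assert $[S/2]$ independent fair signs in case (ii), an exponent the probability space cannot deliver; to repair your write-up you must single out the degenerate site $\bn_0$ and pair only the other $S-1$ sites.
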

 \commentout{
 \begin{remark}
If the positivity constraint is imposed  on either real or imaginary  part (but not both),
then the same analysis as above yields  the  probability at least  $1- 2^{-[(S-1)/2]}|\cN|$ of absolute uniqueness in case (i) and (ii). 
\end{remark}
}

 \begin{proof} 
 \noindent Case (i):  
 Consider first the expression (\ref{10}) with any $\mbm\neq 0$
 and the  $\lbra S/2\rbra$  independently distributed r.v.s  of $g(\bn)$ corresponding to $[S/2]$ nonoverlapping pairs of points $\{\bn,\bn + \mbm\}$. The probability for every such the phase of $g(\bn)$ to lie in the sector $[a,b]$ 
 is $|b-a|/(2\pi)$ for any $\theta$ and hence the probability for 
 all $g(\bn)$ with $ \mbm\neq 0,\theta\neq 0, $ to lie in the sector  is at most  $|b-a|^{[S/2]}(2\pi)^{-[S/2]}$. The union over $\mbm\neq 0$  of these events  has probability
 at most $|\cN| |b-a|^{[S/2]}(2\pi)^{-[S/2]}$. 
 
 Likewise  the probability  for 
all $g(\bn)$ given by (\ref{11})  to  lie  in the first quadrant for any $\mbm$ is at most $|\cN| |b-a|^{[S/2]}(2\pi)^{-[S/2]}$.\\

\noindent Case (ii): For (\ref{10}) with any $\mbm\neq 0$  the  $[S/2]$ independently distributed random variables $g(\bn)$ corresponding to $[S/2]$ nonoverlapping pairs of points $\{\bn,\bn + \mbm\}$, 
satisfy the sector constraint  with probability at most  $2^{-[S/2]}$
if $|b-a|\leq \pi$.
% or $(1-(b-a)/(2\pi))^{[S/2]}$ if $|b-a|>\pi$  for any $\theta$. 
Hence the probability that all $g(\bn)$ with $\mbm\neq0$  satisfy the sector constraint  is at most  $
|\cN| 2^{-[S/2]}. 
%\lt(1-\max\Big\{0, {b-a\over \pi}-1\Big\}\rt)^{[S/2]}.
$

For (\ref{11}) with $\theta=0$ and  any $\mbm$, $g(\bn_0)=f(\bn_0)$ at $
\bn_0=(\bN+\mbm)/2$  and hence $g(\bn_0)$ lies in the first quadrant with probability one. For $\bn\neq \bn_0$,  $g(\bn)$  satisfies the sector constraint with probability $1/2$ if $|b-a|\leq \pi$. %or with probability  $1-(b-a)/(2\pi)$ if $|b-a|>\pi$.
 Now the $[(S-1)/2]$  independently distributed r.v.s $g(\bn) $ corresponding to nonoverlapping pairs of points $\{\bn,\bn + \mbm\},\bn\neq \bn_0,$  satisfy the sector constraint  with probability at most $2^{-[(S-1)/2]}$ if $|b-a|\leq \pi$. 
% or $(1-(b-a)/(2\pi))^{[(S-1)/2]}$ if $|b-a|>\pi$. 
Hence the probability that
all $g(\bn)$ given by (\ref{11}) with arbitrary $ \mbm$  satisfy
the sector constraint  is at most $
|\cN| 2^{-[(S-1)/2]} %\lt(1-\max\Big\{0, {b-a\over \pi}-1\Big\}\rt)^{[(S-1)/2]}.
$.
 \end{proof}
 
 \commentout{% bounded ranged objects
 Another commonly available prior knowledge is
 the upper and lower bounds of the object magnitudes. 
 \begin{theorem}
 Suppose that $ |f(\bn)|\in [a,b]\subset (0,\infty)$ for $\bn$ in
 the object support
  and that $\{\lamb(\bn)\}$ are independent continuous random variables on
 $\IR$ or $\IC$.
 \end{theorem}
 }

\subsection{Magnitude constraint}

Likewise if the object satisfies a magnitude constraint  then we can  use random {\em amplitude}  illumination to enforce uniqueness (up to a global phase). 

\begin{theorem} Suppose that the object support
has rank $\geq 2$. 
Suppose that $K$ pixels of the complex-valued object $f$ satisfy the magnitude constraint $0<a\leq |f(\bn)|\leq b$ and that $\{\lambda(\bn)\}$ are i.i.d. continuous r.v.s on real algebraic varieties in $\IC$ with $\IP\{|\lambda(\bn)/\lambda(\bn')|> b/a\,\,\hbox{or}\,\,  |\lambda(\bn)/\lambda(\bn')|<a/b\}=1-p>0$ for $\bn\neq\bn'$. Then the object $f$ is determined uniquely, up
to a global phase, by the Fourier magnitude data on $\cM$, with probability at least $1-|\cN| p^{[(K-1)/2]}$. 
\label{thm6}
\end{theorem}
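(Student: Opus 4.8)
The plan is to combine the almost-sure irreducibility of Theorem \ref{thm:new} with Theorem \ref{thm1} to reduce the competing solutions to the two explicit families (\ref{10}) and (\ref{11}), and then to show that random amplitude illumination forces each of these to violate the magnitude bound with overwhelming probability. By Theorem \ref{thm:new} the $z$-transform of $\{\lamb(\bn)f(\bn)\}$ is irreducible with probability one, so by Theorem \ref{thm1} any array $g$ producing the same Fourier magnitude data must, up to a global phase and the undetermined monomial in $\bz$, be either a pure shift (\ref{10}) with some $\mbm$ or a conjugate inversion (\ref{11}) with some $\mbm$. It therefore suffices to estimate the probability that such a $g$ can also meet the constraint $a\le |g(\bn)|\le b$ at the $K$ constrained pixels, and to rule out every $\mbm\neq 0$ in (\ref{10}) together with every $\mbm$ in (\ref{11}).

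The deterministic core of the estimate is the observation that, at a pixel whose partner is itself magnitude-constrained, the modulus constraint pins a ratio of illumination amplitudes into $[a/b,b/a]$. For (\ref{10}) one has $|g(\bn)|=|f(\bn+\mbm)|\,|\lamb(\bn+\mbm)/\lamb(\bn)|$; if both $\bn$ and $\bn+\mbm$ lie in the constrained set, then $|f(\bn+\mbm)|\in[a,b]$, and together with $|g(\bn)|\in[a,b]$ this forces $|\lamb(\bn+\mbm)/\lamb(\bn)|=|g(\bn)|/|f(\bn+\mbm)|\in[a/b,b/a]$, an event of probability at most $p$ by hypothesis. If instead $f(\bn+\mbm)=0$, then $|g(\bn)|=0<a$ and $g$ is rejected outright. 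The same computation applies to the conjugate inversion (\ref{11}) with the involution $\sigma(\bn)=\bN-\bn+\mbm$ in place of the translation $\bn\mapsto\bn+\mbm$.

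The remaining work is combinatorial. I would select a family of nonoverlapping pairs $\{\bn,\bn+\mbm\}$ (resp. $\{\bn,\sigma(\bn)\}$) inside the constrained set so that the associated ratio events involve disjoint illumination variables and are hence independent; multiplying the per-pair bound $p$ over the pairs gives $p^{m}$, with $m$ the number of pairs. For the conjugate inversion one must set aside the single fixed point $\bn_0=(\bN+\mbm)/2$, at which $|g(\bn_0)|=|f(\bn_0)|$ meets the bound automatically and contributes no randomness; discarding it leaves at least $[(K-1)/2]$ independent pairs, which is precisely the exponent in the statement, while the shift case (\ref{10}) has no fixed point and yields the larger count $[K/2]$, so the inversion family is the binding one. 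A union bound over the at most $|\cN|$ admissible shifts $\mbm$ and the two families then produces the failure probability $|\cN|\,p^{[(K-1)/2]}$. Finally, the global phase $\theta$ is intrinsically undetermined, which is why uniqueness can only be asserted up to a global phase.

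The step I expect to be the real obstacle is the combinatorial bookkeeping that guarantees $[(K-1)/2]$ independent pairs uniformly in $\mbm$. Unlike the phase/sector argument of Theorem \ref{thm4}, where the phase of $g(\bn)$ is uniformized by $\lamb$ irrespective of the partner pixel's value, the magnitude event genuinely depends on $|f|$ at the partner pixel, so the clean per-pair bound $p$ is available only when the partner is also magnitude-constrained. One must therefore argue that whenever a shift or inversion fails to map the constrained set into itself, the competing $g$ is either killed immediately—a partner falls outside the support and forces $|g|=0$—or still admits enough constrained partners to supply the required number of independent pairs. Carrying out this accounting cleanly, in tandem with the fixed-point correction for (\ref{11}), is the crux of the argument.
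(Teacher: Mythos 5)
Your proposal is correct and follows essentially the same route as the paper's own proof: irreducibility (Theorem \ref{thm:new}) plus Theorem \ref{thm1} reduces any competing reconstruction to the forms (\ref{10})--(\ref{11}), nonoverlapping pairs $\{\bn,\bn+\mbm\}$ give independent events each of probability at most $p$, the fixed point $\bn_0=(\bN+\mbm)/2$ of the conjugate inversion satisfies the constraint automatically and costs one pixel (whence the exponent $[(K-1)/2]$, the binding one), and a union bound over the at most $|\cN|$ shifts yields $1-|\cN|p^{[(K-1)/2]}$ with the global phase left undetermined. The ``crux'' you flag at the end---that the per-pair bound $p$ is only available when the partner pixel is itself magnitude-constrained, since $|g(\bn)|\in[a,b]$ pins $|\lamb(\bn+\mbm)/\lamb(\bn)|$ inside $[a/b,b/a]$ only if $|f(\bn+\mbm)|\in[a,b]$---is a genuine subtlety, but be aware that the paper's proof does not resolve it either: it simply asserts the per-pair bound for $[K/2]$ (resp.\ $[(K-1)/2]$) nonoverlapping pairs, tacitly taking the partners to be constrained, which is automatic when the constraint covers the whole support (i.e.\ $K=S$) but is not argued for general $K<S$, so you have not missed an ingredient that the paper actually supplies.
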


\begin{proof} The proof is similar to that for Theorem \ref{thm4}(ii).

For (\ref{10}) with any $\mbm\neq 0$  the  $[K/2]$ independently distributed random variables $g(\bn)$ corresponding to $[K/2]$ nonoverlapping pairs of points $\{\bn,\bn + \mbm\}$ satisfy $0<a\leq |g(\bn)|\leq b$
with probability less than $p^{-[K/2]}$ for any $\theta$. Hence the probability that $g(\bn)$ with $\mbm\neq0$
satisfy the magnitude constraint at $K$ or more points  is at most  $|\cN| p^{-[K/2]}$. 

For (\ref{11}) with any $\mbm$, $|g(\bn_0)|=|f(\bn_0)|$ at $
\bn_0=(\bN+\mbm)/2$  and hence $g(\bn_0)$ satisfies the magnitude constraint with probability one. For $\bn\neq \bn_0$, there is at most   probability $p$  for $g(\bn)$  to
satisfy the magnitude constraint. By independence, the $[(K-1)/2]$  independently distributed r.v.s $g(\bn) $ corresponding to nonoverlapping pairs of points $\{\bn,\bn + \mbm\},\bn\neq \bn_0,$  satisfy the magnitude constraint  with probability at most $p^{[(K-1)/2]}$. Hence the probability that
$g(\bn)$ given by (\ref{11}) with arbitrary $ \mbm$ 
satisfy the magnitude constraint at $K$ or more points is at most $|\cN| p^{[(K-1)/2]}$.

The global phase factor is clearly undetermined. 
\end{proof}

As in Theorem \ref{cor1} case (ii) the magnitude constraint here, however, is not convex.

\subsection{Complex objects without constraint}
 For general complex-valued objects without any constraint, we consider two sets of Fourier magnitude data produced with two  independent random illuminations and obtain
almost sure uniqueness modulo global phase.   

\begin{theorem} 

\label{thm5}

Let $ \{f(\bn)\}$ be a finite complex-valued array
whose support has rank $\geq 2$.   
%vanishing outside $\cN$. 
Let $\{\lamb_1(\bn)\}$ and $\{\lamb_2(\bn)\}$ be two independent arrays  of r.v.s  satisfying
the assumptions in Theorem \ref{thm:new}.
% on $(\IS^1)^{2|\cN|}$, $\IR^{2|\cN|}$ or $\IC^{2|\cN|}$. 
%Let
%$\widetilde F_1(\bw)$ and $\widetilde F_2(\bw)$ be the Fourier transform of $\tilde f_1(\bn)=f(\bn)\lamb_1(\bn)$ and $\tilde f_2(\bn)=f(\bn)\lamb_2(\bn)$, respectively. 
%Suppose that $|\widetilde F_1(\bw)|$ and $|\widetilde F_2(\bw)|$ 
%are known for all $ \bw\in \cM$. 
 
Then with probability one $f(\bn)$ is uniquely determined,  up to a global phase, by the Fourier magnitude measurements on $\cM$  with two illuminations $\lamb_1$ and $\lamb_2$. 
%The only exception is when the object $f$ has sparsity 1 and the random illuminations are purely real-valued. 

If the second illumination $\lamb_2$ is deterministic and results in 
an irreducible  $z$-transform 
while $\lamb_1$ is random as above, then
the same conclusion holds.
\end{theorem}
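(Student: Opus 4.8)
The plan is to reduce the two-illumination problem to the single-illumination ambiguity list already recorded in (\ref{10})--(\ref{11}) and then to play the two independent illuminations against each other. First I would invoke Theorem \ref{thm:new}, together with the remark that allows a leftover monomial factor (which Theorem \ref{thm1} already absorbs into its $\bz^{-\mbm}$ term): with probability one the $z$-transforms of both $f\lamb_1$ and $f\lamb_2$ are irreducible. Applying Theorem \ref{thm1} with a single irreducible factor ($p=1$) to each illumination \emph{separately}, any array $g$ that reproduces the Fourier magnitude data of $f$ under $\lamb_j$ must be of the form (\ref{10}) or (\ref{11}) with $\lamb$ replaced by $\lamb_j$, for some shift $\mbm_j\in\IZ^d$ and phase $\theta_j$. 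Thus a putative spurious solution $g$ carries two such representations, one per illumination, and the heart of the matter is to show they cannot coincide except in the trivial case $g=e^{\im\theta}f$.

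Next I would equate the two representations of the same $g$. Because the labels (type in $\{(\ref{10}),(\ref{11})\}$, shift $\mbm_j$, phase $\theta_j$) depend on the realization, they cannot be held fixed while taking probabilities; instead I would take a union over the countably many discrete choices of $(\text{type}_1,\text{type}_2,\mbm_1,\mbm_2)$ and show each is a null event. For fixed types and shifts only the difference $\theta_1-\theta_2$ survives, so the two forms agree for \emph{some} phases precisely when the ratio of their illumination-dependent factors is constant in $\bn$ over the support of $f$. Since the support has rank $\geq 2$ and hence at least three points, I can test this constancy at two distinct support points, obtaining a nontrivial multiplicative relation among values of $\lamb_1$ and $\lamb_2$ at distinct indices. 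As $\lamb_1,\lamb_2$ are independent with absolutely continuous joint law, such a relation holds with probability zero. A nontrivial $g$—one not equal to $e^{\im\theta}f$—would require \emph{both} of its representations to be nontrivial (conjugate type, or shift type with nonzero shift), so a countable union bound over all such $(\text{type}_1,\text{type}_2,\mbm_1,\mbm_2)$ shows this is a null event, and almost surely $g=e^{\im\theta}f$.

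For the variant in which $\lamb_2$ is deterministic and yields an irreducible $z$-transform while $\lamb_1$ is random, the irreducibility of $f\lamb_2$ is now assumed rather than derived, and that of $f\lamb_1$ still follows from Theorem \ref{thm:new}; the rest of the argument is identical except that the measure-zero estimates rely only on the randomness of $\lamb_1$. This still suffices, because in every nontrivial representation the factor $\lamb_1$ enters through $\lamb_1(\bn+\mbm_1)/\lamb_1(\bn)$ with $\mbm_1\neq 0$, or through $\lamb_1^*(\bN-\bn+\mbm_1)/\lamb_1(\bn)$, each of which depends on $\lamb_1$ at two distinct indices for generic $\bn$; hence the constancy condition still constrains $\lamb_1$ nontrivially and remains a null event, while the shift-type case with $\mbm_1=0$ simply returns the trivial solution.

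The main obstacle, and the point requiring the most care, is exactly that the ambiguity parameters are themselves random, so one may not fix them before computing probabilities; the remedy is the reduction to constancy of a $\theta$-free ratio followed by a countable union over the discrete shift and type labels. A secondary technicality is that the pair of evaluation points can produce coincidences among the indices $\bn,\ \bn+\mbm_j,\ \bN-\bn+\mbm_j$; I would choose the two support points so as to keep the resulting relation nondegenerate, which is always possible given at least three support points and the freedom in selecting the pair.
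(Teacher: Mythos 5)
Your proposal is correct and follows essentially the same route as the paper's own proof: almost sure irreducibility of both $f\lamb_1$ and $f\lamb_2$ via Theorem \ref{thm:new}, then Theorem \ref{thm1} to reduce any spurious solution to the forms (\ref{10})--(\ref{11}) for each illumination, and finally the independence and continuity of the two illuminations to rule out all four nontrivial matching scenarios, leaving only $g=e^{\im\theta}f$. The only difference is one of rigor, not of method: your explicit countable union over the (random) type and shift labels, and the reduction to constancy of a $\theta$-free ratio tested at two support points, makes precise the measure-zero claims that the paper's proof asserts tersely ("this almost surely can not occur unless $\mbm_1=\mbm_2=\vzero$, $\theta_1=\theta_2$").
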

\begin{proof}
Let  $g(\bn)$ be another array that vanishes  outside $\cN$  and produces the same data. By Theorem \ref{thm1},  \ref{thm:new} and Remark \ref{rmk1}
\beq
\label{12}
g(\bn)&=&\lt\{\begin{matrix} e^{\im \theta_i} f(\bn + \mbm_i)\lamb_i(\bn + \mbm_i)/\lamb_i(\bn) \\
e^{\im \theta_i} f^*(\bN-\bn + \mbm_i)\lamb_i^* (\bN-\bn + \mbm_i)/\lamb_i(\bn),
\end{matrix}\rt.
\eeq
for  some $\mbm_i\in \IZ^d, \theta_i\in \IR, i=1,2$. 

Four scenarios of ambiguity exist but  because of  the independence of
$\lamb_1(\bn), \lamb_2(\bn)$ none can arise.

First of all, if
\[
g(\bn)=e^{\im \theta_i} f(\bn + \mbm_i )\lamb_i(\bn + \mbm_i)/\lamb_i(\bn),\quad i=1,2
\]
then
\[
e^{\im\theta_1}f(\bn + \mbm_1 )\lamb_1(\bn + \mbm_1)/\lamb_1(\bn)
=e^{\im \theta_2}  f(\bn + \mbm_2 )\lamb_2(\bn + \mbm_2)/\lamb_2(\bn).
\]
This almost surely can not occur unless
$\mbm_1=\mbm_2={\mathbf 0}, \theta_1=\theta_2$ in which case
$g$ equals  $f$ up to a global phase factor. 

The other possibilities  can be similarly ruled out:
\beqn
g(\bn)&=&e^{\im \theta_1} f(\bn + \mbm_1 )\lamb_1(\bn + \mbm_1)/\lamb_1(\bn)\\
&=&e^{\im \theta_2} f^*(\bN-\bn + \mbm_2)\lamb_2^* (\bN-\bn + \mbm_2)/\lamb_2(\bn)
\eeqn
and
\beq
\label{exp}
g(\bn)=e^{\im \theta_i} f^*(\bN-\bn + \mbm_i)\lamb_i^* (\bN-\bn + \mbm_i)/\lamb_i(\bn),\quad i=1,2
\eeq
for any $\mbm_i,\theta_i, i=1,2$. 
%The only exception occurs in the scenario (\ref{exp}) when $f(\bn)=0,\forall \bn\neq \bn_0$ and the illuminations are real-valued. 

\commentout{
Now consider the case that $\{\lamb_2(\bn)\}$ is deterministic. Let $g$ be given as in (\ref{12}) with $i=1$.
Then the Fourier magnitude data for the second
illumination 
\beq
\label{2.12}
\widetilde F_2(\bw)=\sum_{\bn} g(\bn)\lambda_2(\bn) e^{-\im 2\pi \bn\cdot\bw}
%\bPhi \Lambda_2\lt\{\begin{matrix} e^{\im \theta_1} f(\bn + \mbm_1)\lamb_i(\bn + \mbm_1)/\lamb_1(\bn) \\
%e^{\im \theta_1} f^*(\bN-\bn + \mbm_1)\lamb_1^* (\bN-\bn + \mbm_1)/\lamb_1(\bn),
%\end{matrix}\rt.
\eeq
%where $\bPhi$ is the discrete Fourier transform
%and $\Lambda_2=\hbox{diag} (\lamb_2(\bn))$.  
are {\em continuous}  r.v.s  unless
$g(\bn)=e^{\i\theta_1}f(\bn)$. On the other
hand, since $\{\lamb_2(\bn)\}$ is deterministic, 
the Fourier magnitude $\widetilde F_2(\bw)$  must be deterministic also.
Thus, $g(\bn)=e^{\i\theta_1}f(\bn)$ for some constant $\theta_1$. }
The same argument above applies to the case of deterministic 
$\lambda_2$ if the resulting $z$-transform is irreducible. 

\end{proof}

\commentout{
\begin{remark}
\label{rmk3}
%Similar to  the relaxation of the independency assumption in Remark \ref{rmk2},  
The two illuminations $\lamb_1$ and $\lamb_2$ need not
be independent from each other. It suffices for
 $\lamb_1(\bn)$ and $\lamb_2(\mbm)$, for any $\bn\neq \mbm$,  to have a joint
 distribution that is absolutely continuous with respect to
 the product measure. 

\end{remark}
}

\commentout{
\section{Compressed sensing for sparse objects}
\label{sec:cs}

When the object $f(\bn)$ is sparse in the sense that
 its support set has fewer  elements than 
$(N_1+1)(N_2+1)$, then it is possible to  to recover
 the whole data set $|F(\bw)|^2, \bw\in \cM$ by measuring 
 fewer samples than $\cM$. 
 
 Let $\cK=\{\bw_j:j=1,2,\cdots,M\}$ be the sampling set whose elements are i.i.d. r.v.s
 on $\cM$ or $[0,1]^d$.
 
 Consider the equation (\ref{Fou}) and set $\bA=[e^{-\im 2\pi \bn\cdot\bw_j}]$.
 
 To do this, we use the compressed sensing technique, called
 the Basis Pursuit, which is the following $\ell_1$ minimization
 principle
 \beq
 \label{BP}
 \min_{C} \|C\|_1,\quad \hbox{s.t.}\,\, \bA C=(|F(\bw_j)|^2)^T 
 \eeq

\begin{theorem} \label{thm3}
Suppose 
\beq
\label{77}
{n\over \ln{n}}\geq C_0 \delta^{-2}s\ln^2{s} \ln{m} \ln{1\over \rho},\quad
\rho\in (0,1)
\eeq
 holds any $\delta<\sqrt{2}-1$.   Then 
  the Basis Pursuit  minimizer  $\hat X$ satisfies
 \beq
 \|\hat X-X\|_2&\leq & C_1s^{-1/2}\|X-X^{(s)}\|_1+C_2\ep
 \eeq
 for some constants $C_1$ and $C_2$. 
with probability at least $1-\rho$. Here $C_0, C_1$ and $C_2$ are
absolute constants. 
%In particular, every target vector of sparsity less than $S$
%can be exactly recovered  by  BP (\ref{L1}).
\end{theorem}
}

%\section{\red Magnitude retrieval}
%Travel time tomography, imaging by wavefront sensing

\commentout{
\begin{center}
   \begin{tabular}{l}
   \hline  
   \centerline{{\bf Error Reduction}}  \\ \hline
    At the $k$th iteration, start with $X_k$.\\
Step 1 :  Compute $Y_k = F \lambda X_k$ and let $Y_k(\bw) = |Y_k(\bw)| e^{i \phi_k(\bw)}$. \\
Step 2 :  Fit the intensity measurement by letting $Y'_k(\bw) = |Y(\bw)|e^{i \phi_k(\bw)}$. \\
Step 3 : Do inverse Fourier transform $X'_k = F^{-1} Y'_k$.\\
Step 4 : $X_{k+1} = \cP_o(X'_k)$, which is the orthogonal projection of $X'_k$ to the set of images satisfying\\
\qquad  \quad \quad  the object-domain constraints.\\
    \hline
   \end{tabular}
\end{center}
In Step $4$, one enforces the object-domain constraints on $X'_k$. Specifically, if $X(\bn)$ is known to be real,
\begin{equation*}
X_{k+1}(\bn) = \text{real}(X'_k(\bn)),
\end{equation*}
and if $X(\bn)$ is further known to be real positive
\begin{equation*}
X_{k+1}(\bn) = \left\{ \begin{array}{ll}
\text{real}(X'_k(\bn)) & \text{if } \text{real}(X'_k(\bn)) \ge 0 \\
0 & \text{if } \text{real}(X'_k(\bn)) < 0\end{array}. \right.
\end{equation*}
If $X(\bn)$ is complex with nonnegative real and imaginary parts,
$$\text{real}(X_{k+1}(\bn)) = \left\{ \begin{array}{ll}
\text{real}(X'_k(\bn)) & \text{if } \text{real}(X'_k(\bn)) \ge 0 \\
0 & \text{if } \text{real}(X'_k(\bn)) < 0\end{array}, \right.
$$
$$\text{imag}(X_{k+1}(\bn)) = \left\{ \begin{array}{ll}
\text{imag}(X'_k(\bn)) & \text{if } \text{imag}(X'_k(\bn)) \ge 0 \\
0 & \text{if } \text{imag}(X'_k(\bn)) < 0\end{array}. \right.
$$
Moreover, if the support $\Sigma$ of $X$ is given as a prior information,
$$X_{k+1}(\bn) = 0 \text{  if } \bn \notin S.$$
}

\section{Numerical examples}\label{sec:num}
\commentout{
\begin{figure}[!t]
  \centering
 \subfigure[]{
    \label{Bull} %% label for first subfigure
    \includegraphics[width = 2in]{figures/Bull.pdf}
    }
        \subfigure[]{
    \label{BullSample4} %% label for second subfigure
    \includegraphics[width = 2in]{figures/BullSample4.pdf}
    } 
    %%%%%%%%%%%%%%%%%%%%%%%%%%%%%%%%%%%%%%%%%%%%
%   \subfigure[]{
  %  \label{BullSample6} %% label for second subfigure
  %  \includegraphics[width = 1.5in]{figures/BullSample6.pdf}
   % }
      \subfigure[]{
    \label{BullRISample1dot1} %% label for second subfigure
    \includegraphics[width = 2in]{figures/BullRISample1dot1.pdf}
    }
  \caption{(a) The original object and reconstructions with
  (b) uniform  illumination, $\rho  = 4$ , relative error $  13\%$,
  relative Fourier magnitude residual $0.74\%$ and
  (c) random   phase illumination,  $\rho  = 0.55$, relative error $2.97\%$, relative Fourier magnitude residual $0.37\%$   (adapted from \cite{FL}).
  %(c) uniform  illumination, $\rho  = 6$, $\|\hat f- {f}_{\text{twin}}\|/\|f\|\approx  17\%$ (adapted from \cite{FL}).
  }
  \label{RealPositiveBull} %% label for entire figure
  \label{fig2}
\end{figure}
}

\commentout{
\begin{figure}[!t]
  \centering
 \subfigure[]{
    \label{Cameraman} %% label for first subfigure
    \includegraphics[width = 2in]{Asilomar11/figure/Cameraman.pdf}}
            \subfigure[]{
    \label{CameramanRISample1dot1} %% label for second subfigure
  \includegraphics[width = 2in]{Asilomar11/figure/CameramanRISample1dot1.pdf}}\\
     \subfigure[]{
    \label{CameramanSample4} %% label for second subfigure
    \includegraphics[width = 2in]{Asilomar11/figure/CameramanSample4.pdf}}
        \subfigure[]{
    \label{CameramanSample2} %% label for second subfigure
    \includegraphics[width =2in]{Asilomar11/figure/CameramanSample2.pdf}} 
    %%%%%%%%%%%%%%%%%%%%%%%%%%%%%%%%%%%%%%%
  \caption{(a) The original object and reconstructions with (b)  random   phase illumination,  $\rho=1.1$, relative error $\approx 2.61\%$; (c)   uniform  illumination, $\rho  = 4$,
 relative error $\approx  4.36\%$; (d) uniform  illumination,  $\rho = 2$ (adapted from \cite{FL}). }
  \label{Cameraman} %% label for entire figure
\end{figure}
}

\begin{figure}[hthp]
  \centering
  \subfigure[]{
         \includegraphics[width = 5cm]{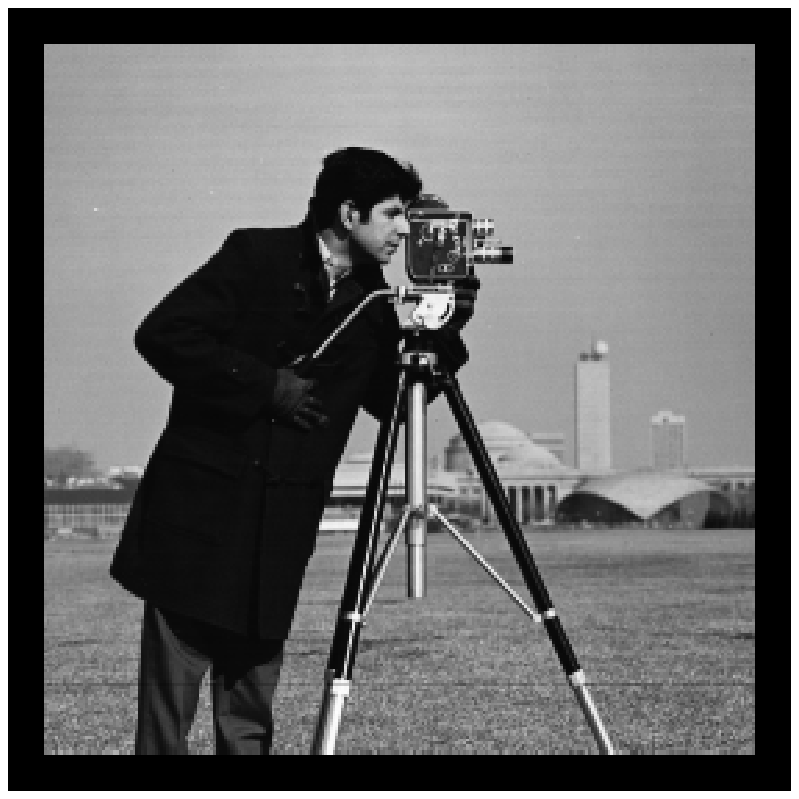}}
         \hspace{1.3cm}
             %%%%%%%%%%%%%%%%%%%%%%%%%%%%%%%%%
  \subfigure[]{
         \includegraphics[width = 5cm]{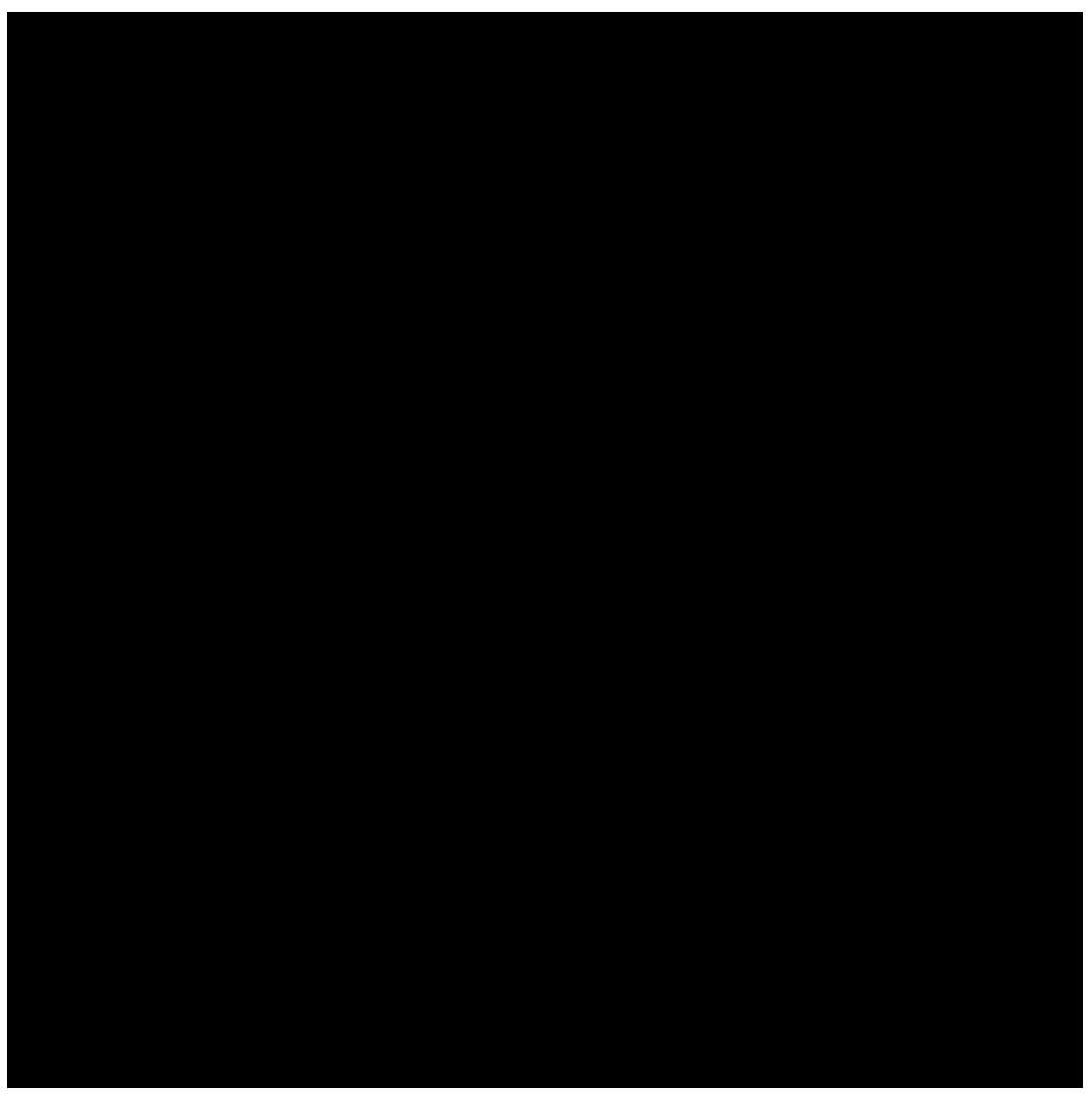}}\\
             %%%%%%%%%%%%%%%%%%%%%%%%%%%%%%%%%
  \subfigure[]{
    \includegraphics[width = 6.5cm]{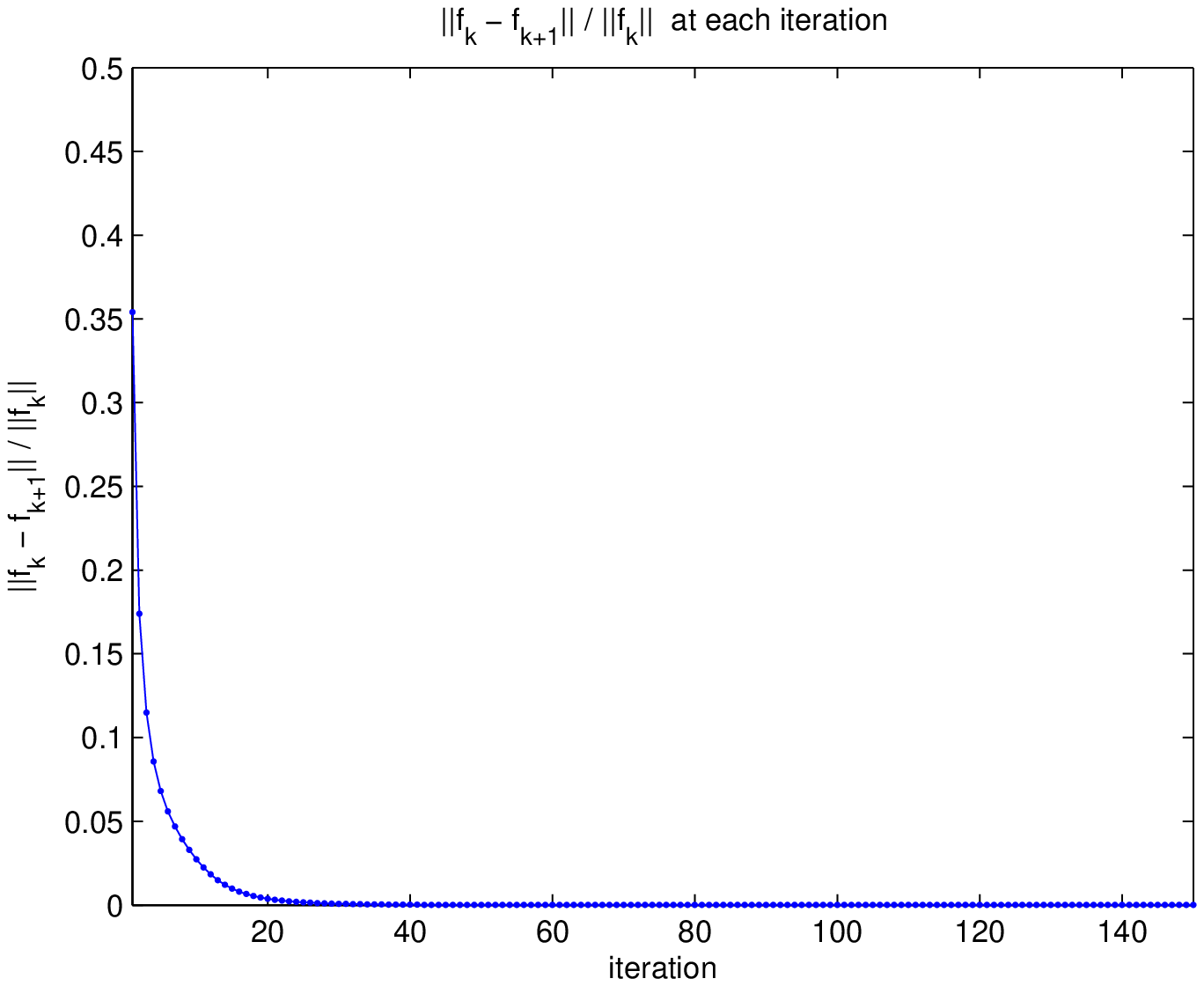}}
        %%%%%%%%%%%%%%%%%%%%%%%%%%%%
          \subfigure[]{
         \includegraphics[width = 6.5cm]{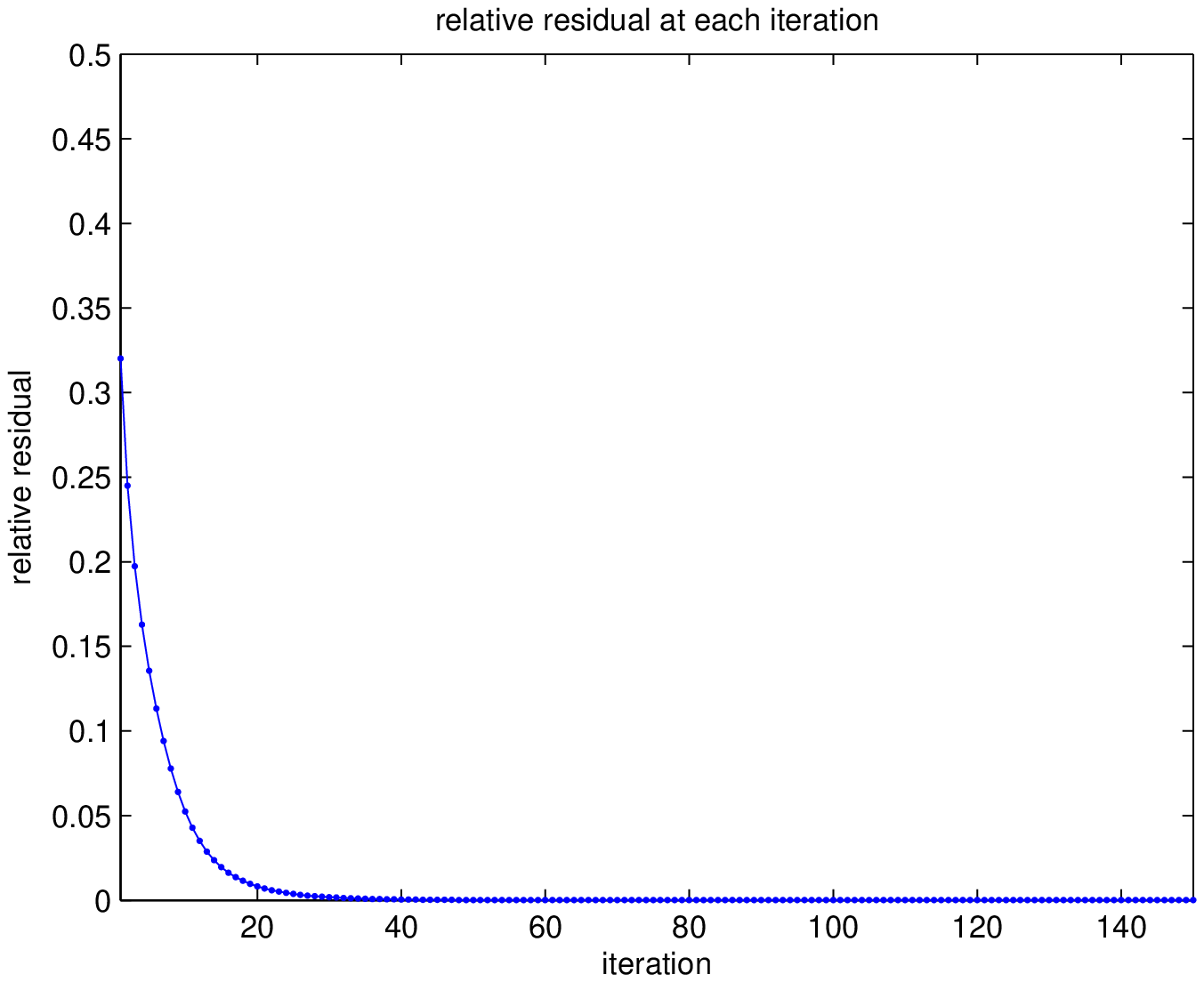}}
            % \subfigure[]{
        % \includegraphics[width = 6.5cm]{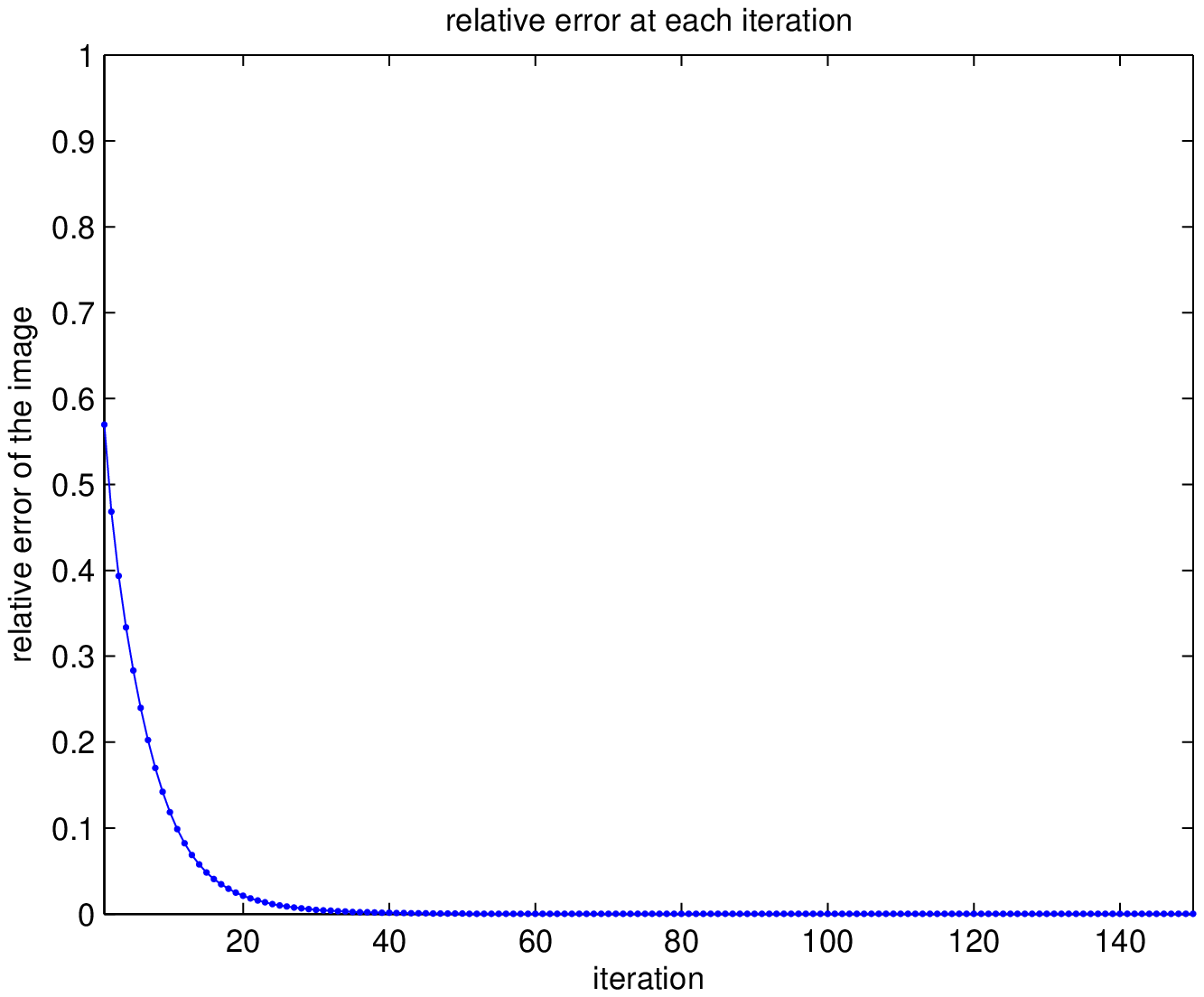}}
   
             %%%%%%%%%%%%%%%%%%%%%%%%%%%%%%%%%
     \caption{ ER reconstruction with random phase illumination: (a) recovered image (b) difference between the true and recovered images  (c) relative change $\|f_{k+1}-f_k\|/\|f_k\|$  (e) relative residual $\||\widetilde F|-|\bPhi\Lambda f_k|\|/\|\widetilde F\|$ versus number of iterations.}\label{fig2}\end{figure}

\begin{figure}[hthp]
  \centering
  \subfigure[]{
         \includegraphics[width = 5cm]{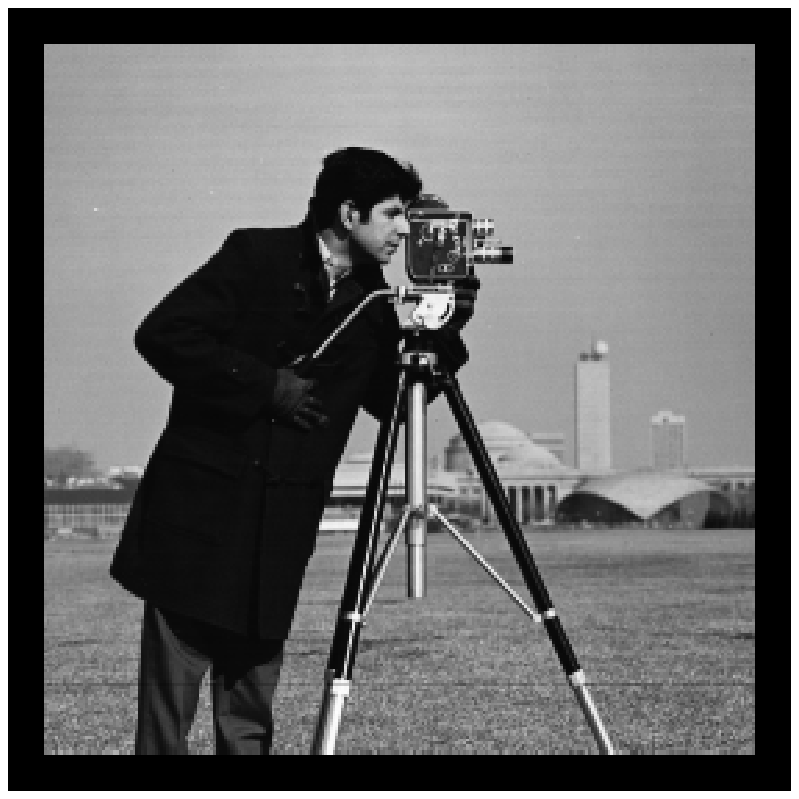}}\hspace{1.3cm}
             %%%%%%%%%%%%%%%%%%%%%%%%%%%%%%%%%
\subfigure[]{
     \includegraphics[width = 5cm]{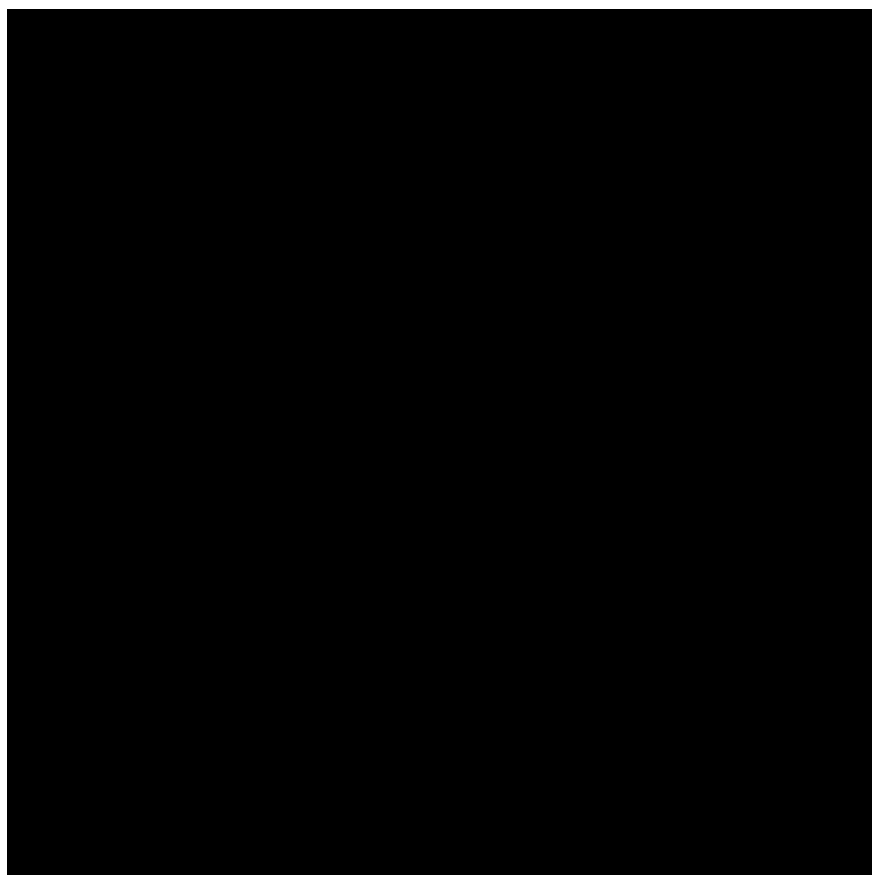}}\\
             %%%%%%%%%%%%%%%%%%%%%%%%%%%%%%%%%
  \subfigure[]{
    \includegraphics[width = 6.5cm]{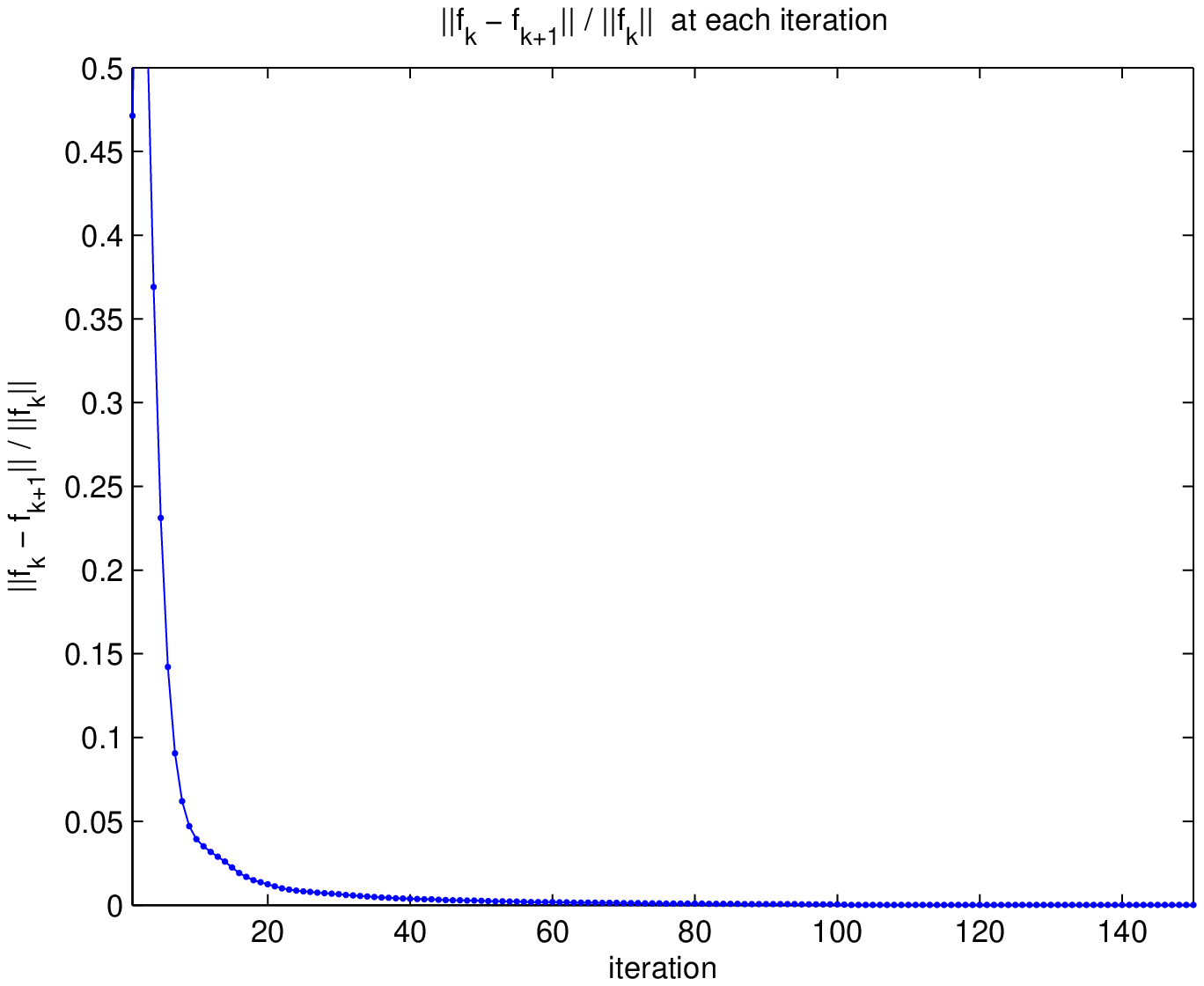}}
         %%%%%%%%%%%%%%%%%%%%%%%%%%%%%%%%%
          \subfigure[]{
         \includegraphics[width = 6.5cm]{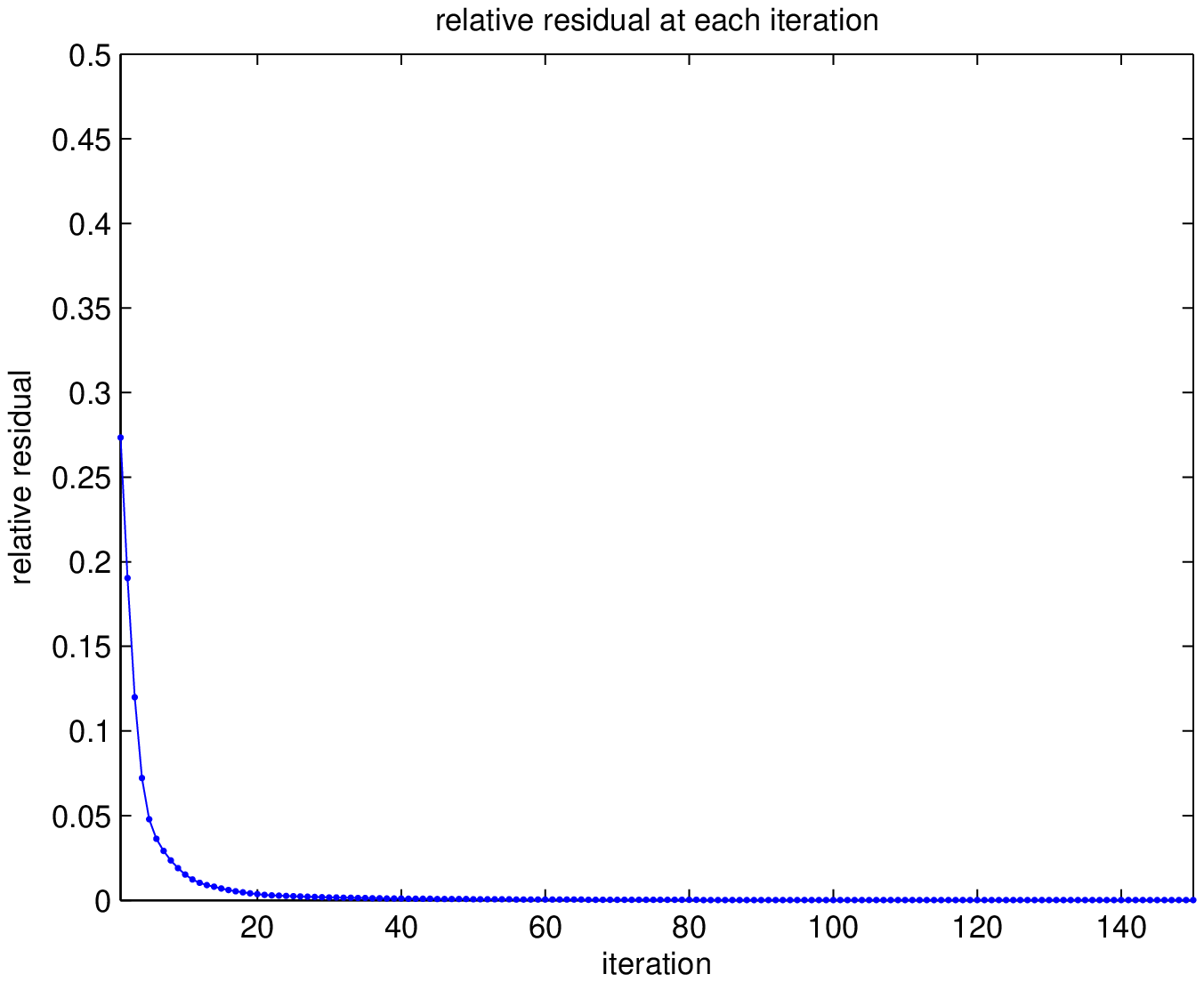}}
             %%%%%%%%%%%%%%%%%%%%%%%%%%%%%%%%%
                       %   \subfigure[]{
       %  \includegraphics[width = 6.5cm]{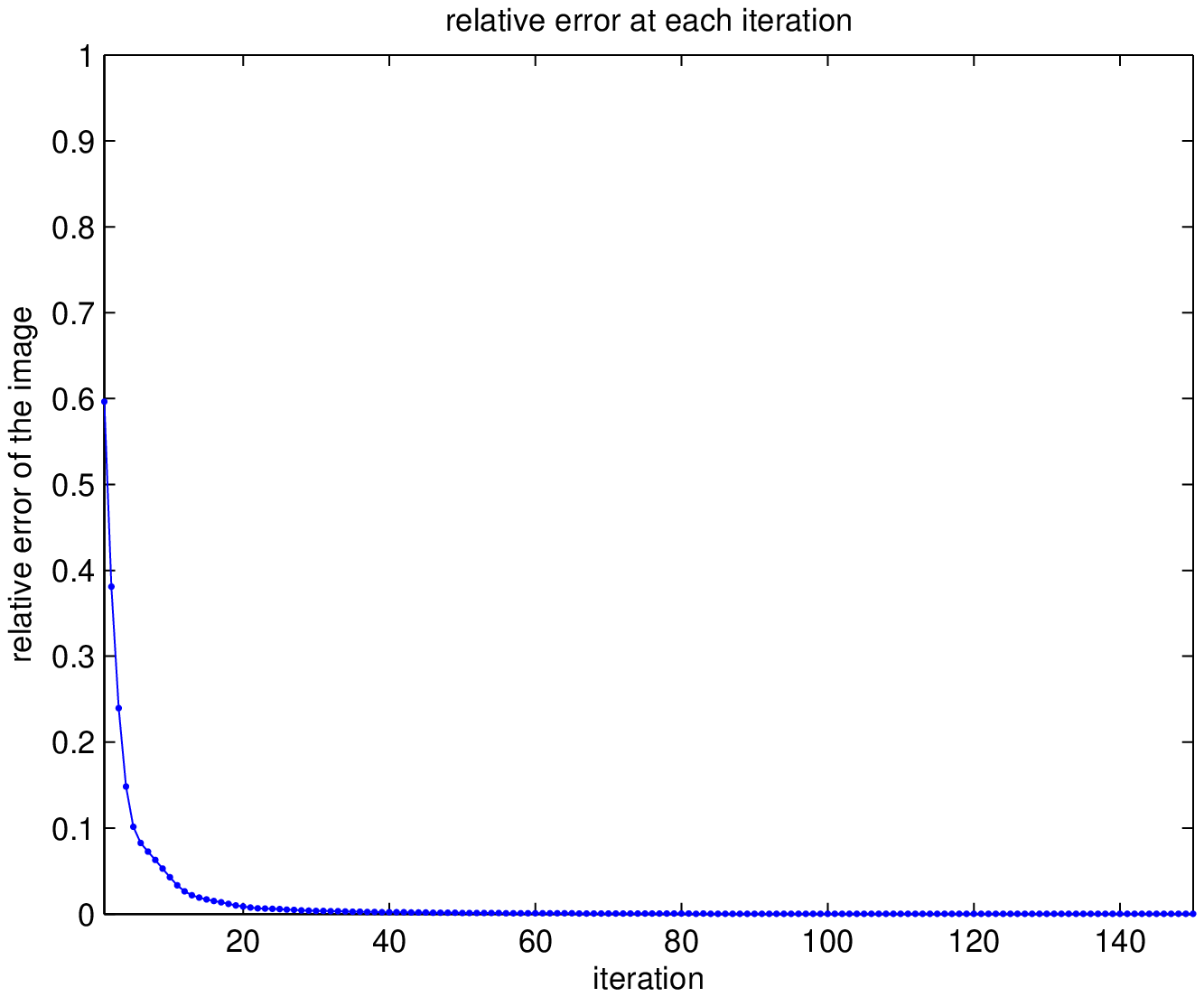}}

     \caption{HIO reconstruction with random phase illumination: (a) recovered image (b) difference between the true and recovered images (c) relative change  (e) relative residual versus number of iterations. The final 50 iterations are ER.}
     \label{fig4}
     \end{figure}

\begin{figure}[hthp]
  \centering
             \subfigure[]{
         \includegraphics[width = 6.5cm]{ERstudy2/Cameraman1PhaseIn1by10PaddingSample4Noise0ERErr.eps}}
                    \subfigure[]{
       \includegraphics[width = 6.5cm]{ERstudy2/Cameraman1PhaseIn1by10PaddingSample4Noise0HIOErr.eps}}
     \caption{ Relative error $\|f_k-f\|/\|f\|$  with (a) ER and (b) HIO versus number of iterations.}
     \label{fig6}\end{figure}

\begin{figure}[hthp]
  \centering
  \subfigure[]{
         \includegraphics[width = 5cm]{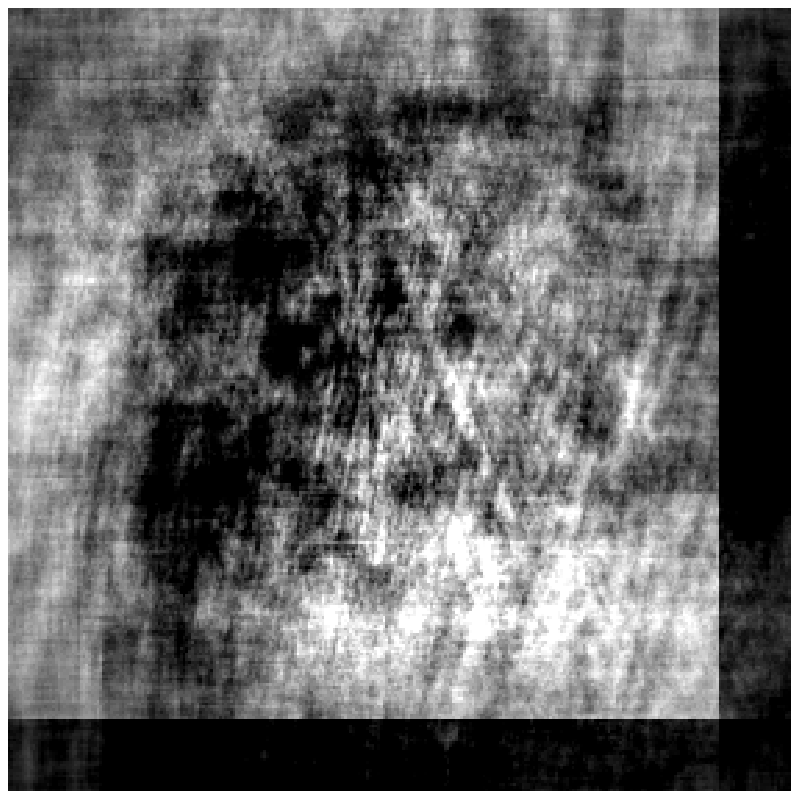}}\hspace{1.3cm} 
             %%%%%%%%%%%%%%%%%%%%%%%%%%%%%%%%%
\subfigure[]{
       \includegraphics[width = 5cm]{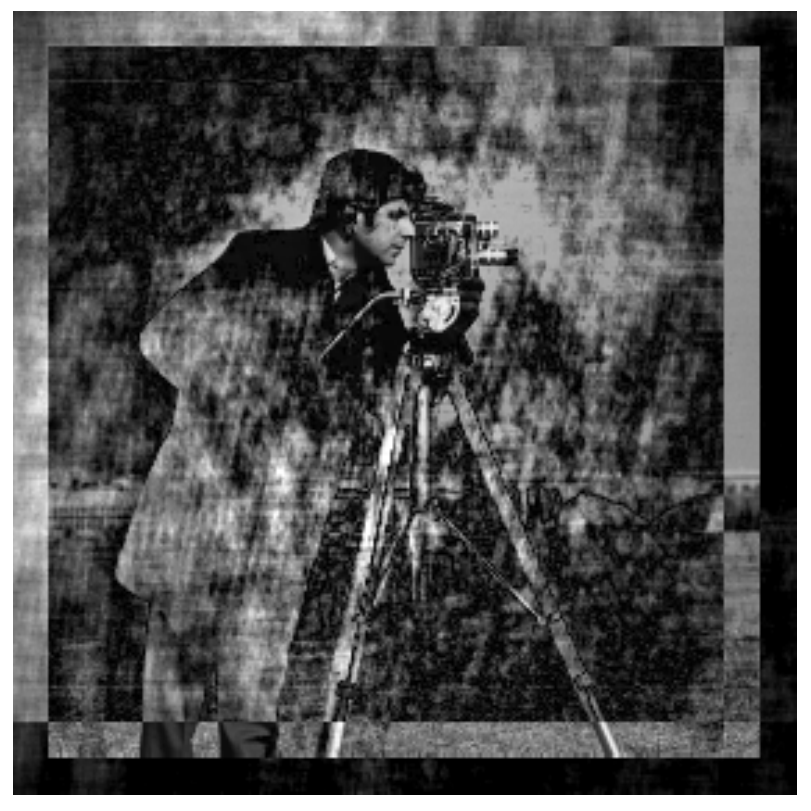}}\\
             %%%%%%%%%%%%%%%%%%%%%%%%%%%%%%%%%
  \subfigure[]{
    \includegraphics[width = 6.5cm]{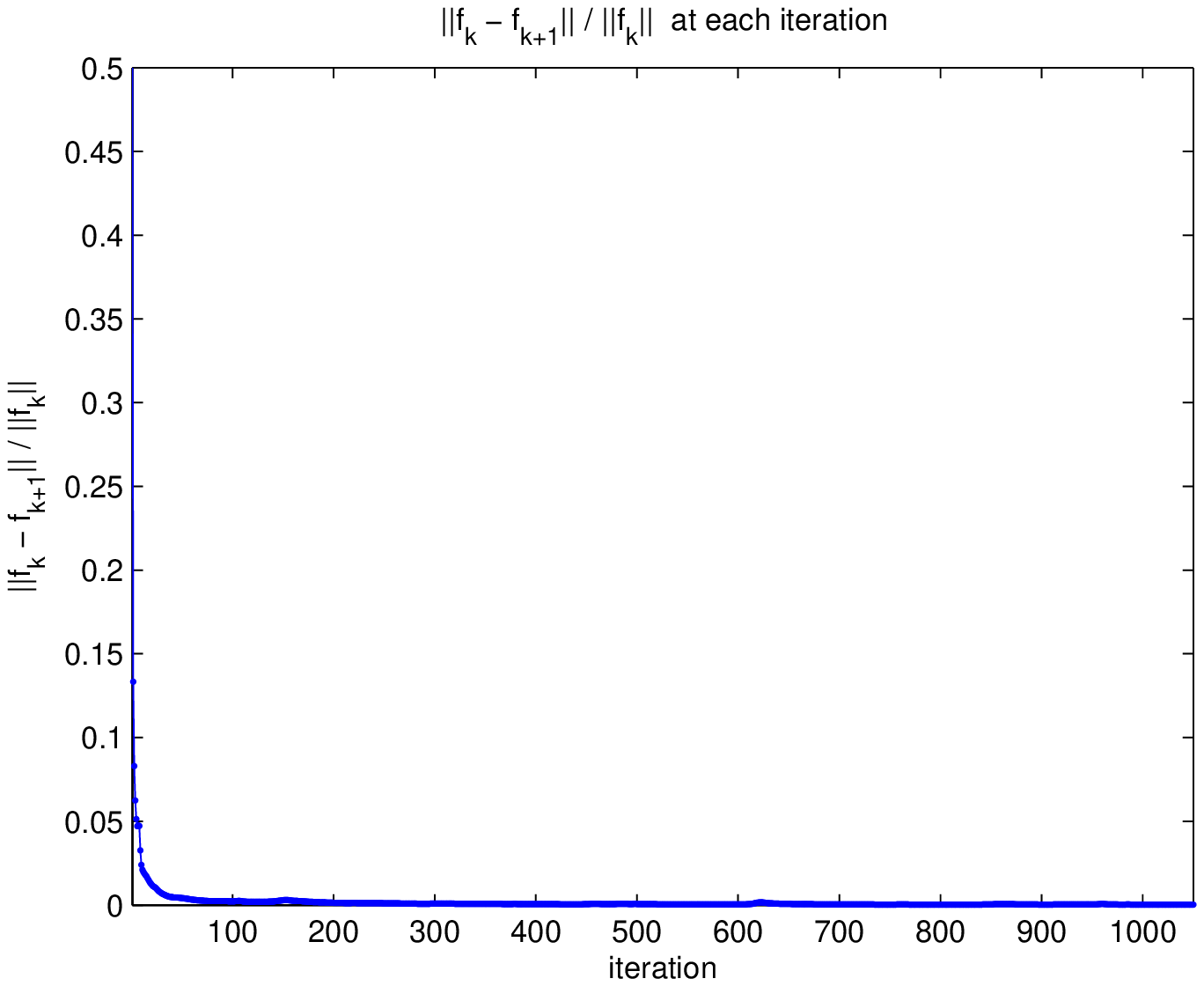}}
        %%%%%%%%%%%%%%%%%%%%%%%%%%%%%%%%%
          \subfigure[]{
         \includegraphics[width = 6.5cm]{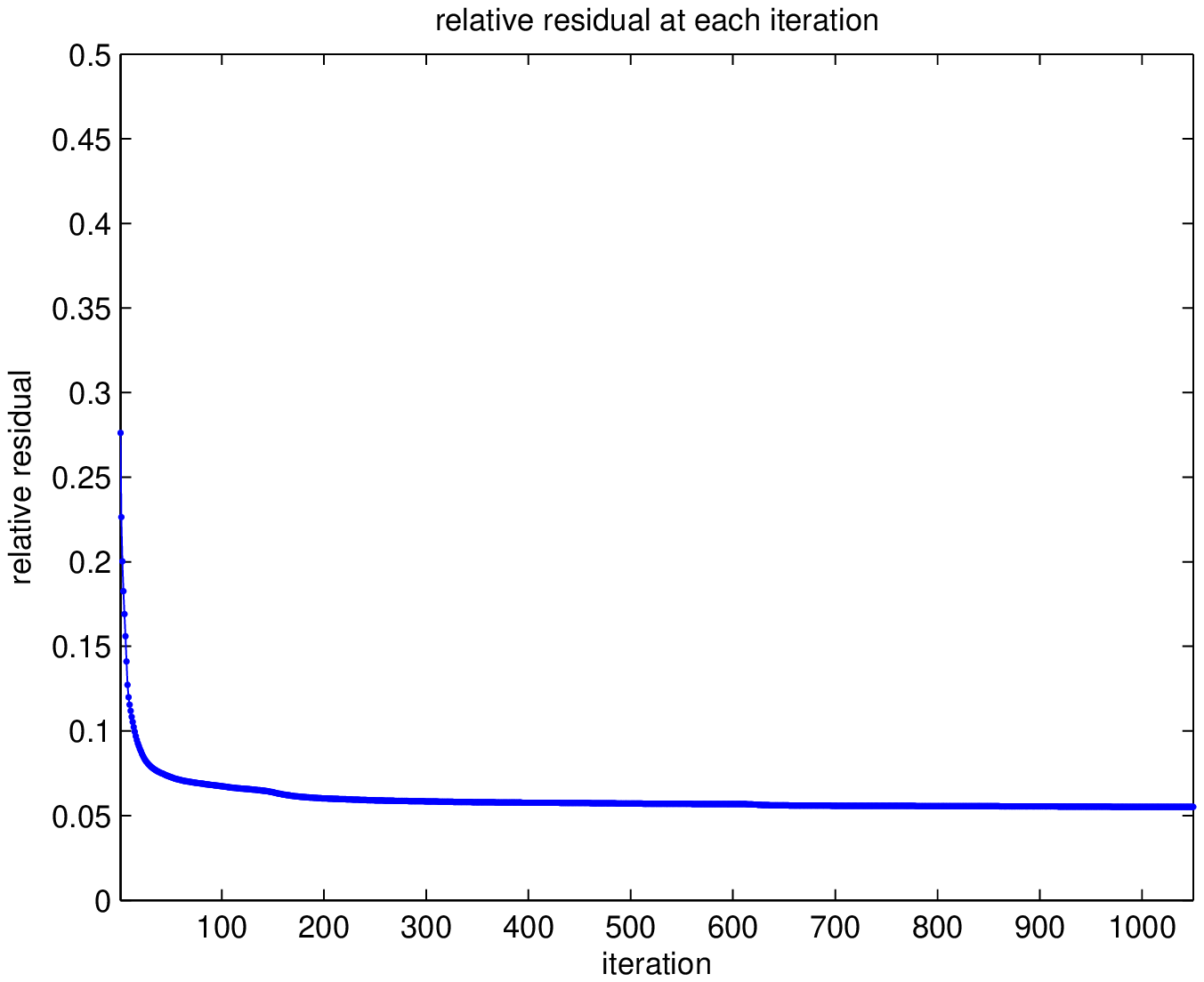}}
             %%%%%%%%%%%%%%%%%%%%%%%%%%%%%%%%%
                        %  \subfigure[]{
        % \includegraphics[width = 6.5cm]{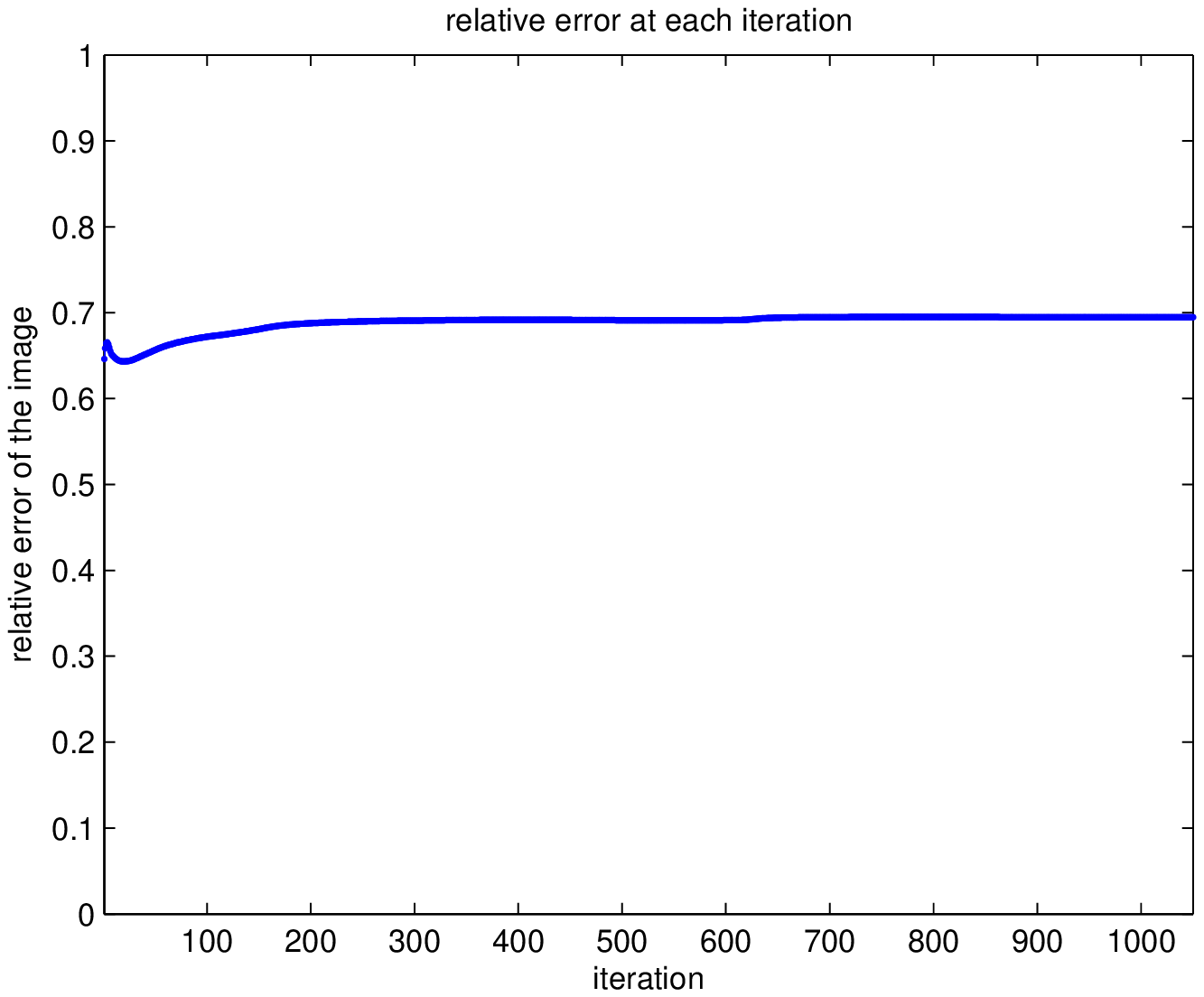}}
     \caption{ER reconstruction with uniform illumination:  (a) recovered image (b) difference between the true and recovered images (c) relative change (d) relative residual versus number of iterations.}
     \label{fig3}
     \vspace{0cm}
\end{figure}

\begin{figure}[hthp]
  \centering
  \subfigure[]{
         \includegraphics[width = 5cm]{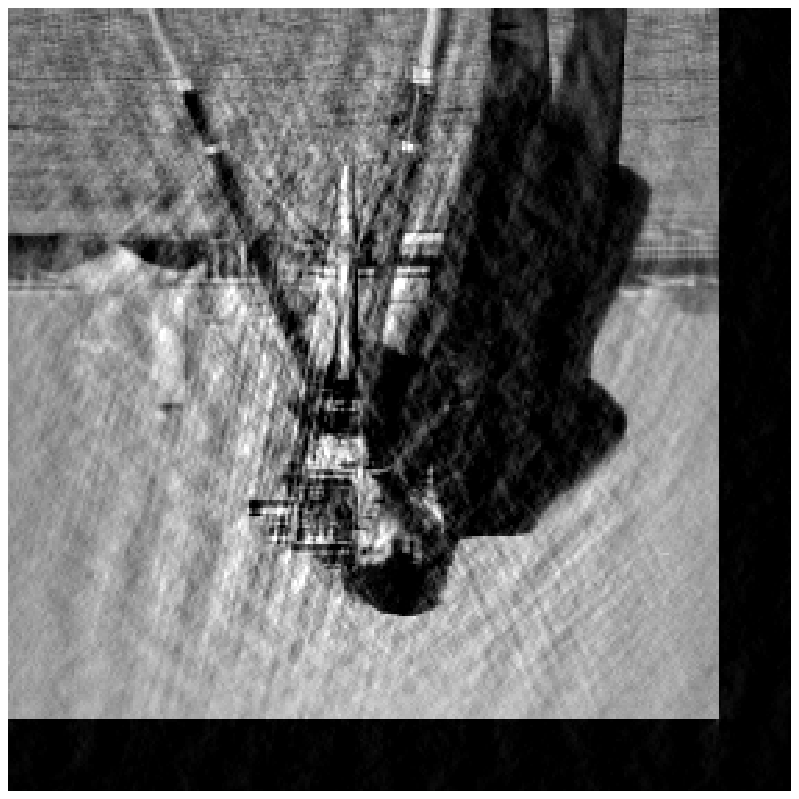}}\hspace{1.3cm}
             %%%%%%%%%%%%%%%%%%%%%%%%%%%%%%%%%
  \subfigure[]{
         \includegraphics[width = 5cm]{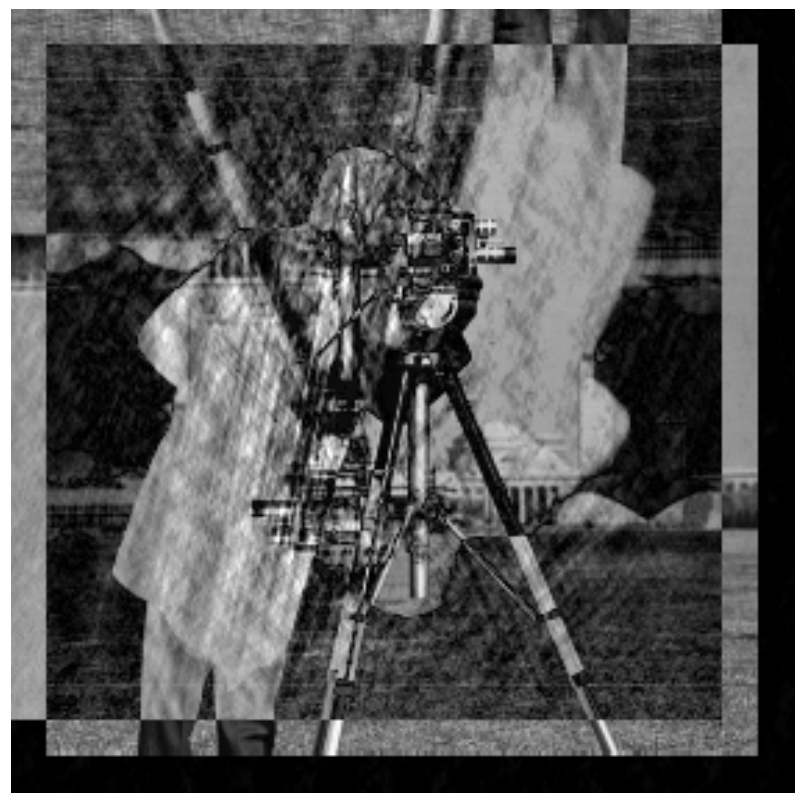}}\\
             %%%%%%%%%%%%%%%%%%%%%%%%%%%%%%%%%
  \subfigure[]{
    \includegraphics[width = 6.5cm]{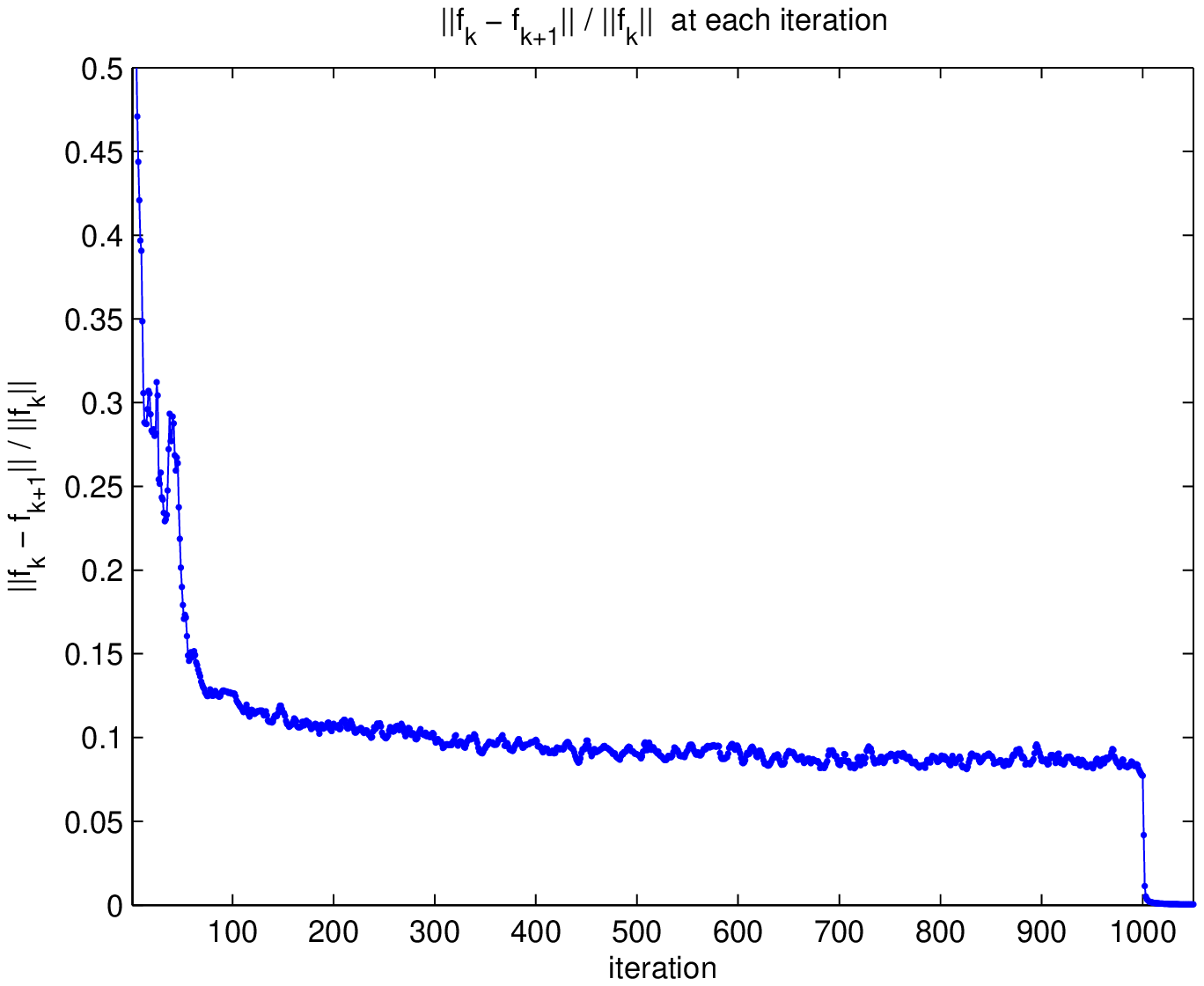}}
        %%%%%%%%%%%%%%%%%%%%%%%%%%%%%%%%%
          \subfigure[]{
         \includegraphics[width = 6.5cm]{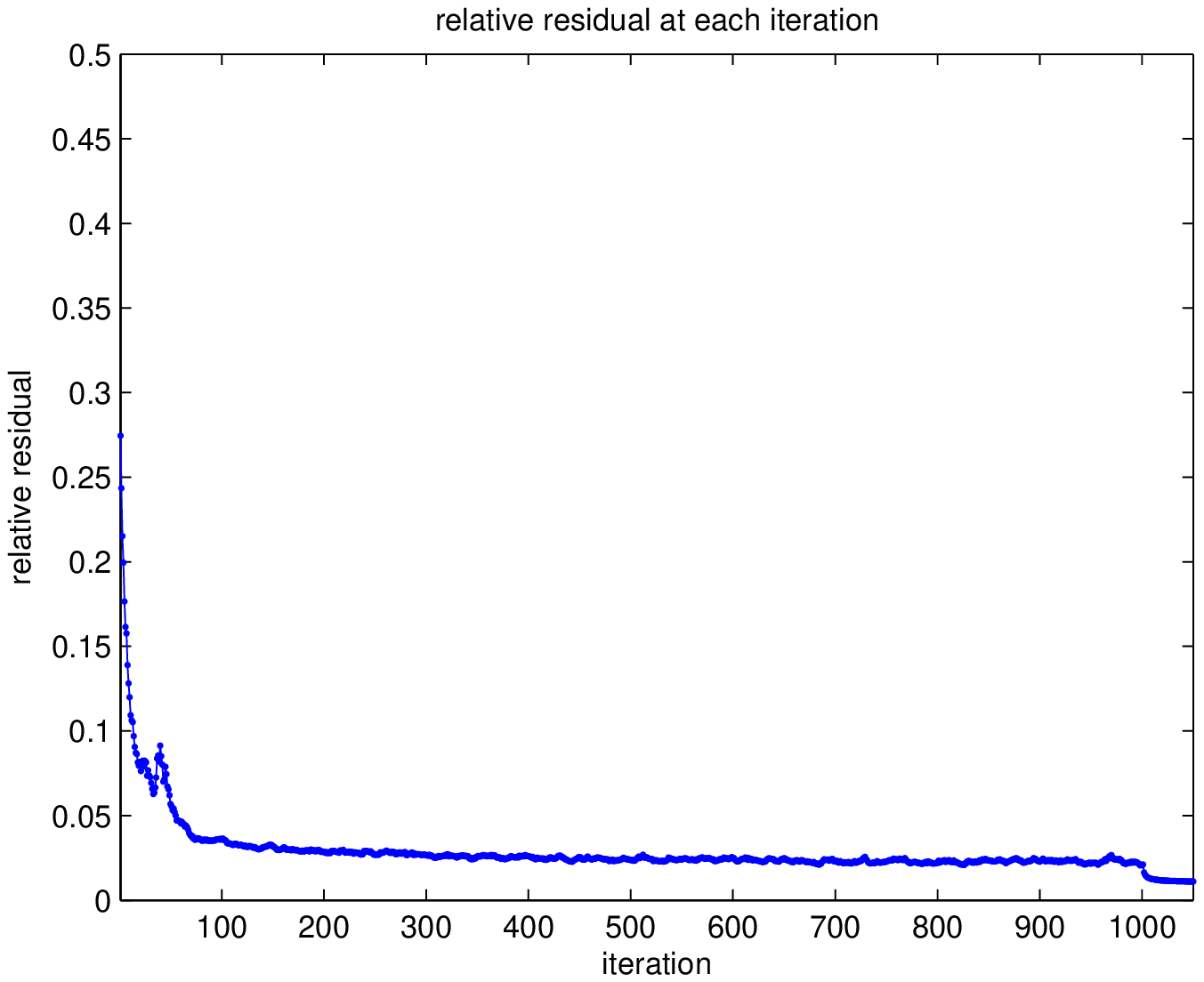}}
             %%%%%%%%%%%%%%%%%%%%%%%%%%%%%%%%%
                      %    \subfigure[]{
        % \includegraphics[width = 6.5cm]{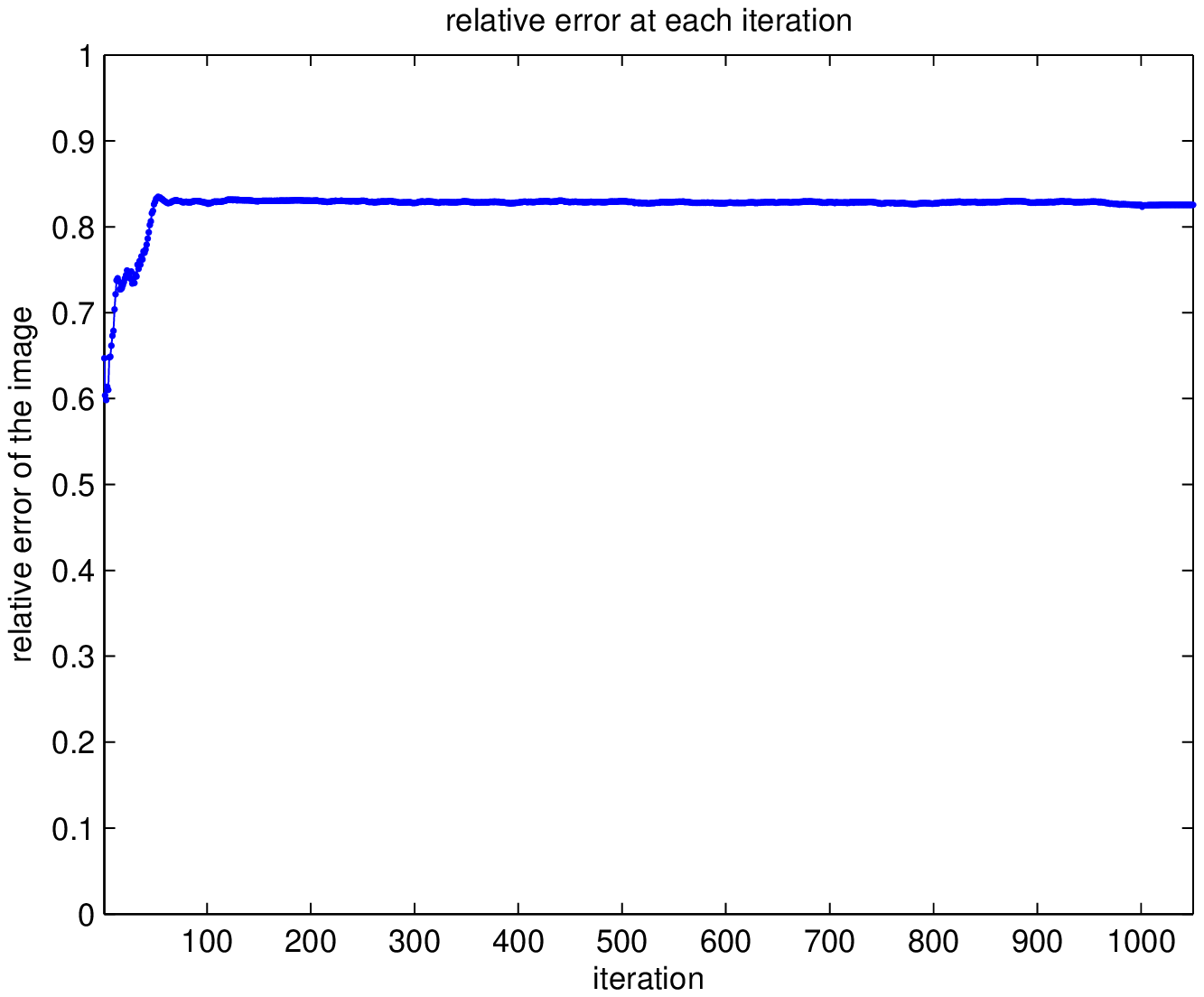}}
     \caption{HIO reconstruction: (a) recovered image (b) difference between the true and recovered images (c)  relative change (d) relative residual versus number of iterations. The final 50 iterations are ER,  causing a dip in the relative change and residual.}
     \label{fig5}
\end{figure}

\begin{figure}[hthp]
  \centering
                       \subfigure[]{
      \includegraphics[width = 6.5cm]{ERstudy2/Cameraman1Uniform0PaddingSample4Noise0ERErr.eps}}
                           \subfigure[]{
        \includegraphics[width = 6.5cm]{ERstudy2/Cameraman1Uniform0PaddingSample4Noise0HIOErr.eps}}
     \caption{Relative error with (a) ER and (b) HIO versus number of iterations.}
     \label{fig7}
\end{figure}

 Our previous
numerical study \cite{FL} and 
the following numerical examples give a glimpse  of how  the quality and efficiency of reconstruction  can be improved by random   illumination.  
%Random phase illumination can be
%implemented by positioning a phase diffuser close to the object. 

We test the case of random {\em phase} illumination
on a real, positive $269\times 269$ image consisting of the original $256\times 256$  Cameraman in the middle,  surrounded by a black margin (zero padding) of 13 pixels in width (Figure \ref{fig2}(a)).
We synthesize and sample the Fourier magnitudes at the Nyquist rate (Remark \ref{rmk1}) and  implement the standard Error Reduction (ER) and Hybrid-Input-Output (HIO)   algorithms   
%followed by $10$ steps of error-reductions \cite{Fie}
in the framework of  the oversampling
method \cite{MS, MSC}. 
By Corollary \ref{cor}, 
absolute uniqueness holds with a random phase illumination. 
 
\commentout{
To test how many Fourier magnitude data are needed for phasing, we define the sampling rate 
\[
\rho=\# \hbox{ Fourier magnitude data}/\# \hbox{ image pixels}.
\]
%\[
%\rho=\# \,\,\hbox{(independent) Fourier magnitude data}/\hbox{Image's (real) degrees of freedom}
%\]
%which is different from the definition given in \cite{FL}. 
%The real degrees of freedom are
%$2|\cN|$ for a complex-valued image and
%$|\cN|$ for a real-valued image. 
The uniqueness results above are established for  $\rho=4$ (in two dimensions). 
}
\commentout{Notice that in the case
of a real-valued object (cf. Corollary \ref{cor}) the
sampling grid $\cM$ can be reduced by half 
because of the symmetry $F^*(1-\bw)=F(\bw),\forall \bw\in \cM$.} 
%We implement the standard Error Reduction (ER) and Hybrid-Input-Output (HIO)   algorithms   
%followed by $10$ steps of error-reductions \cite{Fie}
%in the framework of  the oversampling
%method \cite{MS, MSC}. 
%which converts the Fourier magnitude data with $\rho>1$ into a support constraint to reduces the ambiguity of spatial shift (but not the twin image). 

Let $\bPhi$ and $\Lambda$ be the Fourier transform
and the diagonal matrix $\hbox{\rm diag}{[\lambda(\bn)]}$  representing the illumination.
For the uniform illumination $\Lambda=\bI$. 
The ER and HIO algorithms are described below.
\vspace{-0cm}
\begin{center}
   \begin{tabular}{l}
   \hline  
   \centerline{{\bf ER algorithm}}  \\ \hline
 Input: Fourier magnitude data $\{\widetilde F(\bw)\}$, initial guess $f_0$.\\   
  %While $\|f_{k+1}-f_k\|/\|f_k\|\geq \ep, k=0,1,2,\cdots$,\\
  Iterations:\\
\quad $\bullet$  Update Fourier phase: $G_k= \bPhi \Lambda f_k= |G_k(\bw)| e^{i \theta_k(\bw)}$. \\
\quad $\bullet$  Update Fourier magnitude:  $G'_k(\bw) = |\widetilde F(\bw)|e^{i \theta_k(\bw)}$. \\
\quad $\bullet$ Impose object constraint $f_{k+1}(\bn) = \left \{ 
\begin{array}{ll} f'_k(\bn) &   \text{  if $f'_k(\bn)=\Lambda^{-1}\bPhi^* G'_k(\bn)\geq 0$} \\
0 & \text{ otherwise}
\end{array} 
\right. $\\
%Otherwise, output $f_{k+1}$.\\
    \hline
   \end{tabular}
\end{center}

%\vspace{1cm}

In the original version of HIO \cite{Fie}, the hard thresholding is replaced by 
\[
f_{k+1}(\bn) = \left \{ 
\begin{array}{cl} f'_k(\bn) &   \text{if $f'_k(\bn)=\Lambda^{-1}\bPhi^* G'_k(\bn)\geq 0$} \\
 f_k(\bn)-\beta f'_k(\bn) & \text{otherwise}
\end{array} 
\right. 
\]
where the feedback parameter $\beta=0.9$ is used in the simulations. ER has the desirable property that the residual 
$\||\widetilde F|-|G_k|\|$ is reduced after each iteration under either 
uniform or random phase illumination \cite{Fie,FL2}. When absolute uniqueness holds, a vanishing residual then implies a vanishing reconstruction error. 
%But we do not yet have the guarantee of vanishing residual for  ER or HIO.
\commentout{
We use $\ep=0.01\%$ in the stopping rule for ER.  For HIO + ER, HIO is stopped with $\ep=1\%$ with a maximal $1000$ iterations and ER is terminated when $\|f_{k+1}-f_k\|/\|f_k\|<0.01\%$.
}

\commentout{
\begin{figure}[!t]
  \centering
 \subfigure[]{
    \label{Phantom} %% label for first subfigure
    \includegraphics[width = 2in]{figures/Phantom.pdf}}
         \subfigure[]{
    \label{PhantomSample4} %% label for second subfigure
    \includegraphics[width = 2in]{figures/PhantomSample4.pdf}}
    %    \subfigure[]{
   % \label{CameramanSample2} %% label for second subfigure
   % \includegraphics[width =1.5in]{figures/CameramanSample2.pdf}} 
            \subfigure[]{
    \label{PhantomRISample1dot1} %% label for second subfigure
  \includegraphics[width = 2in]{figures/PhantomRISample1dot1.pdf}}
    %%%%%%%%%%%%%%%%%%%%%%%%%%%%%%%%%%%%%%%
  \caption{(a) The original object and reconstructions with (b)   uniform   illumination, $\rho  = 4$,
 relative error $ 115.1\%$, relative Fourier magnitude residual  $4.3\%$  and (c)  random   phase illumination,  $\rho=1.1$, relative error $0.17\%$, relative Fourier magnitude residual  $0.04\%$ (adapted from \cite{FL2}).
 % (d) constant  illumination,  $\rho = 2$.
  }
  \label{RealPositivePhantom} %% label for entire figure
  \label{fig1}
\end{figure}
}

Figure \ref{fig2} shows the results of ER reconstruction
with random phase illumination. The ER iteration converges to the true
image after 40 iterations. 
 For HIO reconstruciton  we apply 50 ER iterations
after 100 HIO iterations as suggested in \cite{Mar2}. 
HIO has  essentially the same
performance as  ER (Figure \ref{fig4} (a), (b)). 
The relative residual curve, Figure \ref{fig4}(d), and
the relative error curve, Figure \ref{fig6}, however,
indicate a small improvement by HIO.
The close proximity between the vanishing residual
curve and the vanishing error curve for ER and HIO
reflects the absolute uniqueness under random illumination.

With uniform illumination, ER produces a poor result 
(Figure \ref{fig3} (a)), resulting a $72.8\%$ error (Figure \ref{fig3} (b)) after more than 1000 iterations. The relative change
curve, Figure \ref{fig3}(c), indicates stagnation or convergence to a fixed point after 100 iterations and the relative residual plot, Figure \ref{fig3}(d),
shows non-convergence to the true image.  For HIO reconstruction,  
 we augment it with
50 ER iterations at the end of  1000 HIO iterations. While
HIO improves the performance of ER but still leads to a shifted, inverted image which is
 also severely distorted  (Figure \ref{fig5} (a), (b)).  The ripples and stripes in Figure \ref{fig5} (a) are a well known
 artifact of HIO reconstruction \cite{Fie, FW}.
 As expected,  HIO reduces the residual and does not
stagnate as much as ER (Figure \ref{fig5} (c), (d)) but
 its error is greater than that of ER 
due  to the interferences from shifted and twin images 
 present under the uniform illumination (Figure \ref{fig7}).

\commentout{
With {\em single} random phase illumination and the sampling ratio $\rho=1.1$, mere 100 HIO iterations 
are sufficient to achieve accurate recovery (Figure \ref{Cameraman} (b)).
In contrast, with uniform  illumination and
the sampling ratio $\rho = 4$,  $3000$ HIO iterations do not deliver a
comparable recovery. The recovered  image is contaminated with an artificial pattern of stripes (Figure \ref{Cameraman} (c)). 
When the sampling ratio is reduced to $\rho=2$ in the case
of uniform illumination, the HIO reconstruction 
is catastrophic, partly due to the interference of the
twin image (Figure \ref{Cameraman} (d)).
}

\commentout{
Let $\cN$ be the square (in the case of the phantom) or rectangular (in the case of Picasso's  bull) frame of image. A tight  image  has a tightly  defined support (i.e. $\cN$)
while a loose  image has a loosely defined support
(i.e. a proper subset of $\cN$). Objects with a loosely defined  support is more challenging to reconstruct. 

In both cases, the reconstruction error is high ($13\%$ for the bull and $115.1\%$ for the phantom) but the residual is
low ($0.37\%$ for the bull and $0.04\%$ for the phantom)
indicating the iterative process has more or less converged.
Hence the reconstruction error should be attributed to
the lack of uniqueness  
rather than the lack of convexity of phasing with uniform illumination. 

In principle, the stagnation problem (large number of iterations) may be  due to the lack of  convexity or
uniqueness. But consider Figure \ref{fig2}(c) and
 Figure \ref{fig1}(c): 
With just a { single} random phase illumination,
both problems with stagnation and error disappear and 
phasing with {\em  100 HIO iterations 
and  $\rho=1.1$}  
achieves  accurate, high-quality  recovery. 
}

To summarize, under a random phase illumination,
the problems of stagnation and error disappear and 
phasing with ER/HIO achieves  accurate, high-quality  recovery. 
These experiments confirm our belief that a central barrier to
stable and accurate phasing by the standard methods is the lack of absolute uniqueness.

\commentout{
\begin{figure}[!t]
  \centering
 \subfigure[]{
    \label{Cameraman} %% label for first subfigure
    \includegraphics[width = 2in]{figures/Cameraman.pdf}}
         \subfigure[]{
    \label{CameramanSample4} %% label for second subfigure
    \includegraphics[width = 2in]{figures/CameramanSample4.pdf}}
    %    \subfigure[]{
   % \label{CameramanSample2} %% label for second subfigure
   % \includegraphics[width =1.5in]{figures/CameramanSample2.pdf}} 
            \subfigure[]{
    \label{CameramanRISample1dot1} %% label for second subfigure
  \includegraphics[width = 2in]{figures/CameramanRISample1dot1.pdf}}
    %%%%%%%%%%%%%%%%%%%%%%%%%%%%%%%%%%%%%%%
  \caption{(a) The original object and reconstructions with (b)   uniform   illumination, $\rho  = 4$,
 relative error $\approx  4.36\%$ and (c)  random   phase illumination,  $\rho=1.1$, relative error $\approx 2.61\%$  (adapted from \cite{FL}).
 % (d) constant  illumination,  $\rho = 2$.
  }
  \label{RealPositiveCameraman} %% label for entire figure
  \label{fig1}
\end{figure}
}

\section{Conclusions}\label{sec5}
In conclusion, we have proposed random illumination to
address the uniqueness problem of  phase retrieval.  
For general random illumination we have  proved
almost sure irreducibility for {\em any } complex-valued object 
whose support has rank $\geq 2$ (Theorem \ref{thm:new}). We have proved the almost sure  uniqueness, up to a global phase,  under the two-point assumption (Theorem \ref{cor1}).   The absolute uniqueness is then enforced by the positivity constraint (Corollary \ref{cor}).  Under the tight  sector constraint, we have proved the absolute uniqueness 
with probability exponentially close to unity as the object sparsity increases  (Theorem \ref{thm4}).  Under the magnitude constraint, we have proved uniqueness up to
a global phase with probability exponentially close to unity (Theorem \ref{thm6}).  For general complex-valued objects without any constraint, we have established almost sure uniqueness modulo global phase with two independent  illuminations
(Theorem \ref{thm5}). 

Numerical experiments  reveal that
phasing with random illumination drastically reduces 
 the reconstruction error, the number of Fourier magnitude data and removes the stagnation 
 problem commonly associated with the ER and HIO
 algorithms.  
 %This indicates that the numerical phasing  problems are largely attributable  to the lack of absolute uniqueness. 
 Enforcement of absolute uniqueness therefore appears to have a profound effect on the performance of the standard phasing algorithms.

%Practical implementation of our approach demands precise maneuver of illumination which can be expected to realize with advances  of  technology.    
Systematic  and detailed study of
phasing   in the presence of (additive or multiplicative) 
noise with low-resolution random illuminations and
sub-Nyquist sampling rates  will be presented in
the forthcoming paper \cite{FL2}. 
\commentout{
Random amplitude illumination such as Gaussian or binary mask
has been used for phase retrieval in \cite{CES}.  Our numerical
experiments \cite{FL} have shown  a clear advantage of
random phase illumination over  the Gaussian mask
in reconstruction by the standard methods such as the Error-Reduction algorithm and  Hybrid Input-Output
algorithm \cite{Fie, Mar2}. 
}

%On the issue of convexity on the other hand,
%there have been recent attempts to formulate phase retrieval
%as a convex optimization problem \cite{CES, CMP}. 
%These approaches, however promising,  require a lot more Fourier magnitude data
%and computational resources.  

\commentout{
Finally, regarding noise stability, the set of reducible polynomials of more than one variable is contained in a nontrivial algebraic set
and its topological closure has zero measure \cite{San2}. 
 This result can be generalized to our setting with complex-valued objects and  random illumination,  and implies the stability of the 
irreducibility result  with respect to noise in the data. Numerical
experiments \cite{FL2} have indeed shown  an enhanced level
of noise stability in phasing  with random illumination. 
}

% phase retrieval is  ill-conditioned in the sense that  an admissible data vector perturbed by a small amount of random noise  is almost surely not admissible.

\commentout{
The previous remarks apply, in particular, to discrete Fourier transform phase retrieval from magnitude if the dimension of the problem is larger than or equal to 2. These new results concerning the conditioning of this problem indicate that the local ill-conditioning shown above should be overcome by means of some regularization technique.
}

\commentout{
This problem can be dealt as follows. Let $\bx$ be
the object vector $\{f(\bn)\}$ according to certain ordering of the lattice $ \cN$ and let $\bX=\bx\bx^*$. Let $\bA=[\Psi_{jk}]$ with
\[
\Psi_{jk}=\exp{\lt[- {2\pi\im jk\over (N_1+1)(N_2+1)}\rt]}
\]
 be the oversampled Fourier matrix. Then the Fourier magnitude
 data vector consists of  the diagonal elements of the matrix
 $\bA \bX\bA^*$, i.e. 
 \beq
 \label{17}
 \by=\hbox{diag}\Big(\bA \bX\bA^*\Big).
 \eeq
 We shall symbolically write (\ref{17}) as $\by= {\mathscr A}\bX$. 
 Suppose $\by$ is perturbed by a noise vector  $\bh$ of
 size $\ep$, i.e. $\|\bh\|_2\leq \ep$. Since the noisy data vector $\bz=\by+\bh$ is almost surely not feasible, we consider
 the enlarged feasible set
 \beq
 \label{18}
\Big\{\bX: \bX\,\, \hbox{is a rank one matrix such that }\,\,  \|{\mathscr A} \bX-\bz\|_2\leq \ep\Big\}
 \eeq
 which contains at least one element.
 }

\appendix
\section{Proof of Theorem  \ref{thm:new}}
Our argument is based on \cite{Kup, Oss} and
can be extended to the case of more than two independent variables. 
For simplicity of notation, we present the proof for the case of two independent variables. 
\begin{proof}
First we state an elementary result from algebraic geometry
(see, e.g., \cite{Ueno}, page 65). 
\begin{proposition}
If a homogeneous polynomial $P(z_0,z_1,z_2)$ of (total) degree $\delta\geq 2$ is irreducible, then
$ P^\flat(z_1,z_2)\equiv P(1,z_1,z_2)$ is also irreducible with degree $\delta$.  
\label{prop1}
\end{proposition}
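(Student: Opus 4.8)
The plan is to connect $P^\flat$ to $P$ through the two inverse operations of dehomogenization (setting $z_0=1$) and homogenization with respect to $z_0$, and then to transport any nontrivial factorization of $P^\flat$ upward to a nontrivial factorization of $P$. First I would pin down the degree. Writing $P=\sum_{i+j+k=\delta}c_{ijk}z_0^iz_1^jz_2^k$, the substitution $z_0=1$ sends $z_0^iz_1^jz_2^k$ to $z_1^jz_2^k$, whose degree $j+k=\delta-i$ is maximal (equal to $\delta$) precisely for the terms with $i=0$. Such terms are present if and only if $z_0\nmid P$. But $z_0$ is irreducible, so $z_0\mid P$ would force $P=z_0\,Q$ with $\deg Q=\delta-1\geq 1$, contradicting the irreducibility of $P$ — and this is exactly where the hypothesis $\delta\geq 2$ enters. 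Hence $z_0\nmid P$ and $\deg P^\flat=\delta$.

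For the homogenization $A\mapsto A^\sharp$, defined by $A^\sharp(z_0,z_1,z_2)=z_0^{\deg A}A(z_1/z_0,z_2/z_0)$, I would record two elementary facts. First, it is multiplicative, $(AB)^\sharp=A^\sharp B^\sharp$, which holds because $\mathbb{C}[z_1,z_2]$ is an integral domain and therefore $\deg(AB)=\deg A+\deg B$. Second, because $\deg P^\flat=\delta$ by the previous step, homogenizing $P^\flat$ recovers $P$ on the nose: $(P^\flat)^\sharp=\sum_{i+j+k=\delta}c_{ijk}z_0^{\delta-(j+k)}z_1^jz_2^k=\sum_{i+j+k=\delta}c_{ijk}z_0^iz_1^jz_2^k=P$, using $\delta-(j+k)=i$. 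Both are one-line computations.

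The irreducibility then follows by contradiction. Suppose $P^\flat=AB$ with $\deg A,\deg B\geq 1$. Applying $(\cdot)^\sharp$ and using the two facts above gives $P=(P^\flat)^\sharp=A^\sharp B^\sharp$, where $A^\sharp$ and $B^\sharp$ are homogeneous of positive degrees $\deg A$ and $\deg B$. This is a nontrivial factorization of $P$, contradicting its irreducibility; hence $P^\flat$ admits no such factorization and is irreducible, of degree $\delta$.

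The only genuinely delicate point is the degree statement, specifically ruling out $z_0\mid P$: if the degree dropped under the substitution $z_0=1$, the recovery identity $(P^\flat)^\sharp=P$ would fail and a factorization of $P^\flat$ could not be lifted back to $P$. Once that is secured, the remainder is just the multiplicativity of homogenization over a domain, so I expect the argument to be short and self-contained.
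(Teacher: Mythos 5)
Your proof is correct. There is nothing in the paper to compare it against line by line: the paper does not prove Proposition \ref{prop1}, but quotes it as a standard fact from algebraic geometry with a citation to Ueno's textbook. Your argument is the standard, complete one, and it is sound at every step: the hypothesis $\delta\geq 2$ together with irreducibility of $P$ rules out $z_0\mid P$ (this is exactly the point where the statement could fail --- for $\delta=1$ the polynomial $P=z_0$ is irreducible yet dehomogenizes to a constant), which guarantees $\deg P^\flat=\delta$ and hence the recovery identity $(P^\flat)^\sharp=P$; multiplicativity of homogenization, valid because total degree is additive over an integral domain, then lifts any factorization $P^\flat=AB$ with $\deg A,\deg B\geq 1$ to $P=A^\sharp B^\sharp$ with both factors homogeneous of positive degree, contradicting irreducibility of $P$. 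What your write-up buys is self-containedness: it makes explicit both where $\delta\geq 2$ enters and why the degree cannot drop under dehomogenization, the two details that the paper delegates entirely to the reference; the paper's choice of a citation, in turn, keeps its appendix focused on the genuinely new content (Propositions \ref{prop2} and \ref{prop3}).
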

For a polynomial $Q(z_1,z_2)$ of degree $\delta$,
the expression 
\beq
\label{hom}
Q^\sharp(z_0,z_1,z_2)=z_0^\delta Q({z_1\over z_0}, {z_2\over z_0})
\eeq
defines a homogeneous polynomial of degree $\delta$
with the property $Q(z_1,z_2)=Q^\sharp(1,z_1,z_2)$. 
The process from $Q$ to $Q^\sharp$  is
called homogenization while the reverse process is
called dehomogenization. 
Homogenization, in conjunction with Proposition \ref{prop1}, is a useful tool for studying the question of
irreducibility. 

Let $\Sigma\subset \IN^2$ be a given support set satisfying
the assumptions of Theorem \ref{thm:new}.  We now show
that almost all homogeneous polynomials of 3 variables 
with the support
\beq
\label{102}
\Sigma^\sharp=\{(\delta-n_1-n_2, n_1, n_2): \bn=(n_1,n_2)\in \Sigma\}
\eeq
 are irreducible.  
 
We represent the set of all homogeneous polynomials
of degree $\delta$ by the projective space $\IP^{\nu_\delta}$
of dimension $\nu_\delta=\big({2+\delta\atop \delta}\big)-1$.
Each homogeneous coordinate of $\IP^{\nu_\delta}$ represents a  monomial of degree $\delta$. 
The homogeneous polynomials supported on $\Sigma^\sharp$ 
are  represented by  the projective subspace
\[
\IX=\{P\in \IP^{\nu_\delta}: p(\bn)=0, \forall \bn\not\in \Sigma^\sharp\}
\]
where $\{p(\bn)\}$ are the coefficients of $P$. 
Clearly $\IX$ is isomorphic to the projective space
 $\IP^{s}$. 
 Let $\IY\subset \IX$ denote the set of reducible
 homogeneous polynomials supported on $\Sigma^\sharp$. 
We claim (cf. \cite{Sha}, page 47)
\begin{proposition}
$\IY$ is  a closed subset of $\IX$ in the Zariski topology. 
\label{prop2}
\end{proposition}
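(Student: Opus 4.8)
The plan is to exhibit the reducible locus as a finite union of images of multiplication maps between projective spaces of forms of lower degree, and then to invoke the completeness of projective space (equivalently, the main theorem of elimination theory): the image of a projective variety under a morphism is Zariski closed. Since $\IX$ is itself a linear, hence closed, subvariety of $\IP^{\nu_\delta}$, it suffices to show that the full reducible locus $\cR_\delta\subset\IP^{\nu_\delta}$ of all degree-$\delta$ forms is closed, for then $\IY=\IX\cap\cR_\delta$ is closed in the subspace topology on $\IX$.

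For each integer $a$ with $1\leq a\leq \lfloor\delta/2\rfloor$, let $\IP^{\nu_a}$ and $\IP^{\nu_{\delta-a}}$ denote the projective spaces of homogeneous polynomials in $z_0,z_1,z_2$ of degrees $a$ and $\delta-a$, where $\nu_a=\binom{2+a}{a}-1$. Because $\IC[z_0,z_1,z_2]$ is an integral domain, the product of two nonzero forms is nonzero, so multiplication
\[
\mu_a:\IP^{\nu_a}\times\IP^{\nu_{\delta-a}}\longrightarrow \IP^{\nu_\delta},\qquad ([Q],[R])\longmapsto [QR]
\]
is everywhere defined; its homogeneous coordinates are the coefficients of $QR$, which are bilinear in the coefficients of $Q$ and of $R$, so $\mu_a$ is a morphism of projective varieties. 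The source is projective (a product of projective spaces, embedded by Segre), so by completeness the image $\mathrm{im}\,\mu_a$ is Zariski closed in $\IP^{\nu_\delta}$.

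Finally I would assemble the pieces. Every reducible homogeneous polynomial of degree $\delta$ factors into two homogeneous factors of positive degree, since any irreducible factor of a form is again a form; hence $\cR_\delta=\bigcup_{a=1}^{\lfloor\delta/2\rfloor}\mathrm{im}\,\mu_a$ is a finite union of closed sets and therefore closed, and intersecting with $\IX$ yields the claim. The only substantive ingredient is the completeness theorem used to conclude that each $\mathrm{im}\,\mu_a$ is closed; this is the step I expect to be the crux, as the remaining points (forms factor into forms, products of projective spaces are projective, the coefficient map is a morphism, finite unions of closed sets are closed) are routine.
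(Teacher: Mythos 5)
Your proof is correct and follows essentially the same route as the paper's: both decompose the reducible locus according to the degrees of the two factors, realize each piece as the image of the bilinear multiplication morphism $\IP^{\nu_a}\times\IP^{\nu_{\delta-a}}\to\IP^{\nu_{\delta}}$, and invoke the closedness of images of projective varieties under regular maps before intersecting with $\IX$. The only cosmetic differences are that you form the union inside $\IP^{\nu_{\delta}}$ before intersecting with $\IX$ (the paper intersects each image with $\IX$ first) and that you spell out the routine check that the multiplication map is everywhere defined, which the paper leaves implicit.
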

A subset of a projective space 
 is closed in the Zariski topology  if and only if it is an algebraic variety, i.e. the common zero set 
 \[
 \{ F_1=F_2=\cdots =F_m=0\}
 \]
of  a finite number of homogeneous polynomials $F_1, \cdots, F_m$ of the homogeneous coordinates of the projective space. The Zariski topology is  much cruder than the metric
 topology. Indeed, a Zariski closed set is either
 the whole space or a measure-zero, nowhere-dense closed set in the metric topology as stated in the following (see, e.g.  \cite{Lan}, page 115).
 
 \begin{proposition}
 Any Zariski closed proper subset of a (real or complex) projective variety 
 has measure zero with respect to the standard measure on the projective variety. 
 \label{prop:zero}
 \end{proposition}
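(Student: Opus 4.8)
The plan is to reduce the statement to the classical fact that the zero set of a non-identically-vanishing real-analytic (in the real case) or holomorphic (in the complex case) function has Lebesgue measure zero, and then to transport that fact to the variety through local coordinate charts. First I would unwind the definition recalled just above: a Zariski closed subset $\IY\subseteq \IX$ is the common zero locus, restricted to $\IX$, of finitely many homogeneous polynomials $F_1,\dots,F_m$. Since $\IY$ is a \emph{proper} subset, not all $F_i$ can vanish identically on $\IX$ (otherwise $\IY=\IX$), so some $F_1$ does not, and $\IY\subseteq \{F_1=0\}\cap\IX$. Hence it suffices to prove that the zero locus on $\IX$ of a single homogeneous polynomial $F$ that does not vanish identically on $\IX$ has measure zero.

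For the main step I would pass to the smooth locus. The standard measure on $\IX$ is supported on the regular part $\IX^{\mathrm{reg}}$, which is a real-analytic (resp. complex) manifold of dimension $k=\dim\IX$; the singular locus $\IX\setminus\IX^{\mathrm{reg}}$ is itself a proper Zariski closed subset of strictly smaller dimension, so by induction on $\dim\IX$ it carries no mass. Covering $\IX^{\mathrm{reg}}$ by countably many coordinate charts and working in one chart at a time, the restriction of $F$ becomes an analytic function $g$ on an open subset of $\IR^k$ (resp. $\IC^k$). Because $\IX$ is irreducible, $\{F=0\}$ cannot contain a nonempty open subset of $\IX^{\mathrm{reg}}$, so $g\not\equiv 0$ on each chart; the classical lemma (proved by induction on $k$ via Fubini, using that a nonzero one-variable analytic function has only isolated zeros) then shows that $\{g=0\}$ is null. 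A countable union of null sets is null, which finishes the argument.

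The main obstacle I anticipate is the bookkeeping at the singular locus together with the verification that $F$ is genuinely nonvanishing in each full-dimensional chart. Concretely, one must know that a proper closed subvariety has strictly smaller dimension, so that the inductive hypothesis applies to $\IX\setminus\IX^{\mathrm{reg}}$, and that irreducibility of $\IX$ forbids $F$ from vanishing on an open piece of the top-dimensional stratum without vanishing identically. Once the dimension count and the analytic null-set lemma are in place the rest is routine; and in the intended application $\IX\cong\IP^{s}$ is smooth and irreducible, so the singular-locus induction is vacuous and the argument collapses to the chart-by-chart estimate.
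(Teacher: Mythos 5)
Your argument is correct, but note that the paper does not actually prove this proposition at all: it is quoted as a standard fact with a citation (see \cite{Lan}, page 115), and the real work of the appendix is carried by Propositions \ref{prop2} and \ref{prop3}. Your proposal therefore supplies a self-contained substitute for that citation, and it is the standard one: reduce to the zero locus of a single polynomial $F_1$ not identically zero on $\IX$, discard the singular locus as lower-dimensional (hence null for the top-dimensional Hausdorff measure that defines the ``standard measure''), and conclude chart by chart from the fact that a real-analytic (resp.\ holomorphic) function which is not identically zero on a connected chart has a Lebesgue-null zero set. Two caveats are worth recording. First, your proof --- like the statement itself --- genuinely needs irreducibility of $\IX$: for a reducible ``variety'' such as the union of two lines in $\IP^2$, one line is a Zariski closed proper subset of positive measure. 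The paper leaves this hypothesis implicit (in the application $\IX\cong\IP^s$ is smooth and irreducible, and $\IV$ is a product of irreducible varieties), and your proof correctly locates where it enters: irreducibility is what prevents $F_1$ from vanishing on a Euclidean-open piece of the regular locus without vanishing identically on $\IX$. Second, in the real case that step needs slightly more than the complex identity theorem: one passes from ``$F_1$ vanishes on an open piece of the regular real locus'' to ``$F_1$ vanishes identically'' via the facts that the Zariski closure of a $k$-dimensional semialgebraic set has dimension at least $k$ and that a proper Zariski closed subset of an irreducible variety has strictly smaller dimension --- facts you gesture at in your final paragraph but should cite or prove. Granting these standard facts, your argument is complete and proves exactly what the paper borrows from the literature.
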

 
{\em Proof of Proposition \ref{prop2}.} 
 Let the projective spaces $\IP^{\nu_j} $ and
$\IP^{\nu_{\delta-j}}$ represent  the homogeneous polynomials of degree $j$ and $\delta-j$, respectively,
where $\nu_j=\big({2+j\atop j}\big)-1$
and $\nu_{\delta-j}=\big({2+\delta-j\atop \delta-j}\big)-1$.
Let $\IY_j\subset \IY$ be the set of points corresponding to
polynomials supported on $\Sigma^\sharp$  that split
into factors of degree $j$ and $\delta-j$. Clearly
$\IY=\cup_{j=1}^{\delta-1} \IY_j$ and we need only prove
that each $\IY_j$ is Zariski closed. 

Now the multiplication of two polynomials of degree $j$
and $\delta-j$ determines a regular (i.e. polynomial)  mapping
\[
\Phi: \IP^{\nu_j}\times \IP^{\nu_{\delta-j}} \longrightarrow \IP^{\nu_{\delta}}
\]
in the following way. 
Let $G(\bz)$ and $H(\bz)$ be homogeneous polynomials of degrees $j$ and $\delta - j$, respectively.
Let $\{g(\bn)\}$ and $\{h(\bn)\}$ be
the coefficients of $G$ and $H$, respectively. 
Then the coefficients of the image point  $\Phi (G, H)$ are given by
\beq
\label{2}
\Big\{\sum_{|\bn|=\delta-j} g(\mbm-\bn)h(\bn):  |\mbm|=m_0+m_1+m_2=\delta \Big\}. 
\eeq
In other words $\Phi$ is bilinear in $\{g(\bn)\}$ and $\{h(\bn)\}$
and thus is regular. 
Clearly we have
\[
\IY_j=\Phi\Big(\IP^{\nu_j}\times \IP^{\nu_{\delta-j}}\Big)\cap \IX.
\]
Since the product of projective spaces is a projective variety 
and the image of a projective variety under a regular mapping is Zariski closed \cite{Sha}, $\IY_j$ is a Zariski closed subset of $\IX$. \begin{flushright} $\square$\end{flushright}

Let $\Sigma=\{\bn_1,\bn_2,\cdots,\bn_S\}$ and
let the ensemble of polynomials corresponding
to $\{\lambda(\bn) f(\bn)\}$ be identified with
\[
\prod_{\bn\in \Sigma}f(\bn)\cV(\bn)=(f(\bn_1)\cV(\bn_1))\times (f(\bn_2) \cV(\bn_2)) \times\cdots \times (f(\bn_S)\cV(\bn_S))\subseteq \IR^{2S}
\]
where 
\[
f\cV=\{(xf_1-yf_2, xf_2+yf_1)\in \IR^2: 
f_1=\Re(f), f_2=\Im(f), (x,y)\in \cV\}.
\] 
Note that $f\cV$ is a real algebraic variety in $\IR^2$ 
if $\cV$ is also. In a similar vein, we now treat complex
projective space $\IX$ (resp. variety $\IY$)  as real projective 
space (resp. variety) of double dimensions and,  
by homogenization, we embed 
$
\prod_{\bn\in \Sigma}f(\bn)\cV(\bn)$ 
in $\IX$ and denote the resulting projective variety as $\IV\subset\IX$. 

Clearly  $\IY\cap \IV$ is a real algebraic subvariety in $\IV$. To show
$\IY\cap \IV$ is a measure-zero subset of $\IV$ we only need to show that $\IY\cap \IV \subsetneq \IV$ in view of   Proposition \ref{prop2} and \ref{prop:zero}. 

Following the suggestion in \cite{Kup}, we now prove
\begin{proposition}
\label{prop3}
$\IY\cap \IV\subsetneq \IV$. 
\end{proposition}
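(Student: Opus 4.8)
The plan is to prove Proposition~\ref{prop3} by exhibiting a single point of $\IV$ that lies off $\IY$, i.e.\ a single assignment of illumination values $\lambda(\bn)\in\cV(\bn)$ for which the homogenized $z$-transform is irreducible. Since $\IY$ is Zariski closed in $\IX$ by Proposition~\ref{prop2}, the inclusion $\IV\subseteq\IY$ would force the Zariski closure $\overline{\IV}$ to sit inside $\IY$ as well. Hence it suffices to establish two facts: (a) $\IV$ is Zariski dense in $\IX$, and (b) $\IY$ is a \emph{proper} subset of $\IX$. Granting both, $\overline{\IV}=\IX\not\subseteq\IY$ rules out $\IV\subseteq\IY$, which is exactly the assertion $\IY\cap\IV\subsetneq\IV$.

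For (a) I would argue factor by factor. Each $\cV(\bn)$ is the support of a continuous random variable, hence an infinite (positive-dimensional) subset of $\IC$; since $f(\bn)\neq0$ on $\Sigma$, the scaled set $f(\bn)\cV(\bn)$ is again infinite. A one-variable complex polynomial vanishing on an infinite set is identically zero, so $f(\bn)\cV(\bn)$ is Zariski dense in $\IC$. A routine induction (fix all but one coordinate and use density in the last) then shows that the product $\prod_{\bn\in\Sigma}f(\bn)\cV(\bn)$ is Zariski dense in $\IC^{S}$ (writing $S=|\Sigma|$), so its projectivization $\IV$ is dense in $\IX$. The point of this step is that, although the coefficients of $\tilde f$ are confined to the thin real varieties $f(\bn)\cV(\bn)$, their product is still fat enough, in the Zariski sense, to meet every nonempty Zariski-open subset of $\IX$.

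The substance is (b): producing \emph{one} irreducible polynomial supported on $\Sigma^\sharp$. Here I would pass through the homogenization dictionary. If $Q(z_1,z_2)=\sum_{\bn\in\Sigma}c(\bn)z_1^{n_1}z_2^{n_2}$ has total degree $\delta$, then some monomial of $Q$ has degree exactly $\delta$, so $z_0\nmid Q^\sharp$; consequently any factorization of $Q^\sharp$ into nonconstant homogeneous factors dehomogenizes to a genuine factorization of $Q$, and $Q^\sharp$ is irreducible as soon as $Q$ is (for $\delta\geq2$; the case $\delta=1$ is trivially irreducible). Thus it is enough to choose complex coefficients $c(\bn)$, supported on $\Sigma$, making the bivariate $Q$ irreducible, and crucially these coefficients need \emph{not} lie in $f(\bn)\cV(\bn)$, since the density in step (a) transports an irreducible example back into $\IV$. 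This is precisely where the hypothesis $\mathrm{rank}\,\Sigma\geq2$ is indispensable: when the convex hull of $\Sigma$ is two-dimensional a generic choice of $c(\bn)$ yields an irreducible $Q$, whereas the rank-one situations collapse to the forced factorizations displayed in \eqref{exp1}--\eqref{exp2}. I expect this existence statement to be the main obstacle, and I would settle it either by the classical fact that, for a fixed two-dimensional Newton polytope, the reducible polynomials form a proper subvariety, or by an explicit witness built from the points of $\Sigma$ on the two coordinate axes together with an off-axis point (available by rank $\geq2$), arranged so that the Newton-polygon/edge data forbid any nontrivial Minkowski splitting. Once such a $Q$ is in hand, $Q^\sharp$ gives a point of $\IX\setminus\IY$, so $\IY\subsetneq\IX$, and combined with (a) this yields $\IY\cap\IV\subsetneq\IV$.
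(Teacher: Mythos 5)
Your overall scaffolding is sound and genuinely different from the paper's: you reduce Proposition \ref{prop3} to (a) complex-Zariski density of $\IV$ in $\IX$ plus (b) properness $\IY\subsetneq\IX$, whereas the paper exhibits an irreducible polynomial claimed to lie directly \emph{in} $\IV$. Your step (a) is correct as argued (proper Zariski-closed subsets of $\IC$ are finite, and products of infinite sets are Zariski dense in $\IC^S$), and it even buys something the paper's literal argument does not: the paper's witness is a trinomial, which has zero coefficients at the points of $\Sigma$ outside the three chosen ones, and such a point belongs to $\IV$ only when $0\in\cV(\bn)$ for all $\bn$ --- true for $\cV=\IR$ or $\IC$, but false for phase-only illumination $\cV(\bn)=\IS^1$. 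Your density step bridges exactly that case.

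The genuine gap is in your step (b), which is the technical heart of the proposition, and neither of your proposed routes closes it. The citation route is circular: for a \emph{fixed support set} $\Sigma$ --- an arbitrary rank-$2$ set of lattice points, not a full coefficient space as in Hayes--McClellan and not all lattice points of a polytope --- the assertion that the reducible locus is a \emph{proper} subvariety is precisely what Proposition \ref{prop3} must supply; the ``subvariety'' half is Proposition \ref{prop2}, and the properness half is the open question. The explicit-witness route fails as described: Newton-polygon (Ostrowski/Minkowski) arguments certify irreducibility only when the triangle of chosen exponents is integrally indecomposable, but $\Sigma$ is given to you and may offer no such triangle. For example $\Sigma=\{(0,0),(2,0),(0,2)\}$ satisfies all hypotheses of Theorem \ref{thm:new}, yet $\conv(\Sigma)$ is the Minkowski sum of $\conv\{(0,0),(1,0),(0,1)\}$ with itself, so ``edge data'' cannot forbid a splitting --- even though $a+bz_1^2+cz_2^2$ is in fact irreducible. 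Moreover, your witness must contain an exponent with $n_1+n_2=\delta$: otherwise its degree-$\delta$ homogenization into $\IX$ acquires the factor $z_0$ and is a \emph{reducible} point of $\IX$, and your recipe (two axis points plus one off-axis point) does not guarantee this --- this is exactly what the paper's case analysis is arranging. The paper does its real work at this step: it selects three points of $\Sigma^\sharp$ always including a degree-$\delta$ point, and transports the corresponding trinomial by a monomial change of variables --- invertible precisely because rank $\geq 2$ makes the exponent vectors independent --- onto the Fermat polynomial $ax^r+by^r+cz^r$, whose irreducibility follows from smoothness of its zero set. An argument of this kind, or some other proof that a suitable trinomial supported in $\Sigma^\sharp$ is irreducible, is what your proposal still needs.
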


{\em Proof of Proposition \ref{prop3}.} It suffices to find
{\em one}  irreducible polynomial in $\IV$.  

The argument is based on two observations.
First the polynomial  
\beq
\label{103}
F(x,y,z)=ax^r+by^r+cz^r
\eeq
is irreducible for any positive integer $r$ and any nonzero
coefficients $a,b,c$. 
This follows from the fact that the criticality equations
$F_x=F_y=F_z=0$ have no solution in $\IP^2$
and thus the algebraic variety $F=0$ is non-singular.

Secondly, for any $\Sigma$ satisfying the assumptions
of Theorem \ref{thm:new}, there exists a set $T\subset \Sigma^\sharp$ of three points  which can be transformed
into $\{(r, 0, 0), (0, r, 0), (0, 0, r)\}$, the support of
(\ref{103}), under a rational map. 

We separate  the analysis of the second observation into two cases.  

{\em Case 1:} $(0,0)\in \Sigma$.  Then there are at least two
other points, say $(m,n), (p,q)$, belonging to $\Sigma$. Without loss of generality, we
assume $p+q=\delta$. Because $\Sigma$ has rank 2,
$mq-np\neq 0$.

We look for the rational mapping
\beq
\label{105}
z_1=x^{k_{11}}y^{k_{21}}z^{k_{31}},\quad
z_2=x^{k_{12}}y^{k_{22}}z^{k_{32}},\quad
z_0=x^{k_{13}}y^{k_{23}}z^{k_{33}}
\eeq
with $k_{ij}\in \IZ$ that maps 
the polynomial
\[
P(z_0,z_1,z_2)=cz_0^\delta+az_0^{\delta-m-n}z_1^mz_2^n+
bz_1^pz_2^q
\] to $F(x,y,z)$. 
%\beq
%\label{107}
%Q(x,y,z)=x^ay^bz^c F(x,y,z),\quad a,b,c\in \IZ.
%\eeq
 This  amounts to
a linear transformation from
the set of independent vectors
\[
(m,n,\delta-m-n),\quad (p,q,0),\quad (0,0,\delta)
\]
to the set $\{
(r,0,0), (0,r,0), (0,0, r)\}$.
This transformation can be accomplished by
the following matrix
\beq
\label{106}
r\lt(\begin{matrix}
m&p&0\\
n&q&0\\
\delta-m-n&0&\delta
\end{matrix}
\rt)^{-1}
=
{r\over \delta(mq-np)}
\lt(\begin{matrix}
q\delta&-n\delta&-q(\delta-m-n)\\
-p\delta&m\delta&p(\delta-m-n)\\
0&0&mq-np
\end{matrix}\rt)
\eeq
where the divisor is nonzero.
To ensure integer entries in (\ref{106})
we set 
\[
 r=\delta(mq-np)
 \]
  and obtain
 the transformation matrix
\beqn
\lt(\begin{matrix}
k_{11}&k_{12}&k_{13}\\
k_{21}&k_{22}&k_{23}\\
k_{31}&k_{32}&k_{33}
\end{matrix}
\rt)=
\lt(\begin{matrix}
q\delta&-n\delta&-q(\delta-m-n)\\
-p\delta&m\delta&p(\delta-m-n)\\
0&0&mq-np
\end{matrix}\rt)\eeqn

{\em Case 2:}  $(m,0), (0,n)\in \Sigma$ for some positive integers $m,n$. Then there is at least another point $(p,q)\in \Sigma$ such
that $(m,0), (0,n), (p,q)$ are not collinear, which means
$mn-np-mq\neq 0$.

Suppose $p+q=\delta$. 
Consider the polynomial
\[
P(z_0,z_1,z_2)=az_0^{\delta-m}z_1^m+bz_0^{\delta-n}z_2^n+
cz_1^pz_2^q.
\]

By the same analysis above the form (\ref{103}) can be achieved by
the transformation matrix
\beqn
\label{108}
r\lt(\begin{matrix}
m&0&p\\
0&n&q\\
\delta-m&\delta-n&0
\end{matrix}
\rt)^{-1}
=
{r\over \delta(mn-mq-np)}
\lt(\begin{matrix}
-q(\delta-n)&q(\delta-m)&-n(\delta-m)\\
p(\delta-n)&-p(\delta-m)&-m(\delta-n)\\
-pn&-mq&mn
\end{matrix}\rt)\eeqn
which has integer entries if $r$ is a multiple of
$\delta(mn-mq-np)$. 
With the choice
\[
r= \delta(mn-mq-np)
\]
 the transformation matrix
becomes
\beqn
\lt(\begin{matrix}
k_{11}&k_{12}&k_{13}\\
k_{21}&k_{22}&k_{23}\\
k_{31}&k_{32}&k_{33}
\end{matrix}
\rt)=
\lt(\begin{matrix}
-q(\delta-n)&q(\delta-m)&-n(\delta-m)\\
p(\delta-n)&-p(\delta-m)&-m(\delta-n)\\
-pn&-mq&mn
\end{matrix}\rt).
\eeqn

Suppose $n=\delta$. Consider the polynomial
\[
P(z_0,z_1,z_2)=az_0^{n-m}z_1^m+bz_2^n+c
z_0^{n-p-q}z_1^pz_2^q.
\]
 The form (\ref{103}) can be achieved by
the transformation matrix
\beqn
r\lt(\begin{matrix}
m&0&p\\
0&n&q\\
n-m&0&n-p-q
\end{matrix}
\rt)^{-1}
=
{r\over n(mn-mq-np)}
\lt(\begin{matrix}
n(n-p-q)&q(n-m)&n(m-n)\\
0&mn-mq-np&0\\
-pn&-mq&mn
\end{matrix}\rt).\eeqn
With 
\[
r= n(mn-mq-np)
\]
 the transformation matrix
becomes
\beqn
\lt(\begin{matrix}
k_{11}&k_{12}&k_{13}\\
k_{21}&k_{22}&k_{23}\\
k_{31}&k_{32}&k_{33}
\end{matrix}
\rt)=
\lt(\begin{matrix}
n(n-p-q)&q(n-m)&n(m-n)\\
0&mn-mq-np&0\\
-pn&-mq&mn
\end{matrix}\rt).
\eeqn

Suppose $m=\delta$. Consider the polynomial
\[
P(z_0,z_1,z_2)=az_1^m+bz_0^{m-n}z_2^n+c
z_0^{m-p-q}z_1^pz_2^q. 
\]
 The form (\ref{103}) can be achieved by
the transformation matrix
\beqn
r\lt(\begin{matrix}
m&0&p\\
0&n&q\\
0&m-n&m-p-q
\end{matrix}
\rt)^{-1}
=
{r\over m(mn-mq-np)}
\lt(\begin{matrix}
mn-mq-np&0&0\\
p(m-n)&m(m-p-q)&m(n-m)\\
-pn&-mq&mn
\end{matrix}\rt).\eeqn
With 
\[
r= m(mn-mq-np)
\]
 the transformation matrix
becomes
\beqn
\lt(\begin{matrix}
k_{11}&k_{12}&k_{13}\\
k_{21}&k_{22}&k_{23}\\
k_{31}&k_{32}&k_{33}
\end{matrix}
\rt)=\lt(\begin{matrix}
mn-mq-np&0&0\\
p(m-n)&m(m-p-q)&m(m-n)\\
-pn&-mq&mn
\end{matrix}\rt).
\eeqn

To conclude the proof of Proposition \ref{prop3}, in any above case, if the polynomial $P(z_0,z_1,z_2)$ is reducible (i.e. has
a non-monomial  factor), then we can write
$P=P_1P_2$ and
\beq
\label{111}
F(x,y,z)=P_1(z_0(x,y,z), z_1(x,y,z), z_2(x,y,z))P_2(z_0(x,y,z), z_1(x,y,z), z_2(x,y,z))
\eeq
where $P_1, P_2$ are non-monomial factors. 
Let $l$ be
the lowest (possibly negative) power in $x,y,z$ of $P_i(z_0(x,y,z), z_1(x,y,z), z_2(x,y,z)), i=1,2$. If $l\geq 0$, then
 the factorization (\ref{111}) implies that
 $F(x,y,z)$  has a non-monomial factor. If $l<0$, then  the factorization (\ref{111}) implies that
 $(xyz)^{-l} F(x,y,z)$   has a non-monomial factor. Either case contradicts the fact that  $F$ is irreducible. So $P$ is irreducible. The proof of Proposition \ref{prop3} is complete. 
\begin{flushright}$\square$\end{flushright}

Continuing the proof of Theorem \ref{thm:new},
we have from Propositions \ref{prop:zero} and \ref{prop3}
that $\IY\cap\IV$ is a measure-zero subset of $\IV$.
By dehomogenization and Proposition \ref{prop1}
reducible polynomials of a fixed support $\Sigma$ under the
assumptions of Theorem \ref{thm:new} comprise
a measure-zero subset of  all polynomials of
the same support. The proof of Theorem \ref{thm:new} is now complete.

\end{proof}

\bigskip

{\bf Acknowledgements.} I am grateful to  my colleagues Greg Kuperberg and Brian Osserman  for
inspiring  discussions 
%and outlining the ideas of
on the proof of Theorem \ref{thm:new}, an improvement of the earlier version which assumes convexity of the support. 
I thank my student Wenjing Liao for performing 
simulations and producing the figures.

\bigskip

\end{document}